\documentclass[
 onecolumn,
 notitlepage,
 superscriptaddress,
 nofootinbib,
 amsmath,amssymb,
 aps,
]{revtex4-2}

\usepackage{mathpazo}
\usepackage[utf8]{inputenc}
\usepackage[english]{babel} 
\usepackage{graphicx}
\usepackage{dcolumn}
\usepackage{bm}
\usepackage[bookmarks, colorlinks=true, urlcolor=blue, linkcolor=black, citecolor=blue]{hyperref}
\usepackage{mathtools}
\usepackage{physics}
\usepackage{enumitem}
\usepackage[dvipsnames]{xcolor}
\usepackage{float}
\usepackage[capitalise]{cleveref}
\usepackage{amsthm}
\usepackage{thmtools, thm-restate}

\newtheorem{theorem}{Theorem}
\newtheorem{lemma}[theorem]{Lemma}
\newtheorem{corollary}[theorem]{Corollary}
\theoremstyle{definition}
\newtheorem{definition}[theorem]{Definition}

\usepackage{etoolbox}
\usepackage{booktabs}

\usepackage{float}
\makeatletter
\let\newfloat\newfloat@ltx
\makeatother

\usepackage{algorithm}
\usepackage{algpseudocode}
\makeatletter
\renewcommand{\fnum@algorithm}{\fname@algorithm~\thealgorithm}
\makeatother
\algnewcommand\Input{\item[\textbf{Input:}]}
\algnewcommand\Output{\item[\textbf{Output:}]}

\DeclareMathOperator{\sign}{sign}

\renewcommand{\exp}{\ensuremath{\mathrm{exp}}}

\providecommand{\to}{\ensuremath{\Tilde{o}}}

\def\A{\mathcal{A}}

\def\D{\mathcal{D}}

\def\T{\mathcal{T}}

\begin{document}

\title{Supervised Learning with Quantum Hamiltonian Embeddings}
\author{Riccardo Molteni}
\thanks{The first two authors contributed equally to this work.}
\affiliation{IBM Research}
\affiliation{Applied Quantum Algorithms $\langle$aQ$a^L$$\rangle$, Leiden University, The Netherlands}

\author{Minjun Jeon}
\thanks{The first two authors contributed equally to this work.}
\affiliation{IBM Research}
\affiliation{Department of Materials, University of Oxford, Parks Road, Oxford OX1 3PH, United Kingdom}

\author{Paolo Braccia}
\affiliation{IBM Research}

\author{Abhinav Deshpande}
\affiliation{IBM Research}

\author{Vedran~Dunjko}
\affiliation{Applied Quantum Algorithms $\langle$aQ$a^L$$\rangle$, Leiden University, The Netherlands}

\author{Liang Jiang}
\affiliation{The University of Chicago}

\author{Su-un~Lee}
\affiliation{IBM Research}
\affiliation{The University of Chicago}

\author{Swarnadeep Majumder} 
\affiliation{IBM Research}

\author{Javier~Robledo~Moreno}
\affiliation{IBM Research}

\author{Parthasarathy Suryanarayanan}
\affiliation{IBM Research}

\author{Kunal Sharma}
\email{kunals@ibm.com}
\affiliation{IBM Research}

\date{\today}

\begin{abstract}
Machine learning is widely regarded as a promising application in which quantum computing could offer a computational advantage. However, general quantum machine learning models with provable guarantees on performance, trainability, and efficient inference remain limited. In this work, we introduce quantum Hamiltonian embeddings, a supervised learning framework in which a classical training dataset is encoded into a single effective, parameterized Hamiltonian. Training optimizes the parameters of this data-dependent Hamiltonian, whose approximate ground state defines the classification rule. Once the ground state has been approximated using a quantum algorithm, inference can be performed either classically using a classical representation of the learned state, or with a quantum representation of the state. We establish efficient methods for gradient estimation and prove favorable generalization guarantees. Furthermore, under standard cryptographic assumptions, we show that Hamiltonian embeddings can realize hypothesis functions beyond the reach of efficient classical algorithms. We then demonstrate the framework at scale on IBM Heron processors using a standard benchmark dataset. Together, these results establish Hamiltonian embeddings as a new framework for quantum machine learning, combining theoretical guarantees with a path for experimental investigations at scale on quantum hardware.
\end{abstract}

\maketitle
\tableofcontents
\newpage

\section{Introduction}
The prospect of quantum computational advantage in learning tasks has driven significant interest in quantum machine learning (QML)  ~\cite{biamonte2017quantum,chang2025primer}. For instance, in supervised learning, multiple architectures have been proposed~\cite{havlicek2019supervised,schuld2021supervised,perez-salinas2020data, rebentrost2014quantum,  jerbi2024shadows} and provable advantages have been established in particularly constructed settings~\cite{liu2021rigorous, anschuetz2026arbitrary,molteni2026exponential, zhao2026exponential, gil-fuster2025on}.

A broader goal in QML is to develop general models that are efficiently trainable, admit generalization guarantees, cannot be efficiently simulated classically in general, while also being implementable at scale on quantum processors.
However, the difficulty of simultaneously satisfying these requirements is particularly evident for parameterized quantum circuits~\cite{benedetti2019parameterized,cerezo2021variational,schuld2020circuit}. Shallow-depth PQCs are well suited for current quantum processors but can often be simulated efficiently on classical computers \cite{cerezo2025does, gil-fuster2025on}, whereas deeper circuits may be difficult to optimize because of barren plateaus~\cite{mcclean2018barren, cerezo2021cost, larocca2025barren}. Similarly, beyond a few examples \cite{glick2024covariant, henderson2025quantum}, identifying effective quantum kernels for more general datasets remains challenging. 

In this work, we introduce Hamiltonian embedding, a classification algorithm that can be efficiently implemented on a quantum processor, in which the classifier is constructed by solving a physically motivated energy-minimization problem. The method encodes the entire training dataset into a single parameterized Hamiltonian. Each training example contributes a term obtained by dressing a fixed, label-dependent operator, namely by conjugating it with a parameterized unitary that encodes the corresponding input. Then training optimizes the encoding parameters to lower the ground-state energy of this data-dependent Hamiltonian, and an approximate ground state of the optimized Hamiltonian defines the classifier. In particular, for a new input, the trained encoding specifies an observable whose expectation value with respect to the learned ground state determines the predicted label. We identify tasks for which the physical constraints imposed by the learned quantum state provide a useful inductive bias, yielding a provable reduction in sample complexity relative to classically regularized surrogate models.

We then show that the Hamiltonian embedding framework has several properties desirable for QML. During optimization, a quantum algorithm prepares approximations to the required ground states. In the case where the encoding unitaries allow the expectation values of the dressed observables to be efficiently computed from a classical representation of these ground states, both the loss and its gradients with respect to the trainable parameters can be evaluated classically. In particular, we show that the gradients can be expressed as expectation values and evaluated using a parameter-shift rule.

Shallow local encoding unitaries provide one such setting, enabling classical inference after training. For a general family of these encodings, we also show the model avoids barren plateaus for a relevant subset of parameters. Importantly, in this regime the model is not known to be classically simulable, as it involves approximating ground states of local Hamiltonians. Although the shallow encoding may allow for classical surrogates, we exhibit a task on which a specific model instance provably achieves substantially higher test accuracy than a corresponding natural surrogate, at the same training-set size.
Moreover, even with shallow encoding unitaries, we show that the Hamiltonian embedding framework can implement universal approximators for measurable binary target functions. Specifically, for any such function and any desired accuracy, we provide a construction using finitely many qubits such that the hypothesis function implemented by the model achieves an arbitrarily small average approximation error under the uniform input distribution.

Irrespective of the encoding depth, we also show the model admits favorable generalization guarantees for a wide class of encoding unitaries. If training yields a sufficiently low-energy approximate ground state, matrix Hoeffding inequalities and covering number arguments establish generalization to unseen inputs. We show that for fixed generalization error, the required dataset size scales linearly, up to logarithmic factors, with the number of qubits and variational parameters.

Finally, the model can realize hypothesis functions that lie beyond the reach of efficient classical algorithms. Although preparing ground states of local Hamiltonians is hard in the worst case, even for quantum computers, we identify structured instances that encode classically hard learning problems for which the relevant ground states can nevertheless be efficiently approximated on a quantum computer. Under standard cryptographic assumptions, these instances yield explicit learning problems solvable on a quantum processor within the Hamiltonian embedding framework, while remaining classically hard. 

We note that restricting the encoding unitaries to those for which the relevant dressed-observable expectation values can be evaluated classically places Hamiltonian embedding within the \textit{shadow model} framework introduced in~\cite{jerbi2024shadows}.
A shadow QML model uses a quantum computer during training to obtain a quantum advice state, whose classical representation then supports entirely classical inference. Whereas Ref.~\cite{jerbi2024shadows} introduces shadow models as a general framework, our construction provides an explicit method to obtain the advice state, namely as the ground state of the embedding Hamiltonian.

Ground states of data-dependent Hamiltonians have also been used for classification in~\cite{ye2025quantum,umeano2026ground,candelori2026shallow}. In~\cite{ye2025quantum}, the encoding is fixed, and learning consists of preparing a low-energy state of the resulting data-encoding Hamiltonian, without optimizing over a trainable family of Hamiltonians. Our framework instead learns both the advice state and the Hamiltonian whose ground state realizes it. Furthermore, whereas~\cite{candelori2026shallow,umeano2026ground} encode individual inputs into separate Hamiltonians and use their ground states as quantum feature maps, our framework encodes the entire training dataset into a single Hamiltonian. This construction enables energy-based generalization guarantees, efficient gradient estimation, and efficient inference for suitable encodings.

To assess whether the model can be trained to achieve low validation and test error, we study it in numerical experiments on a standard ML task and implement it on IBM Heron processors. The results show successful optimization and competitive predictive performance relative to the quantum and classical baselines considered.

\medskip 

The manuscript is organized as follows.
\begin{itemize}
    \item In Section~\ref{sec: ham embedding model}, we introduce the Hamiltonian embedding model and provide a complete description of the training and prediction procedures. We also discuss the choice of the embedding unitaries and observables, derive an efficient gradient-estimation procedure, and establish the absence of barren plateaus for a relevant subset of parameters. 
    
    \item The main theoretical results are in Section~\ref{sec: properties}. We derive analytical generalization bounds and study classical surrogates of the trained model. We also identify learning tasks for which the physical constraints imposed by the learned quantum ground state act as a useful inductive bias, yielding improved sample efficiency over classically regularized surrogate models. Finally, we exhibit a learning task that is efficiently solvable within the Hamiltonian embedding framework but classically hard under a standard cryptographic assumption.
    
    \item In Section~\ref{sec: numerics}, we present numerical results for the binary Fashion-MNIST classification task. We benchmark the Hamiltonian embedding framework on systems up to 50 qubits against quantum kernels studied in Ref.~\cite{huang2021power} and classical machine-learning baselines. Across this range, the model remains trainable and achieves competitive test performance. We also implement the training procedure on IBM quantum processors, using hardware-generated samples in the Sample-based Krylov Quantum Diagonalization ground-state solver~\cite{yu2025quantum, mikelsen_SKQD2025, Sugisaki_SKQD2025}, and demonstrate successful training on a 50-qubit instance.
\end{itemize}

\section{Hamiltonian embedding model}\label{sec: ham embedding model}

Although our framework can be readily extended to multiclass classification, here we address the problem of binary classification. In particular,
we are given a training set $\mathcal{T}=\{(x_\ell,y_\ell)\}_{\ell=1}^N$, where each input $x_\ell$ is sampled from a distribution $\mathcal{D}$ over $\Omega\subseteq\mathbb{R}^d$, and the corresponding label is generated by an unknown function $f:\Omega\to\{0,1\}$, such that $y_\ell=f(x_\ell)$.
 The goal of the learning algorithm is to infer an approximate map $\Tilde{f}$ that, with high probability, agrees with the true mapping $f$ on previously unseen test data sampled from the same distribution $\mathcal{D}$. 

Given a dataset $\mathcal{T}$, the model constructs a family of dataset-dependent hypothesis functions as follows. On an $n$-qubit Hilbert space, we define an encoding unitary $U(x,\alpha)$ parameterized by both the input $x$ and trainable parameters $\alpha\in\mathbb{R}^M$, together with a family of label-dependent operators $\{O(y)\}_{y}$, assigning a measurement operator $O(y)$ to each class label $y\in\{0,1\}$. Then, for each data point $z=(x,y)$, with $y=f(x)$, we define a corresponding Hamiltonian term as $ H_{z}(\alpha) = U^{\dagger}(x,\alpha)O(y)U(x,\alpha)$. The data Hamiltonian on the whole dataset is defined as 
\begin{align}\label{eq: encoding_hamiltonian_T}
    H_{\T}(\alpha) &= \frac{1}{N}\left( \sum_{\ell=1}^N \left (I - U^{\dagger}(x_\ell,\alpha)O(y_\ell)U(x_\ell,\alpha)\right)\right)\\
    & = I - \frac{1}{N}\sum_\ell U^{\dagger}(x_\ell,\alpha)O(y_\ell)U(x_\ell,\alpha)\\
    & = I - \frac{1}{N}\sum_\ell H_{z_\ell}(\alpha)~.
\end{align}

For each $\alpha$, we define $\ket{\psi_{{\scriptscriptstyle\mathcal{T}},\alpha}}$ to be the ground state of $H_{\mathcal{T}}(\alpha)$. The state $\ket{\psi_{{\scriptscriptstyle\mathcal{T}},\alpha}}$ is then used to construct the corresponding hypothesis function implemented by the model. 
In particular, given a new input point $x'$, we apply the encoding unitary $U(x',\alpha)$ to $\ket{\psi_{{\scriptscriptstyle\mathcal{T}},\alpha}}$ and measure it with respect to $O(y)$, for $y\in \{0,1\}$. As we show in Section~\ref{sec: observable O}, we can choose \(O(0)+O(1)=I\), so that the expectation values of these positive operators on the encoded state define a normalized probability distribution over the labels. Specifically, the probability $P_{\psi_{\scalebox{0.8}{$\scriptscriptstyle\mathcal{T}$},\alpha},\alpha}(x',y)\in[0,1]$ assigned to label $y$ for input $x'$ is given by the following expression:
\begin{equation}
\label{eq: probability of y}
P_{\psi_{\scalebox{0.8}{$\scriptscriptstyle\mathcal{T}$},\alpha},\alpha}(x',y)
=
\operatorname{Tr}\!\left[
O(y)\,U(x',\alpha)
|\psi_{{\scriptscriptstyle\mathcal{T}},\alpha}\rangle
\langle\psi_{{\scriptscriptstyle\mathcal{T}},\alpha}|
U^\dagger(x',\alpha)
\right].
\end{equation}

The corresponding hypothesis function assigns a label $\tilde y\in\{0,1\}$ to the input $x'$ as follows:
\begin{align}\label{eq: prediction rule}
\tilde{f}_{\mathcal{T},\alpha}(x')
=
\operatorname*{arg\,max}_{y\in\{0,1\}}
P_{\psi_{\scalebox{0.8}{$\scriptscriptstyle\mathcal{T}$},\alpha},\alpha}(x',y).
\end{align}
For a given training set $\mathcal{T}$, the resulting family of candidate hypotheses is therefore
$
\mathcal H_{\mathcal T}
=
\{\tilde f_{\mathcal T,\alpha}:\alpha\in\mathbb{R}^M\}$.
We note that the family of hypothesis functions depends on the training data through the ground states of \(H_{\mathcal T}(\alpha)\). Moreover, $\mathcal H_{\mathcal T}$ is a subset of the broader, data-independent family of hypothesis functions obtained by allowing arbitrary normalized $n$-qubit states in the same classifier construction.

The model selects a hypothesis from $\mathcal H_{\mathcal T}$ by choosing $\alpha$ according to the following training procedure. Let us consider the case when $0\preceq O(y)\preceq I$ for every $y\in\{0,1\}$, then $ 0 \preceq \frac{1}{N}\sum_\ell H_{z_\ell}(\alpha)\preceq I$. This implies that the energy of $H_\mathcal{T}(\alpha)$ is always lower bounded by zero for any $\alpha$. In particular, the energy can reach the minimum value zero if the ground state $\ket{\psi_{{\scriptscriptstyle\mathcal{T}},\alpha}}$ of $H_{\mathcal{T}}(\alpha)$ is such that $\bra{\psi_{{\scriptscriptstyle\mathcal{T}},\alpha}}U^{\dagger}(x,\alpha)O(y)U(x,\alpha)\ket{\psi_{{\scriptscriptstyle\mathcal{T}},\alpha}}=1$ for every training point $(x_\ell,y_\ell)$ in $\mathcal{T}$.
The optimal parameter vector \(\alpha^\star\) is then chosen to minimize the ground-state energy over the family of Hamiltonians $\{H_{\mathcal T}(\alpha)\}_{\alpha}$.
As we see below, this guarantees the correct classification of the training data. Moreover, in Section~\ref{sec: generalization bound}, we show that our model also satisfies favorable generalization bounds.

In particular, let us define the empirical energy cost function as:
\begin{align}\label{eq:cost-function}
\widehat{C}_{\mathcal{T}}(\psi,\alpha) \equiv \mathbb{E}_{z\in \T} \left[\ell_{\text{energy}}(\psi, H_z(\alpha))\right] = \langle \psi | H_{\T}(\alpha) | \psi \rangle , 
\end{align}
where $\ell_\text{energy}$ is the energy loss, defined as:
\begin{align}\label{eq: energy_loss}
   \ell_{\text{energy}}(\psi, H_z(\alpha))=1-\langle \psi | H_{z}(\alpha) | \psi \rangle . 
\end{align}
 Then the training objective is to find the parameter value $\alpha^\star$ and the state $\psi_{{\scriptscriptstyle\mathcal{T}},\alpha^\star}$ that minimize the cost function $\widehat C_\mathcal{T}(\psi,\alpha)$.  We note that the algorithm may not reach the exact ground state of the training Hamiltonian $H_{\mathcal{T}}(\alpha^\star)$. In other words, the learned state $\psi_{{\scriptscriptstyle\mathcal{T}},\alpha^\star}$ may only be an approximate ground state. However, as we show in Section~\ref{sec: generalization bound}, it is sufficient for the learned state to achieve a low energy with respect to the training Hamiltonian: a small empirical energy cost already guarantees good performance on unseen input points.
In practice, this optimization is carried out through an alternating nested procedure. Starting from an initial value of $\alpha$, the algorithm first computes the corresponding ground state of $H_{\mathcal T}(\alpha)$ and then updates $\alpha$ to decrease the cost. Repeating this procedure yields a sequence of parameter values and associated ground states until convergence to a final pair $(\alpha^\star,\psi_{{\scriptscriptstyle\mathcal{T}},\alpha^\star})$. 
We detail the whole learning procedure in Algorithm~\ref{alg:ham-embed}. Although reaching zero energy loss guarantees correct classification on the training dataset, a decrease in the empirical energy cost $\widehat{C}_{\mathcal{T}}$ as $\alpha$ is updated does not necessarily lead to a simultaneous increase in classification accuracy. We discuss this mismatch in Section~\ref{sec: loss function}, where we also consider alternative loss functions that more directly target the classification error.

Once training is complete, the learned parameters $\alpha^\star$ and ground state $|\psi_{{\scriptscriptstyle\mathcal{T}},\alpha^\star}\rangle$ define the hypothesis used to predict labels for unseen inputs. An important feature of the Hamiltonian embedding framework is that, for particular classes of encoding unitaries $U(x,\alpha)$, these predictions can be evaluated entirely classically. As discussed in Section~\ref{sec: classical inference}, this is achieved by constructing a classical surrogate $\sigma(\psi_{{\scriptscriptstyle\mathcal{T}},\alpha^\star})$ of the learned quantum model from a finite number of measurements on the quantum device. The corresponding classical inference procedure is summarized in Algorithm~\ref{alg:ham-embed-inference}.

\begin{algorithm}[htbp]
\caption{Supervised Training via Hamiltonian Embedding}
\label{alg:ham-embed}
\begin{algorithmic}[1]

\Input
\Statex \hspace{\algorithmicindent}
1. Training set
$\mathcal T=\{(x_\ell,y_\ell)\}_{\ell=1}^N$,
$x_\ell\in\mathbb{R}^d$ and $y_\ell\in\{0,1\}$.
\Statex \hspace{\algorithmicindent}
2. Encoding unitary $U(x,\alpha)$.
\Statex \hspace{\algorithmicindent}
3. Desired training energy $\varepsilon\geq0$,
learning rate $\eta>0$, maximum number of iterations $T_{\max}$.

\Output Trained parameters $\alpha^\star$ and final state
$\ket{\psi_{{\scriptscriptstyle\mathcal{T}},\alpha^\star}}$.

\vspace{2mm}

\State
$O(y)\gets \frac{1}{2}\left[I+(-1)^yA\right]$
\Comment{Construct the label observable}

\State
$\alpha_0\gets\mathrm{Unif}([-1,1]^M)$
\Comment{Initialize the variational parameters}

\State $t\gets0$

\While{$t<T_{\max}$}

    \For{$\ell=1$ to $N$}
        \State
        $z_\ell\gets(x_\ell,y_\ell)$
        \State
        $H_{z_\ell}(\alpha_t)
        \gets
        U^\dagger(x_\ell,\alpha_t)
        O(y_\ell)
        U(x_\ell,\alpha_t)$
        \Comment{Embed the training point}
        \State Store a classical representation of
        $H_{z_\ell}(\alpha_t)$
    \EndFor

    \State
    $H_{\mathcal T}(\alpha_t)
    \gets
    I-\frac{1}{N}
    \sum_{\ell=1}^N H_{z_\ell}(\alpha_t)$
    \Comment{Construct the current embedding Hamiltonian}

    \State
    $\displaystyle
    \ket{\psi_{{\scriptscriptstyle\mathcal{T}},\alpha_t}}
    \approx
    \arg\min_{\ket{\psi}}
    \bra{\psi}H_{\mathcal T}(\alpha_t)\ket{\psi}$
    \Comment{Find an approximate ground state}

    \State
    $E_t
    \gets
    \bra{\psi_{{\scriptscriptstyle\mathcal{T}},\alpha_t}}
    H_{\mathcal T}(\alpha_t)
    \ket{\psi_{{\scriptscriptstyle\mathcal{T}},\alpha_t}}$
    \Comment{Evaluate the current training energy}

    \If{$E_t\leq\varepsilon$}
        \State \textbf{break}
    \EndIf

    \For{$\ell=1$ to $N$}
        \State
        $m_{\alpha_t,\psi_{{\scriptscriptstyle\mathcal{T}},\alpha_t}}(x_\ell,y_\ell)
        \gets
        \bra{\psi_{{\scriptscriptstyle\mathcal{T}},\alpha_t}}
        H_{z_\ell}(\alpha_t)
        \ket{\psi_{{\scriptscriptstyle\mathcal{T}},\alpha_t}}$
        \Comment{Evaluate the correct label score}
    \EndFor

    \State
    Evaluate
    $\widehat C_{\mathcal T}
    (\psi_{{\scriptscriptstyle\mathcal{T}},\alpha_t},\alpha_t)$
    according to Eq.~\eqref{eq:cost-function}

    \For{$j=1$ to $M$}
        \State
        $\displaystyle
        g_j(t)
        \gets
        \left.
        \frac{\partial}{\partial\alpha_j}
        \widehat C_{\mathcal T}
        \left(\psi_{{\scriptscriptstyle\mathcal{T}},\alpha_t},\alpha\right)
        \right|_{\alpha=\alpha_t}$
        \Comment{\parbox[t]{6.8cm}{
        Estimate the gradient via parameter shift,
        keeping the state fixed}}
    \EndFor

    \State
    $g(t)\gets
    \bigl(g_1(t),\ldots,g_M(t)\bigr)$

    \State
    $\alpha_{t+1}
    \gets
    \alpha_t-\eta g(t)$
    \Comment{Update the variational parameters}

    \State $t\gets t+1$

\EndWhile

\State
$\alpha^\star\gets\alpha_t$

\State
$H_{\mathcal T}(\alpha^\star)
\gets
I-\frac{1}{N}\sum_{\ell=1}^N
H_{z_\ell}(\alpha^\star)$

\State
$\displaystyle
\ket{\psi_{{\scriptscriptstyle\mathcal{T}},\alpha^\star}}
\approx
\arg\min_{\ket{\psi}}
\bra{\psi}H_{\mathcal T}(\alpha^\star)\ket{\psi}$

\State
\Return $(\alpha^\star,\ket{\psi_{{\scriptscriptstyle\mathcal{T}},\alpha^\star}})$

\end{algorithmic}
\end{algorithm}

\begin{algorithm}[htbp] \caption{Classical inference via Hamiltonian embedding} \label{alg:ham-embed-inference} \begin{algorithmic}[1] 
\Input \Statex \hspace{\algorithmicindent} 1. New input point $x'$. 
\Statex \hspace{\algorithmicindent} 2. Trained parameters $\alpha^\star$. 
\Statex \hspace{\algorithmicindent} 3. Classical representation $ \sigma(\psi_{{\scriptscriptstyle\mathcal{T}},\alpha^\star})$ of the learned state $\ket{\psi_{{\scriptscriptstyle\mathcal{T}},\alpha^\star}}$.
\Statex \hspace{\algorithmicindent} 4. Label-dependent observable $O(y)$ for $y\in\{0,1\}$. 
\Output Predicted label $\tilde y\in\{0,1\}$. 
\vspace{2mm} 

\For{$y\in\{0,1\}$} 

\State  $ \widetilde O(x',y) \leftarrow U^\dagger(x',\alpha^\star)O(y)U(x',\alpha^\star) $ \Comment{Construct observable in \cref{eq: dressed O(y)} for $x'$ associated with label $y$} 
\State  $ \widetilde O(x',y) \leftarrow \sum_i c_i(x',y,\alpha^\star)P_i $ \Comment{Decompose $O'$ into local Pauli operators} 
\State $ p_y(x') \leftarrow \sum_i c_i(x',y,\alpha^\star) \operatorname{Tr}\!\left[P_i\, \sigma  (\psi_{{\scriptscriptstyle\mathcal{T}},\alpha^\star})\right] $ \Comment{Estimate the label score using the classical representation $\sigma(\psi_{{\scriptscriptstyle\mathcal{T}},\alpha^\star})$ } 
\EndFor 
\State  $ \tilde y \leftarrow \arg\max_{y\in\{0,1\}} p_y(x') $   \Comment{Output the label with largest score}
\State \Return $\tilde y$ 
\end{algorithmic} 
\end{algorithm}

\subsection{Construction of the model}\label{sec: construction model}

The Hamiltonian embedding framework provides substantial freedom in how the encoding Hamiltonian is constructed, allowing problem-specific choices to be incorporated. In the following, we outline practical guidelines for selecting the label-dependent observable $O(y)$, the encoding unitary $U(x,\alpha)$, the choice of the loss function and the ground-state optimization method. 

\subsubsection{Choice of the observable}\label{sec: observable O}
The label-dependent observable $O(y)$ is measured at inference time to determine which label $y\in\{0,1\}$ should be assigned to a given input $x$. Recall that the energy loss in \cref{eq: energy_loss} is minimized when, for every training point $(x_\ell,y_\ell)$ with $y_\ell=f(x_\ell)$, the learned ground state $\ket{\psi_{{\scriptscriptstyle\mathcal{T}},\alpha^\star}}$ satisfies
\begin{align}\label{eq: perfect_classification}
\operatorname{Tr}[O(y_\ell) U(x_\ell,\alpha^\star)|\psi_{{\scriptscriptstyle\mathcal{T}},\alpha^\star}\rangle \langle \psi_{{\scriptscriptstyle\mathcal{T}},\alpha^\star}|U^{\dagger}(x_\ell,\alpha^\star)]=1.
\end{align}
This occurs when, for every training example $x_\ell$, the encoded state $U(x_\ell,\alpha^\star)\ket{\psi_{{\scriptscriptstyle\mathcal{T}},\alpha^\star}}$ lies in the eigenvalue-$1$ eigenspace of the observable $O(y)$ associated with the correct label $y\in\{0,1\}$. We note that Eq.~\eqref{eq: perfect_classification} provides a sufficient, but not necessary, condition for correct classification. Since prediction in Eq.~\eqref{eq: prediction rule} is performed by selecting the label associated with the largest expectation value, it is enough that the score assigned to the correct label exceeds that of the other label.
To give these observables a direct probabilistic interpretation, we can impose the natural condition
$O(0)+O(1)=I$,
so that their expectation values in \cref{eq: probability of y} define a normalized probability distribution over the two classes. This condition is equivalent to parameterizing the two label observables in terms of a single Hermitian operator $A$ as
\begin{equation}\label{eq: observable O}
    O(y)=\frac{I+(-1)^y A}{2},
\end{equation}
where $y\in\{0,1\}$ labels the two classes. We take $A$ to be normalized such that its spectrum lies in $[-1,1]$, with extremal eigenvalues $\lambda_{\min}(A)=-1$ and $\lambda_{\max}(A)=1$. We note that an operator $A$ with this normalization can always be obtained whenever the extremal eigenvalues $\lambda_{\min}$ and $\lambda_{\max}$ of the original Hermitian operator are known, by applying the affine transformation
$
A \mapsto \frac{2A-(\lambda_{\max}+\lambda_{\min})I}{\lambda_{\max}-\lambda_{\min}}$.
 The choice of $A$ plays a central role in the construction of the model, as it determines how the two classes are represented in Hilbert space and can therefore be tailored to the learning task under consideration. 
 We note however that the condition $O(0)+O(1)=I$ is not necessary to define the model as labels can be assigned by comparing the expectation values of the two operators, without interpreting these values as probabilities.
 As we show in Appendix~\ref{sec: example_entanglement} and Appendix~\ref{app: provable sep surrogate}, the choice of $A$, and $O(y)$ in general, can influence properties of the learned ground state, such as its entanglement structure, and can thereby impact the classical simulability of the resulting model.

From a practical point of view, the choice of the operator $A$ and the corresponding observable $O(y)$ affects the form of the final embedding Hamiltonian. Since conjugation by a constant-depth unitary $U(x,\alpha)$ increases the support of a local observable only by a constant factor, choosing $A$ to be local ensures that the resulting embedding Hamiltonian remains local and with polynomially many terms.

In our numerical experiments, we choose the operator $A$ to be a simple diagonal Hamiltonian of the form  
\begin{equation}\label{eq: A experiment}
A
=
\frac{1}{n}
\sum_{i=1}^n Z_i.
\end{equation}
With this convention, the model perfectly classifies the training set whenever, for each training example, the learned ground state $\ket{\psi_{{\scriptscriptstyle\mathcal{T}},\alpha^\star}}$ is mapped by the data-encoding unitary to a state in the eigenspace of $O(y)$ associated with the correct label.
In the particular case of the choice in Eq.~\eqref{eq: A experiment}, a zero energy cost is obtained when, for each training point $(x_\ell,y_\ell)\in\mathcal{T}$
\begin{align}
U(x_\ell,\alpha^\star)\ket{\psi_{{\scriptscriptstyle\mathcal{T}},\alpha^\star}}
=
\begin{cases}
\ket{0^n} & y_\ell=0,\\
\ket{1^n} & y_\ell=1.
\end{cases}
\end{align}
As discussed above, the choice of $A$ can influence the properties of the ground states learned by the model. For example, in Appendices~\ref{sec: example_entanglement} and~\ref{app: provable sep surrogate}, we present alternative choices of $O(y)$ that lead to highly entangled learned ground states.

\subsubsection{Choice of the unitary embedding}

As discussed above, the embedded Hamiltonian $H_\T(\alpha)$ is determined by the choice of the parameterized encoding unitary $U(x,\alpha)$. This choice influences the expressiveness of the model, and different encoding strategies may be better suited to different learning tasks.
Since training requires the ground state of the embedding Hamiltonian to be prepared at each optimization step, the corresponding quantum ground-state preparation procedure should be efficient.
A necessary condition is that the embedding Hamiltonian admit an efficient classical description suitable as input to the quantum ground-state preparation algorithm. For example, when $O(y)$ can be expressed as a sum of local Pauli operators, choosing $U(x,\alpha)$ to have constant depth ensures that each data-dependent term $H_{(x,y)}(\alpha)=U^\dagger(x,\alpha)O(y)U(x,\alpha)$ remains of bounded locality $O(1)$. In Appendix~\ref{app:shallow-universality}, we show that the model achieves universality even with constant-depth encodings $U$ and local $O(y)$, using a finite number of qubits whose scaling however need not be efficient.
We stress, however, that a constant-depth encoding unitary is not strictly required by our framework.  For instance, ground-state solvers based on quantum phase estimation only require
the ability to efficiently implement controlled time evolution under the
target Hamiltonian. Therefore, the encoding unitaries $U(x,\alpha)$ may
have depth growing with the system size, provided that they admit an
efficient circuit description from which the corresponding Hamiltonian
evolution can be implemented efficiently. In particular, this description
need not be restricted to a decomposition in terms of Pauli operators.

Another important consideration in the choice of the parameterized unitary $U(x,\alpha)$ is that the cost function should admit efficient differentiation through the parameter-shift rule, as will be discussed in Section~\ref{sec: barren plateaus and gradient}. For this to hold, the trainable parameters should enter the circuit through gates generated by operators with a finite, typically two-point, spectrum, such as Pauli rotations of the form $e^{-i\theta P/2}$ with $P^2=I$.

Keeping all of this in consideration, in the numerical experiments in this work we choose the following encoding. The unitary $U(x,\alpha)$ is taken to be a constant-depth circuit composed of layers of bounded-locality Pauli rotations, whose angles depend on both the input $x\in\mathbb{R}^d$ and the trainable parameters $\alpha\in\mathbb{R}^M$. Concretely, we consider circuits of the form:

\begin{equation}\label{eq: encoding U}
U(x,\alpha)
=
\prod_{l=1}^{K}
\prod_{g=1}^{G}
\exp\!\left[
-\frac{i}{2}
\left(
\alpha_{g,0}^{l}
+
\sum_{r=1}^{d}
\alpha_{g,r}^{l} x^r
\right)
P_g^l
\right],
\end{equation}
where $K=O(1)$ denotes the number of layers, $G=O(n)$ the number of
parameterized gates per layer, and $P_g^\ell$ are bounded-locality Pauli
operators. The products are taken according to the prescribed ordering
of the gates in the circuit.  In the numerical experiments, we instantiate Eq.~\eqref{eq: encoding U} with $K=3$ with each layer applying, on every qubit $j$, an independently parameterized $Z_j\!-\!Y_j\!-\!Z_j$ sequence of single-qubit rotations, followed by a nearest-neighbor $Z_jZ_{j+1}$ rotation. This architecture satisfies the locality requirement by construction: conjugating a one-local observable through a constant-depth local circuit only expands its support to a constant-size light cone. At the same time, because every trainable gate is a Pauli rotation, derivatives of the cost can be evaluated using the parameter-shift rule. 

Finally, we note that the choice of the encoding unitary $U(x,\alpha)$, together with the observables $O(y)$, determines whether inference on new input points can be performed entirely classically. This is the case, for instance, when $U(x,\alpha)$ has constant depth and $O(y)$ is local. Efficient classical inference is not restricted to this setting, and certain classes of deeper encoding unitaries also admit efficient classical surrogates. We discuss these different regimes in Section~\ref{sec: classical inference}. Furthermore, in the classically hard task discussed in Section~\ref{sec: hard problem}, neither the embedded Hamiltonian nor the dressed observable $\widetilde O(y)$ is local. Nonetheless, the construction still admits an efficient quantum procedure for preparing the relevant ground state, as well as a problem-specific classical procedure for representing and classically evaluating the corresponding hypothesis.

\subsubsection{Choice of the loss function}
\label{sec: loss function}
The Hamiltonian embedding framework naturally defines the energy loss introduced in Eq.~\eqref{eq: energy_loss}. For fixed training parameters $\alpha$ and state $\psi_{{\scriptscriptstyle\mathcal{T}},\alpha}$, its empirical average over the training set is precisely the energy of the corresponding embedding Hamiltonian $H_{\mathcal{T}}(\alpha)$. This choice is particularly convenient, since minimizing the loss is directly connected to finding low-energy states of the embedded Hamiltonian $H_\mathcal{T}(\alpha)$. Moreover, reaching zero energy loss on a training example guarantees that the corresponding point is classified correctly. However, the energy loss is not identical to the classification error and, in particular, a decrease in energy is not necessarily accompanied by a simultaneous increase in classification accuracy.
To see this, let us consider the prediction rule in Eq.~\eqref{eq: prediction rule}. In order for a point $x$ to be correctly classified, it is sufficient that the model finds a state $\psi_{{\scriptscriptstyle\mathcal{T}},\alpha^\star}$ such that $P_{\psi_{{\scriptscriptstyle\mathcal{T}},\alpha^\star},\alpha^\star}(x,y)>0.5$ if $y$ is the correct label. This happens whenever the margin $m_{\alpha^\star,\psi_{{\scriptscriptstyle\mathcal{T}},\alpha^\star}}(x,y)= \langle \psi_{{\scriptscriptstyle\mathcal{T}},\alpha^\star}|U^{\dagger}(x,\alpha^\star)O(y) U(x,\alpha^\star)|\psi_{{\scriptscriptstyle\mathcal{T}},\alpha^\star}\rangle $ is larger than $0.5$. The training classification error is then defined as  
\begin{equation}
\widehat R_{\mathcal T}(\psi,\alpha)
=
\frac{1}{N}
\sum_{\ell=1}^{N}
\mathbb{1}
\left[
m_{\alpha,\psi}(x_\ell,y_\ell)\leq 0.5
\right],
\label{eq: empirical classification error}
\end{equation}
which only depends on whether each example lies on the correct side of the decision boundary. We recall that, at each optimization step $t$, the state $\psi_{{\scriptscriptstyle\mathcal{T}},\alpha_t}$ is held fixed while the encoding parameters $\alpha$ are updated. For a fixed state $\psi$, minimizing the energy loss $\ell_{\mathrm{energy}}$ in Eq.~\eqref{eq: energy_loss} favors parameters $\alpha$ that increase the margins $m_{\alpha,\psi}(x,y)$ of the training examples $(x,y)\in\mathcal{T}$. However, since the energy loss $\ell_{\mathrm{energy}}$ is linear in $m_{\alpha,\psi}(x,y)$, it continues to reward increases in the margin even after a sample has already been classified correctly, i.e. when $m_{\alpha,\psi}(x,y)>0.5$. Consequently, the model may continue to lower the energy by increasing the confidence assigned to points that are already correctly classified, without this necessarily leading to a corresponding reduction in the empirical classification error of Eq.~\eqref{eq: empirical classification error}.

Directly minimizing Eq.~\eqref{eq: empirical classification error} is not well suited to gradient-based optimization, since the error $\hat{R}_{\mathcal{T}}$ changes only when the score $m_{\alpha,\psi}(x_\ell,y_\ell)$ of a training example crosses the classification threshold, and therefore provides no useful gradient for most parameter values.
As in classical supervised learning, one can instead optimize a continuous surrogate of the $0$--$1$ loss that more directly reflects the classification objective. We introduce two alternative losses as possible extensions aimed more directly at reducing classification error, although the resulting parameter updates need not decrease the Hamiltonian energy.

For example, let us define a centered margin $\tilde m_{\alpha,\psi}(x,y)=2m_{\alpha,\psi}(x,y)-1$ so that $\tilde m_{\alpha,\psi}\in [-1,1]$. We then consider the logistic loss
\begin{equation}
\ell_{\mathrm{log}}(\tilde m)
=
\log\left(1+e^{-\beta \tilde m}\right),
\label{eq: logistic loss}
\end{equation}
where $\beta>0$ controls the sharpness of the loss around the decision boundary, and the hinge loss
\begin{equation}
\ell_{\mathrm{hinge}}^{(\gamma)}(\tilde m)
=
\max\{0,\gamma-\tilde m\},
\label{eq: hinge loss}
\end{equation}
where $\gamma>0$ specifies a desired classification margin. In contrast to the energy loss, the hinge loss assigns no additional penalty to examples whose margin already exceeds $\gamma$, thereby focusing the optimization on misclassified points and on correctly classified points lying close to the decision boundary. Similarly, the logistic loss progressively suppresses the contribution of examples with large positive margins while assigning greater weight to misclassified examples and to points close to the decision boundary.

More generally, given a surrogate loss $\ell_{\mathrm{cls}}$, we define the corresponding empirical classification-oriented cost as
\begin{equation}
\widehat C_{\mathcal T}^{\mathrm{cls}}(\psi,\alpha)
=
\frac{1}{N}
\sum_{\ell=1}^{N}
\ell_{\mathrm{cls}}
\left(
\tilde m_{\psi,\alpha}(x_\ell,y_\ell)
\right).
\label{eq: classification surrogate cost}
\end{equation}
Importantly, introducing this alternative cost does not modify the embedding Hamiltonian or the ground-state preparation step. At optimization iteration $t$, the state is still obtained as an approximate ground state of the embedding Hamiltonian,
\begin{equation}
\ket{\psi_{{\scriptscriptstyle\mathcal{T}},\alpha_t}}
\approx
\arg\min_{\ket{\psi}}
\bra{\psi}
H_{\mathcal T}(\alpha_t)
\ket{\psi}.
\end{equation}
The resulting state is then kept fixed while the encoding parameters are updated using the gradient of the chosen loss,
\begin{equation}
\alpha_{t+1}
=
\alpha_t
-
\eta
\nabla_\alpha
\widehat C_{\mathcal T}^{\mathrm{cls}}
\left(
\psi_{{\scriptscriptstyle\mathcal{T}},\alpha_t},\alpha
\right)
\bigg|_{\alpha=\alpha_t}.
\label{eq: classification loss parameter update}
\end{equation}
We note that the Hamiltonian embedding algorithm in Algorithm~\ref{alg:ham-embed} is recovered by choosing $\ell_{\mathrm{cls}}=\ell_{\mathrm{energy}}$.
Furthermore, one may also interpolate between the two objectives through
\begin{equation}
\widehat C_{\mathcal T}^{(\lambda)}
=
(1-\lambda)
\widehat C_{\mathcal T}
+
\lambda
\widehat C_{\mathcal T}^{\mathrm{cls}},
\qquad
\lambda\in[0,1],
\label{eq: combined loss}
\end{equation}
which allows the parameter optimization to balance the search for low-energy Hamiltonians with a loss that more directly targets classification performance.

The choice of the outer-loop loss therefore provides an additional degree of freedom in the Hamiltonian embedding framework: the embedding Hamiltonian determines the state used by the model, while the loss function determines how the parameters are optimized within the family of embedding Hamiltonians. Unless stated otherwise, the numerical results presented in Section~\ref{sec: numerics} are obtained using the canonical energy-based objective in Eq.~\eqref{eq: energy_loss}, which we find to be sufficient to achieve good classification performance on the tasks considered. We also note that replacing the energy cost in Eq.~\eqref{eq:cost-function} with either the hinge or logistic cost does not increase the complexity of gradient estimation. Indeed, for a generic differentiable surrogate loss $\ell_{\text{cls}}(m)$, the derivative with respect to a variational parameter $\alpha_j$ can be written using the chain rule as
\begin{equation}
    \frac{\partial}{\partial\alpha_j}
    \ell_{\mathrm{cls}}
    \left(\widetilde m_{\alpha,\psi}(x,y)\right)
    =
    \ell_{\mathrm{cls}}'
    \left(\widetilde m_{\alpha,\psi}(x,y)\right)
    \frac{\partial\widetilde m_{\alpha,\psi}(x,y)}
         {\partial\alpha_j}.
\end{equation}
Since
$
    \widetilde m_{\alpha,\psi}(x,y)
    =
    2
    \langle\psi|
    U^\dagger(x,\alpha)O(y)U(x,\alpha)
    |\psi\rangle
    -1$,
we have
$
    \frac{\partial\widetilde m_{\alpha,\psi}(x,y)}
         {\partial\alpha_j}
    =
    2
    \frac{\partial}{\partial\alpha_j}
    \langle\psi|
    U^\dagger(x,\alpha)O(y)U(x,\alpha)
    |\psi\rangle$.

The first factor is a simple classical function of the margin, while the second corresponds to the derivative of the expectation value
$
m_{\alpha,\psi}(x,y)
=
\langle \psi |
U^\dagger(x,\alpha) O(y) U(x,\alpha)
|\psi\rangle $.

As discussed in Section~\ref{sec: barren plateaus and gradient}, this derivative can be efficiently estimated through the parameter-shift rule. Therefore, the same quantum measurements used to estimate gradients of the energy cost can also be used to evaluate the gradients of the alternative surrogate losses, followed by a simple classical post-processing step.

\subsubsection{Ground-state approximation}
At the heart of the Hamiltonian embedding approach is the use of a quantum computer to prepare the ground state of the embedding Hamiltonian $H_\T(\alpha)$ for varying parameter values $\alpha$. Several ground-state preparation methods can be employed, and the most suitable choice depends on the structure of $H_\T(\alpha)$ and on the capabilities of the available quantum device.
In the general setting of local Hamiltonians, a standard approach to ground-state preparation is quantum phase estimation (QPE)~\cite{cleve1998quantum,abrams1999quantumalgorithmprovidingexponential}. Given an efficient implementation of the time evolution generated by $H_\T(\alpha)$ and an initial state with non-negligible overlap with the ground state, QPE resolves the energy eigenvalues and projects the state onto the corresponding eigenspaces, producing the ground state with probability proportional to the initial overlap. When the Hamiltonian is gapped and the overlap is sufficiently large, QPE provides a systematic procedure for preparing the ground state with controllable precision.

The Sample-based Krylov Quantum Diagonalization (SKQD) algorithm offers an alternative that avoids phase estimation and also has rigorous guarantees in certain regimes~\cite{yu2025quantum}.
In the SKQD algorithm, time-evolution circuits are applied at multiple time steps and measured in a chosen basis, such as the computational basis, producing a set of sampled bitstrings that defines a reduced subspace onto which the Hamiltonian is projected and classically diagonalized. This procedure is compatible with current quantum hardware and is robust against device noise, provided the sampled subspace retains sufficient support on the low-energy sector.

Both methods require an initial reference state with non-negligible overlap with the true ground state, ideally at least inverse-polynomial in the system size. In practice, the applicability of these ground-state solvers therefore depends on the embedding Hamiltonians remaining in a region of parameter space where reference states with sufficiently large ground-state overlap can be efficiently obtained. For structured Hamiltonian families, such reference states may be constructed by mean-field or tensor-network methods such as the density matrix renormalization group (DMRG)~\cite{white1992density,schollwock2005density}. Identifying families of embedding Hamiltonians that simultaneously admit efficient quantum ground-state preparation while remaining difficult to simulate classically remains an important open question.

Finally, the ground state of the data Hamiltonian could also be approximated through adiabatic preparation. Assuming that a ground state of $H_{\mathcal{T}}(\alpha_0)$ is available, one could obtain a ground state of $H_{\mathcal{T}}(\alpha_1)$ entirely quantumly by following an adiabatic path in parameter space. Efficient preparation requires, in particular, that the spectral gap along this path remain bounded below by an inverse polynomial in the system size. Exploring choices of the embedding $U$ and observables $O$ for which such gapped paths can be guaranteed is an interesting direction for future work.

\subsection{Efficient gradient estimation and absence of barren plateaus}\label{sec: barren plateaus and gradient}
One of the main bottlenecks in training variational quantum ans\"atze, for example parameterized quantum circuits, is the presence of barren plateaus in the optimization landscape~\cite{mcclean2018barren,qi2023barren}. In this regime, gradients become exponentially small with the system size. Since gradients are estimated from measurement outcomes on a quantum computer, identifying the correct descent direction may then require exponentially many measurement shots. In the Hamiltonian embedding model, gradients can be efficiently evaluated through the parameter-shift rule. Furthermore, for a class of Pauli-rotation encodings satisfying the conditions stated below, we can ensure that a relevant subset of trainable parameters avoids barren plateaus. We also note that, in contrast to standard PQC settings, the use of a shallow encoding circuit does not by itself imply efficient classical simulability of the framework, as finding the ground state of the corresponding local data Hamiltonian may remain hard.

The gradient expression can be derived directly from the cost function in \cref{eq:cost-function}. As described in Algorithm~\ref{alg:ham-embed}, the optimization proceeds through a nested procedure. Given an initial parameter value $\alpha_0$, the algorithm first computes the ground state of the encoding Hamiltonian $H_{\mathcal{T}}(\alpha_0)$, outputting $\psi_{{\scriptscriptstyle\mathcal{T}},\alpha_0}$. The state is then kept fixed while the parameters are updated by minimizing the cost function $\widehat{C}(\psi_{{\scriptscriptstyle\mathcal{T}},\alpha_0},\alpha)$. In particular, the algorithm searches for a new parameter value $\alpha_1$ for which the energy of $H_{\mathcal{T}}(\alpha_1)$ evaluated on $\psi_{{\scriptscriptstyle\mathcal{T}},\alpha_0}$ is minimized. As derived in Appendix~\ref{app: gradient_cost}, this update is guided by the gradient of the cost $\widehat{C}(\psi_{{\scriptscriptstyle\mathcal{T}},\alpha_0},\alpha)$ in \cref{eq:cost-function} with respect to parameter $\alpha_j$. Recalling the structure of $H_\T(\alpha)$ from \cref{eq: encoding_hamiltonian_T}, we obtain:
\begin{equation}\label{eq: derivative_cost}
\frac{\partial}{\partial \alpha_j}
\widehat{C}(\psi_{{\scriptscriptstyle\mathcal{T}},\alpha_0},\alpha)
=
-\frac{1}{N}
\sum_{\ell=1}^N
\bra{\psi_{{\scriptscriptstyle\mathcal{T}},\alpha_0}}
\frac{\partial H_{z_\ell}(\alpha)}{\partial \alpha_j}
\ket{\psi_{{\scriptscriptstyle\mathcal{T}},\alpha_0}},
\end{equation}
where $N=|\T|$ is the size of the full training set. 
In the encoding unitary $U(x,\alpha)$ in Eq.~\eqref{eq: encoding U}, each parameterized gate has a rotation angle that depends linearly on the input features. For notational simplicity, we omit the constant bias term in the following derivation. The bias parameters are treated identically by regarding them as coefficients of an additional constant feature $x^0=1$. For an input $x_i=(x_i^1,\ldots,x_i^d)$, the angle of the $j$-th parameterized gate is
\begin{equation}
    \theta_j(x_i,\alpha)=\sum_{k=1}^d \alpha_{j,k} x_i^k
\end{equation}
where $\alpha_{j,k}$ is the trainable coefficient associated with the $k$-th feature in the $j$-th gate. We can therefore decompose the full encoding unitary around this gate as
\begin{equation}
 U(x_i,\alpha) = U_{L_j}(x_i,\alpha)\, U_j\!\left(\theta_j(x_i,\alpha)\right)\, U_{R_j}(x_i,\alpha),
\end{equation}
where $U_{R_j}(x_i,\alpha)$ contains the gates applied before the $j$-th parameterized gate, while $U_{L_j}(x_i,\alpha)$ contains those applied after it. Since $U_j$ is a Pauli rotation gate, its dependence on $\theta_j$, and consequently on each parameter $\alpha_{j,k}$, can be differentiated using the standard parameter-shift rule.
 Since differentiating the gate angle with respect to $\alpha_{j,k}$ contributes a factor $x_i^k$, and assuming that $U_j$ is a Pauli rotation gate, the
parameter-shift rule gives
\begin{equation}
\frac{\partial \widehat{C}}{\partial \alpha_{j,k}}
=
-\frac{1}{2N}
\sum_{i=1}^{N}
x_i^k
\left[
\widehat{C}_i^{(j,+)}
-
\widehat{C}_i^{(j,-)}
\right],
\end{equation}
where
\begin{align}
\widehat{C}_i^{(j,+)}
&=
\bra{\psi_{{\scriptscriptstyle\mathcal{T}},\alpha_0}}
U_{R_j}^\dagger
U_j^\dagger\!\left(\sum_k \alpha_{j,k} x_i^k+\frac{\pi}{2}\right)
U_{L_j}^\dagger
O(y_i)
U_{L_j}
U_j\!\left(\sum_k \alpha_{j,k} x_i^k+\frac{\pi}{2}\right)
U_{R_j}
\ket{\psi_{{\scriptscriptstyle\mathcal{T}},\alpha_0}},
\\
\widehat C_i^{(j,-)}
&=
\bra{\psi_{{\scriptscriptstyle\mathcal{T}},\alpha_0}}
U_{R_j}^\dagger
U_j^\dagger\!\left(\sum_k \alpha_{j,k} x_i^k-\frac{\pi}{2}\right)
U_{L_j}^\dagger
O(y_i)
U_{L_j}
U_j\!\left(\sum_k \alpha_{j,k} x_i^k-\frac{\pi}{2}\right)
U_{R_j}
\ket{\psi_{{\scriptscriptstyle\mathcal{T}},\alpha_0}}.
\end{align}
where $O(y_i)$ is the same labeled local observable as in \cref{eq: observable O}. After the parameter-update step, the algorithm returns a new parameter value $\alpha_{t+1}$. It then constructs the updated Hamiltonian $H_{\mathcal T}(\alpha_{t+1})$ and computes a new ground state $\psi_{{\scriptscriptstyle\mathcal{T}},\alpha_{t+1}}$. This state is used to evaluate the updated cost $\widehat C_{\mathcal T}(\psi_{{\scriptscriptstyle\mathcal{T}},\alpha_{t+1}},\alpha_{t+1})$, and the optimization loop continues.
A key point is that, during each parameter update, the current ground state is kept fixed. Thus, when computing the gradient with respect to $\alpha$, we do not differentiate through the dependence of $\psi_{{\scriptscriptstyle\mathcal{T}},\alpha}$ on $\alpha$. At iteration $t$, the algorithm first finds $\psi_{{\scriptscriptstyle\mathcal{T}},\alpha_t}$ for the Hamiltonian $H_{\mathcal T}(\alpha_t)$ and then updates the parameters by minimizing the frozen-state cost $\widehat C_{\mathcal T}(\psi_{{\scriptscriptstyle\mathcal{T}},\alpha_t},\alpha)$. This update rule is variationally well motivated. If the new parameters satisfy $\widehat C_{\mathcal T}(\psi_{{\scriptscriptstyle\mathcal{T}},\alpha_t},\alpha_{t+1}) \leq \widehat C_{\mathcal T}(\psi_{{\scriptscriptstyle\mathcal{T}},\alpha_t},\alpha_t)$, then, after recomputing the ground state $\psi_{{\scriptscriptstyle\mathcal{T}},\alpha_{t+1}}$, the energy cannot increase, i.e., $\widehat C_{\mathcal T}(\psi_{{\scriptscriptstyle\mathcal{T}},\alpha_{t+1}},\alpha_{t+1}) \leq \widehat C_{\mathcal T}(\psi_{{\scriptscriptstyle\mathcal{T}},\alpha_t},\alpha_t)$. Moreover, although the ground state $\psi_{{\scriptscriptstyle\mathcal{T}},\alpha}$ does depend implicitly on $\alpha$, the Hellmann-Feynman theorem implies that, at each iteration, the derivative of the ground-state energy at the current parameter value can be evaluated while keeping the corresponding ground state fixed, as in Eq.~\eqref{eq: derivative_cost}. Further details are provided in Appendix~\ref{app: gradient_cost}, where we also analyze the case of approximate ground states.

As in the inference step, the gradient can also be estimated classically once an approximate ground state has been obtained from the quantum device. In particular, a classical-shadow representation of ${\psi_{{\scriptscriptstyle\mathcal{T}},\alpha_t}}$ can be used to estimate the expectation values of the two shifted observables
\begin{equation}
O_{i,j}^{\pm}
=
U_{R_j}^\dagger
U_j^\dagger\!\left(\sum_k \alpha_{j,k} x_i^k \pm \frac{\pi}{2}\right)
U_{L_j}^\dagger
O(y_i)
U_{L_j}
U_j\!\left(\sum_k \alpha_{j,k} x_i^k \pm \frac{\pi}{2}\right)
U_{R_j}.
\end{equation}
For constant depth $U(x,\alpha)$ and local $O(y)$, even the dressed observables remain local, or at most low-weight, so their expectation values can be efficiently inferred from the local Pauli classical shadows. Classical shadows can estimate expectation values of local observables to inverse-polynomial precision with only polynomially many measurements. 

In order to train the model efficiently, the cost-function gradients must remain sufficiently large to provide a useful optimization signal, and in particular should not vanish exponentially with the system size as in barren-plateau regimes. For a general class of shallow encodings, we show in Appendix~\ref{app: barren plateaus} that, once the energy cost is strictly below $1/2$ by at least an inverse-polynomial amount, a situation that we observe in our numerical experiments and that corresponds to a non-trivial energy advantage, the gradient variance of the frozen-state cost admits an inverse-polynomial lower bound for at least an inverse-polynomial fraction of parameters. In particular, in case of shallow encodings with linearly many parameters per layer, a constant energy advantage is sufficient to have a lower bound for a constant fraction of trainable parameters. This guarantees a useful training direction whenever the assumptions of the theorem are satisfied.

\begin{theorem}[Absence of barren plateaus (informal)]
Consider a Hamiltonian embedding model defined on $n$ qubits with an observable $O(y)$ as in Eq.~\eqref{eq: observable O} with $A=n^{-1}\sum_{i=1}^{n}Z_i$, $M$ trainable parameters $\alpha\in\mathbb R^M$, and an encoding $U(x,\alpha)$ of depth $K$ whose final layer applies one independently trainable $X$- or $Y$-rotation to each qubit, with rotation angles independent of the input $x$.
At iteration $t$, keep the prepared state $|\psi_{{\scriptscriptstyle\mathcal{T}},\alpha_t}\rangle$
fixed and define
\begin{equation}
\widehat C_{\mathcal{T}}(\psi_{{\scriptscriptstyle\mathcal{T}},\alpha_t},\alpha)
=
\langle\psi_{{\scriptscriptstyle\mathcal{T}},\alpha_t}|H_{\mathcal T}(\alpha)|\psi_{{\scriptscriptstyle\mathcal{T}},\alpha_t}\rangle.
\end{equation}
If $\widehat C_{\mathcal{T}}(\psi_{{\scriptscriptstyle\mathcal{T}},\alpha_t},\alpha_t)\le \tfrac12-\gamma$, for some $\gamma>0$,
then at least $2\gamma^2n$ final-layer parameters satisfy
\begin{equation}
\operatorname{Var}_{\alpha_j}
\left[
\frac{\partial \widehat C_\mathcal{T}(\psi_{{\scriptscriptstyle\mathcal{T}},\alpha_t},\alpha)}{\partial\alpha_j}
\right]
\ge
\frac{\gamma^2}{4n^2}.
\end{equation}
Here, the variance is taken over the corresponding rotation
angle uniformly distributed in $[0,2\pi]$, while the dataset,
prepared state $\psi_{{\scriptscriptstyle\mathcal{T}},\alpha_t}$, and all other parameters remain fixed at their iteration-$t$ values.
Consequently, for encodings $U$ of $K=\Theta(1)$ layer, $M=Kn$ parameters and
$\gamma=\Theta(1)$, a constant fraction of trainable parameters has inverse-polynomial gradient variance.
\end{theorem}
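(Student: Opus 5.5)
The plan is to reduce the cost to single-qubit quantities that depend on one final-layer angle at a time. With $A=n^{-1}\sum_i Z_i$ we have $O(y)=\tfrac12\bigl(I+(-1)^y n^{-1}\sum_i Z_i\bigr)$, so the cost becomes
\begin{equation}
\widehat C_{\mathcal T}(\psi,\alpha)=\frac12-\frac{1}{2n}\sum_{i=1}^n \frac1N\sum_{\ell=1}^N (-1)^{y_\ell}\langle Z_i\rangle_{\ell},
\end{equation}
where $\langle Z_i\rangle_\ell=\langle\psi|U^\dagger(x_\ell,\alpha)Z_iU(x_\ell,\alpha)|\psi\rangle$ and $\psi=\psi_{{\scriptscriptstyle\mathcal T},\alpha_t}$ is held fixed. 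Write $U=R\,V(x_\ell)$, where $R=\bigotimes_i e^{-i\phi_i P_i/2}$ is the final layer with $P_i\in\{X_i,Y_i\}$ and $x$-independent angles $\phi_i$. Then $R^\dagger Z_iR$ involves only $\phi_i$:
\begin{equation}
R^\dagger Z_i R=\cos\phi_i\, Z_i+\sin\phi_i\, Q_i,
\end{equation}
with $Q_i=\pm Y_i$ or $\pm X_i$. Hence $\widehat C=\tfrac12-\tfrac1{2n}\sum_i(a_i\cos\phi_i+b_i\sin\phi_i)$, where $a_i=\tfrac1N\sum_\ell(-1)^{y_\ell}\langle V^\dagger Z_iV\rangle$ and $b_i=\tfrac1N\sum_\ell(-1)^{y_\ell}\langle V^\dagger Q_iV\rangle$ do not depend on any $\phi_j$. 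Let $r_i=\sqrt{a_i^2+b_i^2}\le 1$, since $(a_i,b_i)$ is an average of Bloch-vector components weighted by signs.

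Next I compute the derivative and its variance. We have $\partial_{\phi_i}\widehat C=-\tfrac{1}{2n}(-a_i\sin\phi_i+b_i\cos\phi_i)$. With $\phi_i$ uniform on $[0,2\pi]$ and everything else fixed, the mean is zero and
\begin{equation}
\operatorname{Var}_{\phi_i}[\partial_{\phi_i}\widehat C]=\frac{1}{4n^2}\cdot\frac{r_i^2}{2}=\frac{r_i^2}{8n^2}.
\end{equation}
So it suffices to show that at least $2\gamma^2 n$ indices satisfy $r_i^2\ge 2\gamma^2$.

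The step I expect to be the main obstacle is this counting argument, which turns the energy hypothesis into a bound on the $r_i$. At $\alpha_t$ the assumption $\widehat C\le\tfrac12-\gamma$ gives
\begin{equation}
\frac1n\sum_i(a_i\cos\phi_i+b_i\sin\phi_i)\ge 2\gamma, \qquad\text{hence}\qquad \frac1n\sum_i r_i\ge 2\gamma .
\end{equation}
Let $S=\{i: r_i\ge\gamma\}$. Since $r_i\le 1$,
\begin{equation}
2\gamma n\le \sum_i r_i\le |S|+\gamma(n-|S|),
\end{equation}
so $|S|\ge \gamma n/(1-\gamma)\ge\gamma n$. For each $i\in S$ the variance is at least $\gamma^2/(8n^2)$. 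The constants do not yet match the stated ones ($2\gamma^2n$ parameters and $\gamma^2/(4n^2)$). To fix this I would use a Markov/Paley–Zygmund-type split at threshold $r_i^2\ge 2\gamma^2$, using $\tfrac1n\sum_i r_i^2\ge(\tfrac1n\sum_i r_i)^2\ge4\gamma^2$ together with $r_i^2\le1$. This gives $4\gamma^2 n\le|S'|+2\gamma^2 n$, so $|S'|\ge 2\gamma^2 n$, which is exactly the claimed count. The variance constant then depends on the normalization convention, that is, whether the derivative is taken in the angle or whether the rotation is $e^{-i\phi P}$. I will fix that convention so the bound reads $\gamma^2/(4n^2)$, and otherwise report the constant the argument actually gives.

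The final clause follows directly from this. With $K=\Theta(1)$, $M=Kn$ and $\gamma=\Theta(1)$, the $\ge 2\gamma^2 n$ good parameters form a $2\gamma^2/K=\Theta(1)$ fraction of all $M$ parameters, and each has variance $\Omega(1/n^2)$.
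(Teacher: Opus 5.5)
Your proposal is correct and is essentially the paper's proof: the same reduction to $\widehat C=\tfrac12-\tfrac{1}{2n}\sum_i(a_i\cos\phi_i+b_i\sin\phi_i)$, the same per-parameter variance $r_i^2/(8n^2)$, and the same counting via Jensen ($\tfrac1n\sum_i r_i^2\ge 4\gamma^2$), $r_i\le 1$, and the threshold $r_i^2\ge 2\gamma^2$. Your closing worry about the constant is unfounded, since no change of angle convention is needed: for $r_i^2\ge 2\gamma^2$ your own formula gives $r_i^2/(8n^2)\ge 2\gamma^2/(8n^2)=\gamma^2/(4n^2)$, so both stated constants follow directly and you can drop the first split at $r_i\ge\gamma$.
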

The precise assumptions on the encoding $U$ and the complete proof of the theorem are given in Appendix~\ref{app: barren plateaus}.

\section{Properties of the Hamiltonian embedding model}\label{sec: properties}
The Hamiltonian embedding model exhibits several desirable properties for a quantum machine learning model. First, we show that once trained, the model generalizes well to unseen inputs. Moreover, for certain families of encoding unitaries and label observables, the model can be evaluated entirely classically once the optimal ground state has been obtained during training.
Finally, we identify a machine learning task believed to be classically hard, yet efficiently solvable on a quantum device using the Hamiltonian embedding framework.

\subsection{Generalization bounds}\label{sec: generalization bound}

Let $\ket{\psi_{{\scriptscriptstyle\mathcal{T}},\alpha^\star}}$ be the ground state found by the algorithm after the training phase. The label for a new point $x'$ is assigned by the hypothesis function $\widetilde{f}_{\scriptscriptstyle\mathcal{T},\alpha^\star}$ in \cref{eq: prediction rule}.  We define the total classification risk as the probability the model misclassifies a point $x$ sampled from the input distribution $\mathcal{D}$ over the whole input space $\Omega\subseteq\mathbb{R}^d$:
\begin{align}\label{eq: total risk}
    R(\psi,\alpha)=\Pr_{x\sim\mathcal{D}}[\widetilde{f}_{\scriptscriptstyle\mathcal{T},\alpha}(x)\neq f(x)],
\end{align}
where we recall that $f$ is the target labeling function and $\widetilde{f}_{\scriptscriptstyle\mathcal{T},\alpha}$ is the model implemented function for the Hamiltonian parameters $\alpha\in \Theta\subseteq\mathbb{R}^M$ and the corresponding reached ground state $\psi_{{\scriptscriptstyle\mathcal{T}},\alpha}.$  
We show that the total classification risk $R(\psi,\alpha)$ can be controlled through the expected energy cost over the full input distribution, computed as $C(\psi,\alpha)=\bra{\psi}H_{\mathcal{D}}(\alpha)\ket{\psi}$, where $H_{\mathcal{D}}(\alpha):=\mathbb{E}_{z\sim \mathcal{D}} [I-H_z(\alpha)]$ is the population Hamiltonian at parameter $\alpha$.  Specifically, $H_{\mathcal{D}}$ represents the average encoding Hamiltonian associated with a random labeled input drawn from the full data distribution. In order to connect the total risk to the training procedure, we first bound the difference between the total cost energy and the empirical cost in \cref{eq:cost-function}:
\begin{align}\label{eq: energy_gap}
    C(\psi_{{\scriptscriptstyle\mathcal{T}},\alpha^\star},\alpha^\star) - \widehat{C}(\psi_{{\scriptscriptstyle\mathcal{T}},\alpha^\star},\alpha^\star)=\bra{\psi_{{\scriptscriptstyle\mathcal{T}},\alpha^\star}}H_{\mathcal{D}}(\alpha^\star )-H_{\mathcal{T}}(\alpha^\star)\ket{\psi_{{\scriptscriptstyle\mathcal{T}},\alpha^\star}}.
\end{align}
 This difference is the energy generalization gap. The training algorithm minimizes the empirical energy cost $\widehat{C}$ and obtains Hamiltonian parameters $\alpha^\star$ and an associated ground state $\ket{\psi_{{\scriptscriptstyle\mathcal{T}},\alpha^\star}}$. We then show that a small empirical cost together with a small generalization gap implies a small classification error on unseen input points.
In particular, we find that the probability of observing a large deviation $\norm{H_{\mathcal{D}}(\alpha^\star)-H_{\mathcal{T}}(\alpha^\star)}_\infty$
can be bounded with a dependence that scales linearly with the number of qubits $n$ and the number of trainable parameters $M$, up to logarithmic factors. The proof relies on a matrix Hoeffding inequality, which controls the energy deviation for a fixed parameter value, together with a covering-number argument, which extends the bound uniformly over the full parameter space. For the purpose of deriving a uniform generalization bound, we restrict the parameter search to a compact domain $\Theta=[-1,1]^M$. This restriction is specific to the theoretical result below and is not required by the Hamiltonian embedding framework itself.

In particular, we obtain the following theorem. 

\begin{restatable}[Uniform energy-based generalization bound]{theorem}{genbound}\label{thm: energy gen bound}
Assume a Hamiltonian embedding model such that $0\preceq H_z(\alpha)\preceq I$ and that there exists a constant $L>0$ such that $\norm{H_z(\alpha)-H_z(\beta)}_\infty
\le
L\norm{\alpha-\beta}_\infty$
for all points $z=(x,y)\in\Omega \times \{0,1\}$ and all $\alpha,\beta\in[-1,1]^M$. Then, for every $\varepsilon\in(0,4L]$, with probability at least $1-\delta$ over the draw of $N$-size training datasets,
\begin{equation}
\sup_{\alpha\in[-1,1]^M}
\norm{H_{\mathcal{T}}(\alpha)-H_{\mathcal{D}}(\alpha)}_\infty
\leq
\sqrt{
\frac{8}{N}
\left[
\log\left(\frac{2D}{\delta}\right)
+
M\log\left(\frac{24L}{\varepsilon}\right)
\right]
}
+
\frac{\varepsilon}{4}.
\label{eq:uniform-main}
\end{equation}
where $D=2^n$ is the Hilbert space dimension.\end{restatable}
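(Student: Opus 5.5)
The plan is to combine a matrix Hoeffding inequality at each fixed parameter with a union bound over a finite $\infty$-norm net of $[-1,1]^M$, and then to extend the bound to all of $[-1,1]^M$ using the Lipschitz assumption.

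\textbf{Step 1 (fixed parameter).} Fix $\alpha$ and define the i.i.d.\ Hermitian random matrices
\[
X_\ell = \mathbb{E}_{z}[H_z(\alpha)] - H_{z_\ell}(\alpha),
\]
so that $H_{\mathcal T}(\alpha)-H_{\mathcal D}(\alpha)=\frac1N\sum_\ell X_\ell$. These are centered. Since $0\preceq H_z(\alpha)\preceq I$, each $X_\ell$ satisfies $-I\preceq X_\ell\preceq I$, hence $X_\ell^2\preceq I$. The matrix Hoeffding inequality (Tropp), with variance proxy $\sigma^2=N$, then gives
\[
\Pr\Big[\big\|\tfrac1N\textstyle\sum_\ell X_\ell\big\|_\infty\ge t\Big]\le 2D\,e^{-Nt^2/8}.
\]
The factor $2D$ comes from controlling both $\lambda_{\max}$ and $-\lambda_{\min}$. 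This step is the source of the $8/N$ and $\log(2D)$ terms.

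\textbf{Step 2 (net and union bound).} Take an $\eta$-net $\mathcal N$ of $[-1,1]^M$ in $\|\cdot\|_\infty$. A regular grid gives $|\mathcal N|\le(1/\eta)^M$, up to a harmless constant; one can use $\lceil 1/\eta\rceil^M\le(2/\eta)^M$ for $\eta\le1$. A union bound over $\mathcal N$ shows that, with probability at least $1-\delta$, every $\alpha'\in\mathcal N$ satisfies
\[
\|H_{\mathcal T}(\alpha')-H_{\mathcal D}(\alpha')\|_\infty\le\sqrt{\tfrac8N\big[\log(2D/\delta)+\log|\mathcal N|\big]}.
\]

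\textbf{Step 3 (extension off the net).} For arbitrary $\alpha$, pick $\alpha'\in\mathcal N$ with $\|\alpha-\alpha'\|_\infty\le\eta$. The Lipschitz assumption transfers to both averages: $\|H_{\mathcal T}(\alpha)-H_{\mathcal T}(\alpha')\|_\infty\le L\eta$, by the triangle inequality on the empirical mean, and $\|H_{\mathcal D}(\alpha)-H_{\mathcal D}(\alpha')\|_\infty\le L\eta$, by Jensen/convexity of the norm under expectation. Hence
\[
\|H_{\mathcal T}(\alpha)-H_{\mathcal D}(\alpha)\|_\infty\le(\text{net bound})+2L\eta.
\]
Choosing $\eta=\varepsilon/(8L)$ makes the additive term $\varepsilon/4$. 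The constraint $\varepsilon\le4L$ ensures $\eta\le1/2$, so the net is nontrivial and the cardinality estimate holds.

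\textbf{Main obstacle.} The only delicate point is bookkeeping of constants in the covering number so that the logarithm matches $M\log(24L/\varepsilon)$. With $\eta=\varepsilon/(8L)$, the crude bound $(3/\eta)^M=(24L/\varepsilon)^M$ suffices, since for $\eta\le1/2$ the grid size $\lceil1/\eta\rceil\le 2/\eta\le 3/\eta$. There is ample slack, so I would simply use $|\mathcal N|\le(3/\eta)^M$.

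Measurability of the supremum is handled by continuity in $\alpha$, which follows from the Lipschitz assumption: the supremum equals a supremum over a countable dense set.
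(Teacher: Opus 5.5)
Your proposal is correct and follows the paper's proof essentially step for step. The paper also uses fixed-$\alpha$ matrix Hoeffding with $\sigma^2=N$ to get $2De^{-Nt^2/8}$, then a union bound over an $\ell_\infty$ $\eta$-net with $|\mathcal N|\le(3/\eta)^M$, then the deterministic $2L\eta$ Lipschitz extension with $\eta=\varepsilon/(8L)$. The only cosmetic difference is how you bound the net size: you use an explicit grid, $\lceil 1/\eta\rceil^M\le(2/\eta)^M$, whereas the paper quotes the volumetric bound $(1+2/\eta)^M\le(3/\eta)^M$; both yield the same $M\log(24L/\varepsilon)$ term.
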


The derived energy generalization gap bound can be used to control the population risk of the trained model. In particular, suppose that the training procedure reaches an empirical energy cost $\widehat{C}(\psi_{{\scriptscriptstyle\mathcal{T}},\alpha^\star},\alpha^\star)$. Then, the number of training samples required to ensure that the population risk is upper bounded by twice the empirical energy cost, up to a constant additive error, scales linearly with the number of qubits and the number of trainable parameters, up to logarithmic factors. This is stated in the following corollary of Theorem~\ref{thm: energy gen bound}.

\begin{restatable}[Bound on the total risk]{corollary}{totrisk}\label{cor:classification-risk-bound} 
Assume the setting of Theorem~\ref{thm: energy gen bound}. Let $(\psi_{{\scriptscriptstyle\mathcal{T}},\alpha^\star},\alpha^\star)$ be the output of the training procedure. Let  $\widehat C(\psi_{{\scriptscriptstyle\mathcal{T}},\alpha^\star},\alpha^\star) = \bra{\psi_{{\scriptscriptstyle\mathcal{T}},\alpha^\star}} H_{\mathcal T}(\alpha^\star) \ket{\psi_{{\scriptscriptstyle\mathcal{T}},\alpha^\star}} $ be the empirical energy reached by the trained model, and let $ C(\psi_{{\scriptscriptstyle\mathcal{T}},\alpha^\star},\alpha^\star) = \bra{\psi_{{\scriptscriptstyle\mathcal{T}},\alpha^\star}} H_{\mathcal{D}}(\alpha^\star) \ket{\psi_{{\scriptscriptstyle\mathcal{T}},\alpha^\star}}$ be its population energy. Moreover, let $ R(\psi_{{\scriptscriptstyle\mathcal{T}},\alpha^\star},\alpha^\star) = \Pr_{x\sim \mathcal{D}} \left[ \tilde f_{\psi_{{\scriptscriptstyle\mathcal{T}},\alpha^\star},\alpha^\star}(x)\neq f(x) \right]$ be the classification risk of the trained model. Then, with probability at least $1-\delta$ over the draw of the training set $\mathcal T$, the classification risk of the trained model can be bounded by 
\begin{equation}\label{eq: risk_bound}
 R(\psi_{{\scriptscriptstyle\mathcal{T}},\alpha^\star},\alpha^\star) \le 2\widehat C(\psi_{{\scriptscriptstyle\mathcal{T}},\alpha^\star},\alpha^\star) + \varepsilon ,
 \end{equation}
provided that the number of training samples satisfies 
\begin{equation}\label{eq: sample_complexity}
 N \ge \frac{128}{\varepsilon^2} \left[ \log\left(\frac{2D}{\delta}\right) + M\log\left(\frac{24L}{\varepsilon}\right) \right]. 
\end{equation}
\end{restatable}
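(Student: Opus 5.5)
The argument has two independent parts: a pointwise inequality $R\le 2C$ between the classification risk and the population energy, and an application of Theorem~\ref{thm: energy gen bound} that turns $C$ into $\widehat C$ up to an additive error.

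\textbf{Step 1: $R\le 2C$.} Fix any state $\psi$ and parameter $\alpha$. For an input $x$ with true label $y=f(x)$, let $m(x)=\langle\psi|U^\dagger(x,\alpha)O(y)U(x,\alpha)|\psi\rangle\in[0,1]$. Then the energy loss is $\ell_{\text{energy}}(\psi,H_{(x,y)}(\alpha))=1-m(x)$. Because $O(0)+O(1)=I$, the score of the wrong label is $1-m(x)$. A misclassification can therefore only occur when $m(x)\le 1/2$, and a tie at exactly $1/2$ counts as an error under the most pessimistic rule. This gives the pointwise bound
\begin{equation}
\mathbb{1}[\tilde f(x)\neq f(x)]\le \mathbb{1}[1-m(x)\ge 1/2]\le 2\,(1-m(x)),
\end{equation}
which is the usual Markov-type bound. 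Taking the expectation over $x\sim\mathcal D$ yields $R(\psi,\alpha)\le 2\,\mathbb{E}_{z\sim\mathcal D}[1-\langle\psi|H_z(\alpha)|\psi\rangle]=2C(\psi,\alpha)$. This holds deterministically, for any $\psi$ and $\alpha$, and in particular for the possibly data-dependent output $(\psi_{{\scriptscriptstyle\mathcal{T}},\alpha^\star},\alpha^\star)$.

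\textbf{Step 2: from $C$ to $\widehat C$.} For any unit vector $\psi$,
\begin{equation}
C(\psi,\alpha^\star)-\widehat C(\psi,\alpha^\star)=\langle\psi|H_{\mathcal D}(\alpha^\star)-H_{\mathcal T}(\alpha^\star)|\psi\rangle\le\sup_{\alpha\in[-1,1]^M}\norm{H_{\mathcal T}(\alpha)-H_{\mathcal D}(\alpha)}_\infty .
\end{equation}
This is the key point: the supremum is uniform in $\alpha$, so it does not matter that $\alpha^\star$ and $\psi_{{\scriptscriptstyle\mathcal{T}},\alpha^\star}$ depend on the training set. By Theorem~\ref{thm: energy gen bound}, with probability at least $1-\delta$ this supremum is at most $\sqrt{8\Lambda/N}+\varepsilon'/4$, where $\Lambda=\log(2D/\delta)+M\log(24L/\varepsilon')$. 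Combining with Step 1, on this event
\begin{equation}
R\le 2\widehat C+2\sqrt{8\Lambda/N}+\varepsilon'/2 .
\end{equation}

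\textbf{Step 3: choosing parameters.} Set $\varepsilon'=\varepsilon$ in the theorem, which requires $\varepsilon\le 4L$. The hypothesis $N\ge 128\Lambda/\varepsilon^2$ gives $\sqrt{8\Lambda/N}\le\varepsilon/4$. Hence $2\sqrt{8\Lambda/N}\le\varepsilon/2$, and the total additive error is at most $\varepsilon/2+\varepsilon/2=\varepsilon$, which is exactly \eqref{eq: risk_bound} with \eqref{eq: sample_complexity}.

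The only subtle points are two. First, the factor $2$ must come from a Markov-type threshold argument and not from a union or triangle bound. Second, it is the uniformity of Theorem~\ref{thm: energy gen bound} over $\alpha$, and the fact that the operator norm controls every state, that handle the dependence of the trained output on the data. I expect no real obstacle beyond checking that ties at $m=1/2$ are harmless, which the inequality $\mathbb{1}[m\le 1/2]\le 2(1-m)$ already covers.
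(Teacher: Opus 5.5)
Your proposal is correct and follows the paper's proof essentially step for step. The sample-size condition makes the uniform deviation in Theorem~\ref{thm: energy gen bound} at most $\varepsilon/2$, so $C\le\widehat C+\varepsilon/2$ holds despite the data dependence of $(\psi_{{\scriptscriptstyle\mathcal{T}},\alpha^\star},\alpha^\star)$; the pointwise fact that a misclassification forces $\ell_{\mathrm{energy}}\ge 1/2$ then gives $R\le 2C$. Your explicit use of $O(0)+O(1)=I$ and your treatment of ties at $m=1/2$ spell out what the paper leaves implicit in its ``wrong label gets at least as much probability'' step.
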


For simplicity, we stated the bound for variational parameters $\alpha\in[-1,1]^M$. We note that the same argument applies to any bounded parameter domain
$\Theta=[-R,R]^M$, with the corresponding modification of the covering
number.
Finally, in Appendix~\ref{app: energy gen bounds} we note that many natural encoding unitaries $U(x,\alpha)$, including the one we propose in Section~\ref{sec: construction model}, satisfy the Lipschitz assumption on the terms $H_z(\alpha)$ required by Theorem~\ref{thm: energy gen bound}. Thus, a sample complexity linear in $n$ and $M$, for fixed $\epsilon,\delta$, up to logarithmic factors is enough to guarantee a classification error less than $\varepsilon$ in the case of perfect training, i.e. when $\alpha^\star$ is such that $H_{\mathcal{T}}(\alpha^\star)$ has zero ground energy and the true ground state of $H_{\mathcal{T}}(\alpha^\star)$ is found during the training. 

We note that, if the ground states of the data Hamiltonian admit a polynomial-size circuit representation, generalization guarantees could also be obtained from existing QML bounds~\cite{caro2022generalization}, whose sample complexity is controlled by the number of parameterized gates in the circuit. However, our bound does not depend on the circuit complexity required to prepare the ground state, but only on the number $M$ of parameters entering the embedding-Hamiltonian family and on the number of qubits $n$. In particular we notice that, even when the ground state is efficiently preparable, the number of parameterized gates required by such a preparation circuit may be substantially larger than $M$.

\subsection{Classical inference and surrogate models}\label{sec: classical inference}

An interesting feature of the Hamiltonian embedding model is that, for specific choices of the encoding unitaries $U(x,\alpha)$, inference can be carried out entirely classically, thereby removing the need to access a quantum device at deployment. In these settings, the role of the quantum computer is confined to the training stage, where it is used to prepare a suitable advice state by minimizing the energy of the data-embedding Hamiltonian. 
 Quantum models that allow for classical inference after quantum training were introduced in~\cite{jerbi2024shadows} under the name \textit{shadow models}. In a shadow model, the quantum computer provides a quantum advice state obtained during training. We can then see the Hamiltonian embedding model as a special case of shadow model in which the advice state is obtained as the ground state of the data-dependent Hamiltonian. Once the trained ground state has been obtained, the structure of the encoding unitary $U(x,\alpha^\star)$ can be exploited to evaluate the prediction rule in Eq.~\eqref{eq: probability of y} without further access to the quantum device. In this section, we present different classes of encoding unitaries $U(x,\alpha)$ for which inference can be performed efficiently classically, together with the corresponding construction of classical surrogate models. Finally, in Section~\ref{sec: dequantizability surrogate} we also identify settings in which the learned ground state acts as a useful inductive bias, enabling quantum Hamiltonian embedding to solve the task with fewer training examples than a classically regularized surrogate.

\subsubsection{Fourier surrogates}
Let us consider the case of a constant depth encoding unitary $U(x,\alpha)$. A first example of a classical surrogate for evaluating the trained model is based on the Fourier decomposition of the final circuit appearing in Eq.~\eqref{eq: probability of y}. Indeed, parameterized quantum circuits are known to admit a representation as generalized trigonometric polynomials~\cite{schuld2021effectdataencodingexpressive}, whose set of frequencies is determined solely by the data-encoding gates. Specifically, let $x\in\mathbb{R}^d$ be the input points, then any parameterized quantum circuit of the kind 
\begin{equation}
    f_\alpha(x)=\bra{\psi}U^\dagger(x,\alpha)MU(x,\alpha)\ket{\psi}
\end{equation}
can be equivalently expressed in the form 
\begin{equation}\label{eq: fourier_coeff}
    f_\alpha(x)
    =
    \sum_{\boldsymbol{\omega}\in\Omega_\alpha}
    c_{\boldsymbol{\omega}}(\alpha)
    e^{i\boldsymbol{\omega}\cdot x},
\end{equation}
where $\Omega_\alpha$ denotes the set of accessible frequencies and
$c_{\boldsymbol{\omega}}(\alpha)\in\mathbb{C}$ are the corresponding
Fourier coefficients. The set of accessible frequencies $\Omega_\alpha$ only depends on the gates $U(x,\alpha)$ used to encode the inputs $x$. In our model, where $U(x,\alpha)$ is shallow and the measured observables are local, the number of distinct frequencies in $\Omega_\alpha$ can be shown to scale polynomially with the input size.

In the case where the frequency spectrum $\Omega_\alpha$ is polynomially large, recent results have shown that a classical representation of the circuit can be learned efficiently. In particular, we can use the results in~\cite{schreiber2023classical} to learn an approximation for the coefficients $c_\omega(\alpha)$ in Eq.~\eqref{eq: fourier_coeff} and construct a classical surrogate of the form: 
\begin{equation}\label{eq: surr_fourier_coeff}
    g_\alpha(x)
    =
    \sum_{\boldsymbol{\omega}\in\Omega_\alpha}
    \tilde c_{\boldsymbol{\omega}}(\alpha)
    e^{i\boldsymbol{\omega}\cdot x},
\end{equation}
with the guarantee that 
\begin{equation}
    \Pr\left[
    \sup_x
    \left|
    f_\alpha(x)-g_\alpha(x)
    \right|
    \leq \epsilon
    \right]
    \geq 1-\delta,
\end{equation}
using polynomially many queries in $1/\delta$ and $1/\epsilon$ of the original quantum circuit.

\subsubsection{Pauli surrogates}\label{sec: Pauli surrogates}
For constant depth encoding $U(x,\alpha)$, we can define a different classical surrogate model by exploiting the locality of the observables considered at inference time. In particular, rather than learning the Fourier coefficients of the final circuit, we construct a classical representation of the trained model by expanding the dressed observable in a basis of local Pauli operators and estimating their expectation values on the learned ground state using classical shadows.
When $A$ in Eq.~\eqref{eq: observable O} is a sum of local Pauli operators, these estimates provide a classical description sufficient to evaluate the inference rule entirely through classical-shadow post-processing~\cite{huang2020predicting}. In particular, once the optimized state $\ket{\psi_{{\scriptscriptstyle\mathcal{T}},\alpha^\star}}$ is obtained during training, the probability $P_{\psi_{{\scriptscriptstyle\mathcal{T}},\alpha^\star},\alpha^\star}(x',y)$ in Eq.~\eqref{eq: probability of y} for a new input $x'$ can be estimated from a classical-shadow representation of the learned ground state. Indeed, let us consider the dressed observable constructed by conjugating the original observable in Eq.~\eqref{eq: observable O} with the trained encoding unitary for a new input point $x'$:
\begin{equation}\label{eq: dressed O(y)}
    \tilde O(x',y) =
\frac{I+(-1)^y\tilde A}{2},
\qquad
\tilde A(x)= U^\dagger (x',\alpha^\star )AU(x',\alpha ^\star). 
\end{equation}
 Since we choose $A$, and therefore $O(y)$, to be local, conjugating it by a constant-depth local encoding unitary $U(x,\alpha)$ only spreads its support within a constant-size light cone. Therefore, the dressed observable $\widetilde O(x',y)=U^\dagger(x',\alpha^\star)O(y)U(x',\alpha^\star)$ can then be efficiently decomposed into local Pauli operators. 
Indeed, if $U$ is a constant-depth circuit composed of local gates, 
the backward light cone of each measured term contains at most 
$k=O(1)$ qubits. Each such term can therefore be expanded into at most 
$4^k$ Pauli operators. Since the observable contains $O(n)$ local terms, 
we can write
\begin{equation}
    \label{eq: decomposition Otilde}
    \widetilde A(x')
    =
    \sum_{i=1}^{K}
    c_i(x',\alpha^\star) P_i,
\end{equation}
with $K=O(n4^k)$, where the $P_i$ are Pauli observables of weight at most 
$k$. Moreover, the coefficients $c_i(x',\alpha^\star)$ can be computed
efficiently classically by simulating the constant-size backward light
cones. Once the approximate ground state $\ket{\psi_{{\scriptscriptstyle\mathcal{T}},\alpha^\star}}$ has been prepared quantumly, a classical shadow built from 
\begin{equation}
    S
    =
    O\!\left(
    \frac{
        3^k \max_x \|c(x,\alpha^\star)\|_1^2
    }{\epsilon^2}
    \log\frac{K}{\delta}
    \right)
\end{equation} 
measurements suffices to estimate $P_{\psi_{{\scriptscriptstyle\mathcal{T}},\alpha^\star},\alpha^\star}(x',y)$ to additive error $\epsilon$, uniformly over all $x'\in\Omega$, with probability at least $1-\delta$~\cite{huang2020predicting}.

\subsubsection{Deep Clifford encodings}

A constant depth $U(x,\alpha)$ is not the only case in which the Hamiltonian embedding model allows for an entirely classical inference after training. As an example, let us consider an encoding unitary of the form $U(x,\alpha)=U_1(x,\alpha)VU_2(\alpha)$, where $U_2(\alpha)$ is independent of the input $x$, $V$ is a deep Clifford circuit, and $U_1(x,\alpha)$ is a shallow-depth unitary. Similarly to the Pauli-surrogate case, we can define the dressed observable obtained by conjugating $O(y)$ with the shallow part of the encoding,
\begin{equation}
\widetilde O_1(x,y)
=
U_1^\dagger(x,\alpha^\star)O(y)U_1(x,\alpha^\star)
=
\sum_{i=1}^{K} c_i(x,\alpha^\star)P_i,
\end{equation}
where $K=\operatorname{poly}(n)$ and the $P_i$ are local Pauli observables. Since $V$ is a Clifford circuit, conjugation by $V$ maps every Pauli observable to another Pauli observable,
\begin{equation}
V^\dagger P_i V = Q_i.
\end{equation}
Although the resulting operators $Q_i$ may have support on a number of qubits growing with the system size, the number of terms in the decomposition remains polynomial. The prediction probability rule of \cref{eq: probability of y} can therefore be written as
\begin{equation}
P_{\psi_{{\scriptscriptstyle\mathcal{T}},\alpha^\star},\alpha^\star}(x,y)
=
\sum_{i=1}^{K}
c_i(x,\alpha^\star)
m_i,
\qquad
m_i
=
\bra{\psi_{{\scriptscriptstyle\mathcal{T}},\alpha^\star}}
U_2^\dagger(\alpha^\star)Q_iU_2(\alpha^\star)
\ket{\psi_{{\scriptscriptstyle\mathcal{T}},\alpha^\star}}.
\end{equation}
Crucially, since $U_2$ is independent of $x$, the coefficients $m_i$ need to be estimated only once after training. Since the operators $Q_i$ may have large Pauli weight, product-Pauli classical shadows may not be efficient. However, as only polynomially many such observables appear, their expectation values can be measured directly on the quantum device and stored classically.
Predictions for new inputs can then be obtained entirely classically by computing the coefficients $c_i(x,\alpha^\star)$ and evaluating the above sum.

\subsubsection{Dequantizability of the model}\label{sec: dequantizability surrogate}
The existence of efficient classical surrogates for the trained quantum model raises the natural question of whether the full learning procedure can itself be reproduced classically, potentially eliminating the need for quantum computation altogether. For example, let us consider the Pauli surrogate. Suppose that the dressed observable $\tilde O$ admits a decomposition into polynomially many Pauli operators, as in Eq.~\eqref{eq: decomposition Otilde}. The prediction function of the trained quantum model can then be written as
\begin{equation}
    \tilde f_{\alpha^\star}(x)
    =
    \sum_{i}^{K} c_i(x,\alpha^\star)m_i(\alpha^\star),
    \label{eq:trained_quantum_model}
\end{equation}
where $m_i(\alpha^\star)=\bra{\psi_{{\scriptscriptstyle\mathcal{T}},\alpha^\star}}P_i\ket{\psi_{{\scriptscriptstyle\mathcal{T}},\alpha^\star}}$ are the expectation values of the Pauli operators $P_i$ on the learned ground state, while $c_i(x,\alpha^\star)$ are the coefficients appearing in the Pauli decomposition of the dressed observable as in Eq~\eqref{eq: decomposition Otilde}. 

We can then define a fully classical surrogate of the form
\begin{equation}\label{eq: pauli surrogate base}
    g_{\alpha}(x)
    =
    \sum_i^K c_i(x,\alpha)\tilde m_i(\alpha),
\end{equation}
where the coefficients $\tilde{m}_i(\alpha)\in \mathbb{R}$ are treated as trainable classical parameters and are not required to correspond to expectation values of Pauli operators on any quantum state.

In the quantum model of \cref{eq:trained_quantum_model}, the moments $m_i(\alpha^\star)$ are not independent parameters, but must be jointly compatible with a single physical quantum state. These compatibility constraints can in principle act as a useful inductive bias, although they do not by themselves imply either easier optimization or improved generalization relative to classical alternatives. To investigate this question, in Appendix~\ref{app: provable sep surrogate} we construct a learning problem for which the benefit of these physical compatibility constraints can be established analytically, yielding a provable separation in the number of training examples required by Hamiltonian embedding and a corresponding classically regularized surrogate. We complement this result in Appendix~\ref{app: counter surrogate} with a finite-size numerical example exhibiting the same qualitative behavior. Importantly, both tasks are classically tractable, and the purpose of these examples is to isolate the generalization advantage of the physical compatibility constraints imposed by the learned ground state rather than to demonstrate a computational quantum advantage.

A more constrained surrogate can be obtained when each dressed readout term $\tilde A_j=U(x,\alpha)^\dagger P_jU(x,\alpha)$ in Eq.~\eqref{eq: decomposition Otilde} is supported on a fixed region $L_j$ of the $n$ qubits, independent of $x$ and $\alpha$, with $|L_j|=O(\log n)$. Writing the observable $A$ in Eq.~\eqref{eq: observable O} as $A=\sum_j^K a_jP_j$, we define a classical surrogate as
\begin{equation}\label{eq: density surrogate}
g_{\sigma,\alpha }(x)
=\sum_j^K a_j\operatorname{Tr}\left[
\sigma_j \tilde A_j
\right],
\end{equation}
where each $\sigma_j$ is a density matrix on the fixed subsystem $L_j$, with no global compatibility constraints, and $\tilde A_j(x,\alpha)$ has support on $L_j$. At fixed $\alpha$, from the energy-loss definition in Eq.~\eqref{eq:cost-function}, the empirical energy becomes $\widehat C_{\mathcal T}(\sigma,\alpha)
=\frac12-\frac12\sum_{j=1}^{K}a_j \operatorname{Tr}[\sigma_jB_j(\alpha)]$, where $B_j(\alpha)=N^{-1}\sum_{\ell=1}^{N}s_\ell\tilde A_j(x_\ell,\alpha)$ and $s_\ell=(-1)^{y_\ell}$. Since the matrices $\sigma_j$ can be optimized independently, energy minimization reduces to $K$ eigenproblems by choosing each $\sigma_j$ as a projector onto a top eigenvector of $a_jB_j(\alpha)$. These matrices have dimension at most $2^{|L_j|}=\operatorname{poly}(n)$, so the inner optimization runs in polynomial-time when the dressed terms can be computed efficiently and achieves an energy lower or equal to the ground-state energy of the original embedding Hamiltonian $H_{\mathcal T}(\alpha)$. Under $\sum_{j=1}^{K}|a_j|\le1$ and uniform operator-norm Lipschitz bounds on the individual dressed terms $\tilde A(x,\alpha )$, the same proof strategy of Theorem~\ref{thm: energy gen bound} extends to this relaxation guaranteeing also the same generalization bounds. However, we note that this surrogate construction may not remain efficient when the encoding unitary $U(x,\alpha)$ is such that the supports of each dressed term $\tilde A_j(x,\alpha)$ depend on $x$. In this case, the support of their union over all training points may involve all $n$ qubits, making the dimension $2^{|L_j|}$ of $\sigma_j$ scale exponentially. For instance, the construction in Appendix~\ref{app: provable sep surrogate} also exhibits such input-dependent supports, providing an example in which the efficiency guarantee for this surrogate no longer applies.  Finally, we note that the polynomial-time training statements of the surrogates do not cover the outer optimization over $\alpha$, which is not convex and could benefit from the inductive bias induced by the ground state of the embedding Hamiltonian.

In particular, a separate question is whether the full Hamiltonian embedding learning procedure can itself be efficiently reproduced classically. The existence of efficient classical surrogates for inference does not, in general, imply that the corresponding model can also be learned efficiently classically. In particular, one natural route beyond the shallow-surrogate regime is to use encodings whose dressed observables have growing light cones and do not admit a polynomial-sized, efficiently evaluable feature representation. Increasing the depth of $U$ can therefore invalidate the particular Pauli- and Fourier-surrogate arguments used above. Depth alone, however, is not sufficient, since structured deep circuits may remain classically tractable and low-bond-dimension learned states may still permit efficient tensor-network evaluation.
The relevant question is whether the task-dependent moments defining the classifier can be learned or computed classically, rather than circuit depth or entanglement alone. The construction in the following section provides an explicit setting in which the trained hypothesis is efficiently
evaluable but learning it classically is nevertheless conjectured to be hard under cryptographic assumptions.

\subsection{Encoding a classically hard problem in the Hamiltonian embedding model}\label{sec: hard problem}

When introducing a new quantum machine learning method, it is natural to ask whether it can be applied to problems that lie beyond the reach of classical algorithms. In the Hamiltonian embedding framework, the training procedure requires finding the ground state of an encoding Hamiltonian. Since ground-state preparation is, in general, hard even for quantum computers, it is not immediately clear whether this approach can provide a rigorous quantum speedup. In this section we present a learning task that is believed to be classically intractable, yet can be expressed within our Hamiltonian-embedding framework and solved using a quantum computer. The learning problem is the binary classification task described in~\cite{jerbi2024shadows}, where input $x\in\mathbb{Z}_N=\{0,1,...,N-1\}$ are labeled in two classes $y=\{0,1\}$ by the following labeling function:
\begin{equation}\label{eq: dcr_labeling_function}
y(x)=f_{N,s}(x)=
\begin{cases}
1, & \text{if } \sqrt[3]{x}\mod N \in \left[s,\, s+\dfrac{N-1}{2}\right],\\[6pt]
0, & \text{otherwise}.
\end{cases}
\end{equation}
where $N=pq$ is a large $n$-bit integer product of the two primes $p$ and $q$ and the interval in \cref{eq: dcr_labeling_function} is interpreted cyclically in $\mathbb{Z}_N$.
In particular, the labeling function partitions inputs into two classes according to whether their discrete cube root (DCR) modulo $N$ lies in a specified half-interval, determined by $s$. Under the DCR assumption, computing discrete cube roots modulo large semiprimes $N$ is believed to be classically intractable~\cite{kearns1994introduction}, and the associated learning problem is also believed to be classically hard~\cite{jerbi2024shadows}. Notably, the DCR function is a trapdoor function: there exists a key $d\in\mathbb{Z}_N$ that allows one to evaluate the function efficiently. More precisely, for the DCR function, there exists a $d$ such that $\sqrt[3]{x}\bmod N = x^d \bmod N$.
Since modular exponentiation can be computed efficiently classically, evaluating the DCR function on different inputs $x$ becomes easy once $d$ is known. However, determining the key $d$ requires factoring the integer $N$, which is believed to be computationally hard for classical algorithms for large integers $N$.

The Hamiltonian embedding framework can reproduce the DCR labeling function in \cref{eq: dcr_labeling_function} as its hypothesis function. More precisely, one can construct an encoding Hamiltonian whose ground state induces, through the prediction rule in \cref{eq: prediction}, a hypothesis $\widetilde{f}$ that well approximates the DCR labeling function. In particular, we consider the following encoding unitary $U(x)$ defined over the Hilbert space $\mathcal{H}=\mathrm{span}\{|d,s\rangle : d,s\in \mathbb Z_N\}\otimes \mathbb C^2$ and the observable $O(y)$:
\begin{equation}\label{eq: U(x) DCR}
   U(x)\,|d,s\rangle|b\rangle
=
|d,s\rangle\,|b\oplus g_{d,s}(x)\rangle \quad \&\quad O(y)=I\otimes |y\rangle\langle y|,
\qquad x,d,s\in\mathbb{Z}_N,y\in\{0,1\}.
\end{equation}
where $g_{d,s}$ is defined as:
\begin{equation}\label{eq: dcr_key_function}
g_{d,s}(x)=
\begin{cases}
1, & \text{if } x^d\mod N \in \left[s,\, s+\dfrac{N-1}{2}\right],\\[6pt]
0, & \text{otherwise}
\end{cases}
\end{equation}
and it is classically computable as modular exponentiation can be computed efficiently. We then construct the embedding Hamiltonian $H_{\text{DCR}}$ as   
\begin{equation}\label{eq: dcr_hamiltonian}
    H_{\text{DCR}}=\sum_\ell H(x_\ell,y_\ell).
\end{equation}
We note that the encoding Hamiltonian $H_{\mathrm{DCR}}$ admits an efficient classical description, since the encoding unitary $U$ in \cref{eq: U(x) DCR} can be implemented by a polynomial-size classical reversible circuit, so that each Hamiltonian term associated with a training point $(x_\ell,y_\ell)\in\mathcal{T}$ can be efficiently specified as:
\begin{equation}\label{eq: Hz DCR}
H(x_\ell,y_\ell)
=
I-U(x_\ell)^\dagger O(y_\ell)U(x_\ell)+I\otimes |1\rangle\langle 1|.
\end{equation}
The ground space of $H_{\mathrm{DCR}}$ contains states $\ket{d,s,0}$ associated with hypotheses $g_{d,s}$ that are consistent with the training data. As shown in Appendix~\ref{app: DCR}, a quantum algorithm can efficiently find a ground state $\ket{d^\star,s',0}$, whose corresponding hypothesis $g_{d^\star,s'}$ has small population error on the DCR learning task. In particular, following the construction in Section~\ref{sec: ham embedding model}, for a new point $x$ the predicted label is the value of $y$ that maximizes $\operatorname{Tr}\!\left[
O(y)\,
U(x)\,
|d^\star,s',0\rangle\langle d^\star,s',0|
\,U(x)^\dagger
\right]$.  Therefore, we can prove the following theorem:

\begin{theorem}[Encoding of a hard problem (informal)]\label{thm: dcr advantage}
There exists a classification problem for which the Hamiltonian embedding framework yields a classically efficiently describable data-embedding Hamiltonian whose ground states induce, through the prediction rule in Eq.~\eqref{eq: prediction rule}, hypotheses with arbitrarily small classification error. Moreover, a quantum computer can efficiently find a corresponding ground state and construct the associated hypothesis. In contrast, assuming the classical hardness of computing discrete cube roots, no efficient classical learner can achieve the same guarantee with non-negligible probability.
\end{theorem}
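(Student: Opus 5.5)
We organize the proof into three parts: structure of the ground space of $H_{\mathrm{DCR}}$, efficient quantum preparation of a good ground state, and classical hardness by reduction.

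\textbf{Ground space.} Since $U(x)$ acts diagonally on $|d,s\rangle$ and only flips the label qubit, each term in \cref{eq: Hz DCR} is diagonal in the basis $|d,s,b\rangle$. Its eigenvalue is $\mathbb{1}[g_{d,s}(x_\ell)\oplus b\neq y_\ell]+b$. Hence $H_{\mathrm{DCR}}$ is a classical (diagonal) Hamiltonian. Its energy on $|d,s,0\rangle$ equals the number of training errors of $g_{d,s}$, and every state with $b=1$ is penalized. The true key $d_0$ (with $x^{d_0}=\sqrt[3]{x}\bmod N$) and the true shift $s$ give energy zero. So the ground space is exactly the span of the $|d,s,0\rangle$ whose hypotheses $g_{d,s}$ are consistent with $\mathcal T$. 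For every such basis state, the prediction rule returns $g_{d,s}(x)$ deterministically.

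\textbf{Efficient quantum preparation.} We will not search the exponentially large ground space. Instead we construct a ground state directly, following the learner of~\cite{jerbi2024shadows}. Use Shor's algorithm to factor $N$, which yields $d^\star=3^{-1}\bmod \varphi(N)$, the exact trapdoor key. Then compute the cube roots $x_\ell^{d^\star}$ of the training inputs classically. Finally, pick a shift $s'$ consistent with the labels; for example, take $s'$ to be the smallest cube root labeled $1$ whose cyclic half-interval covers all label-$1$ roots and no label-$0$ roots. This takes $O(N_{\mathrm{train}}\log N_{\mathrm{train}})$ time after sorting. The state $|d^\star,s',0\rangle$ has zero energy, so it is a ground state, and it is prepared with a polynomial-size circuit.

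\textbf{Generalization.} With the key fixed, the remaining class $\{g_{d^\star,s}\}_s$ consists of half-intervals on a cycle, which has VC dimension at most $2$. The standard realizable PAC bound then gives population error at most $\epsilon$ with probability $1-\delta$ once $N_{\mathrm{train}}=O(\epsilon^{-1}\log(1/\delta))$. One could alternatively invoke \cref{thm: energy gen bound} with $M=0$ restricted to the relevant $d^\star$ sector, but the VC route is cleaner. Polynomial sample sizes therefore suffice for arbitrarily small error.

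\textbf{Classical hardness.} This part reduces to the result of~\cite{jerbi2024shadows}, building on Kearns--Valiant~\cite{kearns1994introduction}. An efficient classical learner achieving small error with non-negligible probability would yield an efficient algorithm for inverting cube roots on a non-negligible fraction of inputs. The reduction works because labeled examples $(x,f(x))$ can be generated classically by sampling $r$ and setting $x=r^3$. Interval membership of cube roots can then be amplified to full root recovery by a binary-search/hardcore-predicate argument. This contradicts the DCR assumption.

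The main obstacle is this hardness reduction, in particular handling learners that only succeed on average over $x$ and with non-negligible probability. Those require the random self-reducibility of RSA and an Alexi--Chor--Goldreich--Schnorr style reconstruction. We would cite rather than reprove it, checking only that our input distribution and labeling match theirs exactly.
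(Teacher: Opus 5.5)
Your proposal follows essentially the same route as the paper. Both use the diagonal energy analysis, in which the energy of $|d,s,0\rangle$ counts training errors; both use Shor's algorithm to obtain $d^\star$ and then a consistent shift to prepare the zero-energy state $|d^\star,s',0\rangle$; and both cite \cite{jerbi2024shadows,alexi1988rsa} for classical hardness. Your VC-dimension argument stands in for the paper's appeal to \cite{jerbi2024shadows} for the shift guarantee. The paper additionally uses a union bound over the at most $N^2$ hypotheses $g^{d}_{N,s}$ to show that every consistent basis state $|d,s,0\rangle$ generalizes, not only those with $d=d^\star$.

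One slip needs fixing: your example rule of starting the half-interval at a label-$1$ root can fail. Take target $[0,h]$ with $h=(N-1)/2$, a label-$1$ root at $3$ and a label-$0$ root at $h+2$; then $[3,3+h]$ contains $h+2$. Instead, test the $O(N_{\mathrm{train}})$ candidates $s\in\{r-h\}\cup\{z+1\}$, where $r$ ranges over label-$1$ cube roots and $z$ over label-$0$ ones. This set always contains the lower endpoint of the nonempty cyclic interval of consistent shifts.
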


The full specification of the result, together with the formal proof, can be found in Appendix~\ref{app: DCR}.
Finally, we note a distinction between Hamiltonian embedding and some earlier examples of quantum models that efficiently solve learning tasks believed to be classically intractable~\cite{liu2021rigorous, anschuetz2026arbitrary, molteni2026exponential, zhao2026exponential, gil-fuster2025on}, concerning where the hard computation takes place. In these examples, the quantum models are explicitly constructed to incorporate the correct hard computation required to solve the specific task into their circuit architectures. In the discrete-logarithm-based constructions of~\cite{liu2021rigorous,gil-fuster2025on}, for example, the encoding circuits incorporate discrete-logarithm computations, while the remaining training optimization then becomes easy. In our construction, the hard computation instead takes place during data-dependent quantum optimization through ground-state preparation, while the encoding circuits themselves perform efficiently computable classical operations. Here ground-state preparation is treated as a subroutine.
The encoding circuits therefore need not be designed a priori to perform a specific hard computation and, in fact, may also be classically tractable, as is the case in the classically hard learning task presented here.

\section{Experiments}\label{sec: numerics}

We next test whether the optimization in Algorithm~\ref{alg:ham-embed} remains feasible at nontrivial system sizes and whether the resulting low-energy states achieve low classification error on a standard benchmark dataset.
In particular, we evaluate the Hamiltonian embedding algorithm on the binary classification task of Fashion-MNIST dress and shirt images, with inputs reduced by PCA to $n$ features for an $n$-qubit model. We first train the model using a classical iterative solver as a proxy~\cite{kirby2026observationimprovedaccuracyclassical} to approximate the ground state at each iteration. This benchmark uses $100$ random seeds for each of the $25$ even system sizes $n=2,4,\ldots,50$, with each seed determining both the data split and the initial model parameters. We compare test accuracy against classical machine-learning baselines and the fidelity and projected quantum kernels in~\cite{huang2021power}.

We then implement the full hybrid training procedure, using sample-based Krylov quantum diagonalization on IBM Heron processors for ground-state approximation. The selection criteria for the hardware runs are given below and in Appendix~\ref{app: qpu}.

\subsection{The model instance used in the experiments}\label{sec: numerics setup}

All Hamiltonian embedding experiments use the same model architecture and training objective introduced in Section~\ref{sec: ham embedding model}, which we summarize here and discuss in detail in Appendix~\ref{app: numerics setup}. The encoding unitary is the constant-depth circuit in Eq.~\eqref{eq: encoding U} with $K=3$ layers. We choose $K=3$ to explore nontrivial Hamiltonians while keeping the number of Pauli terms computationally tractable. Each layer applies three single-qubit rotations, generated by $Z$, $Y$, and $Z$, to every qubit, followed by a ring of nearest-neighbor two-qubit $ZZ$ rotations. The circuit therefore contains $12n$ parameterized gates, whose trainable coefficients  are $\alpha\in\mathbb{R}^{12n\times(d+1)}$. For an input $x\in\mathbb{R}^d$, the angle of gate $g$ is $\theta_g(x,\alpha)=\sum_{r=1}^{d}\alpha_{g,r}x_r+\alpha_{g,0}$. This is equivalent to the encoding in Eq.~\eqref{eq: encoding U} with a constant feature appended to the input. We use the one-local readout
$A=\frac{1}{n}\sum_{i=1}^n Z_i$ from Section~\ref{sec: construction model}. The outer optimization minimizes the energy loss defined in Eq.~\eqref{eq: energy_loss}, which is sufficient to achieve good classification performance in our experiments. 

Two features make the experiments feasible at these system sizes. First, the light-cone argument in Section~\ref{sec: construction model} shows that each data point contributes a linear combination of the same $\mathcal{O}(n)$ bounded-weight Pauli observables to the embedding Hamiltonian $H_{\T}(\alpha)$. We obtain these strings by propagating the readout observable through the circuit in the Heisenberg picture and combining identical terms.  Second, as discussed in Section~\ref{sec: barren plateaus and gradient}, the state is held fixed during each parameter update, so the gradient is given by the expectation value of $\partial_\alpha H_{\T}(\alpha)$ in the current state.

We optimize the parameters using Adam~\cite{kingma2015adam} together with an $\ell_2$ penalty. At each training epoch, we evaluate the model on the validation set. After training, we keep the variational parameters and associated learned quantum state from the epoch with the highest validation accuracy, and evaluate the resulting model once on the held-out test~set.

We first study Hamiltonian embedding framework using a classical method for approximating ground states that we call ``classical iterative solver''~\cite{kirby2026observationimprovedaccuracyclassical}.
In these simulations, we approximate the ground state at each training iteration using an iterative solver based on repeated applications of $H_{\T}(\alpha)$, deterministically retaining the bitstrings with the highest probabilities, akin to a power iteration with truncation. We reserve the term ``HE'' for the hardware runs using the real-time SKQD protocol.

For the hardware runs, we approximate the ground state at each training iteration using SKQD~\cite{yu2025quantum}. Starting from a reference state, we generate a sequence of Krylov states by Trotterized real-time evolution under the current training Hamiltonian and measure them in the computational basis. The sampled bitstrings span a reduced subspace onto which we project $H_{\T}(\alpha)$ and diagonalize it classically.

The classical iterative solver and SKQD may select different basis states and therefore produce different reduced subspaces. Both methods nevertheless project the full embedding Hamiltonian onto the selected subspace, so the lowest projected energy remains a variational upper bound on the true ground-state energy. The classical solver provides a useful reference for the optimization and classification performance of Hamiltonian embedding, although its performance need not predict that of SKQD.

\subsection{Results}\label{sec: numerics fashion}

\begin{figure}[t]
  \centering
  \includegraphics[width=\textwidth]{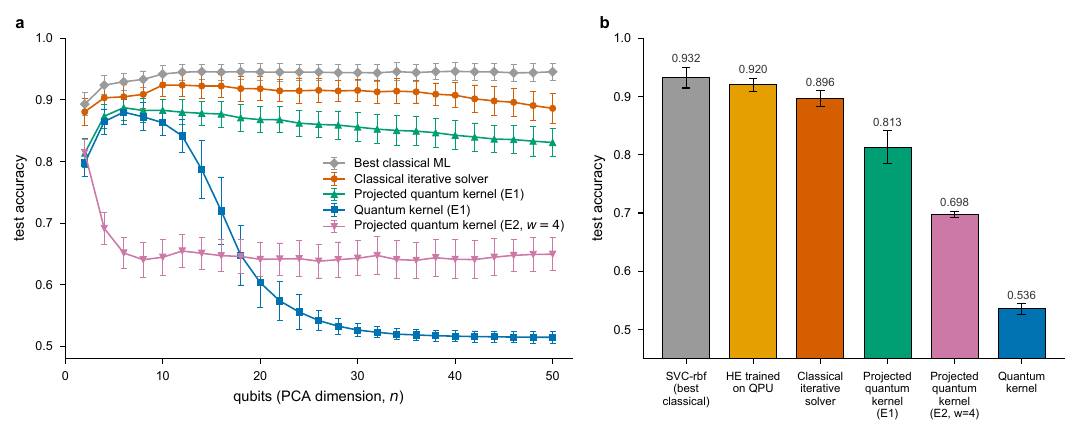}
  \caption{
  a) Test accuracy versus system size for various methods on a binary classification problem from the Fashion-MNIST dataset.
 The methods are: best-performing classical baseline (gray), classical iterative solver (orange) used in HE framework, quantum kernel based on a product-state embedding (blue), $1$-reduced density matrix ($1$-RDM) projected quantum kernel with the same embedding \cite{huang2021power} (green), and the $1$-RDM projected quantum kernel based on the E2 embedding of Ref.~\cite{huang2021power} (pink).
  b) Held-out test accuracy of the five runs with the highest validation accuracy at $n=50$. In this plot, we additionally show the accuracy obtained by the HE model trained on hardware (yellow).
  }
  \label{fig: qpu train overlay}
\end{figure}

We compare Hamiltonian embedding with the fidelity and projected quantum kernels studied in Ref.~\cite{huang2021power}, as well as with classical machine-learning methods. We use the same image-classification benchmark studied in Ref.~\cite{huang2021power}, which is a binary classification of Fashion-MNIST images~\cite{xiao2017fashion} into the classes \emph{dress} and \emph{shirt}. We first use principal component analysis (PCA) to reduce the input dimension to the number of qubits $n$. For each run, a pool of $800$ samples is divided into $448$ training, $112$ validation, and $240$ held-out test examples. We first split the data $70/30$ into training and test sets, and then split the training set $80/20$ into the final training and validation sets. At each $n$, all methods use the same random seeds and therefore the same data splits. 

Figure~\ref{fig: qpu train overlay}(a) reports the mean test accuracy across the $100$ runs at each system size, with error bars showing one standard deviation. We note that the best classical ML baseline, which corresponds to the support vector classifier with a radial basis function kernel, performs the best, reaching an accuracy of approximately $0.946$.
The classical iterative solver, which may be viewed as a proxy for the Hamiltonian embedding framework, achieves a higher mean test accuracy than the best reported quantum kernel baseline, with margins ranging from $0.018$ to $0.067$. The mean accuracy increases from  $0.880$ at $n=2$ to approximately $0.924$ at intermediate system sizes before gradually decreasing to $0.886$ at $n=50$.
The fidelity kernel based on the product-state embedding E1 from~\cite{huang2021power} decreases from $0.798$ to $0.515$, whereas the projected $1$-RDM kernel based on the same embedding decreases more gradually from a maximum of $0.887$ to $0.831$.

We also study a projected quantum kernel constructed from the $1$-RDMs of feature states generated using a simplified version of the E2 embedding from Refs.~\cite{havlicek2019supervised,huang2021power}. These feature states are prepared using circuits inspired by the instantaneous quantum polynomial-time (IQP) model, which we call IQP-like circuits. To make their classical simulation tractable, we restrict the interactions to bands of neighboring qubits of width $w=4$, allowing the circuits and resulting kernels to be evaluated using matrix product states. The test accuracy of this kernel decreases rapidly with $n$ and levels off near $0.65$, above that of the fidelity kernel at large $n$.

We next turn to the hardware results. In these runs, we prepare the SKQD Krylov states described above on IBM Heron processors and measure them in the computational basis. We then project $H_{\T}(\alpha_t)$ in the sampled bitstring subspace and diagonalize it classically. We evaluate the gradients and perform the Adam updates classically, following Algorithm~\ref{alg:ham-embed}. 

We select five hardware runs based on the results obtained using the classical iterative solver to select data splits and initial model parameters $\alpha$ that lead to high validation accuracy.
As shown in \cref{fig: qpu train overlay}(b), Hamiltonian embedding is competitive with the best classical baseline and outperforms the other quantum methods considered. All $n=50$ runs were performed on \texttt{ibm\_pittsburgh}, a $156$-qubit IBM Heron~r2 processor. The $50$ data qubits were mapped to a linear chain chosen greedily to minimize two-qubit gate~errors. Moreover, we apply dynamical decoupling and measurement twirling but no additional error mitigation. 

 These experiments demonstrate that Hamiltonian embedding models can be studied at scale on current quantum processors.

\section{Discussion}

In this work, we introduced Hamiltonian embedding, a QML framework for classification in which the entire training set is encoded in a single parameterized Hamiltonian. Training minimizes the ground-state energy or a more general loss function over the encoding parameters, with the loss function estimated by a quantum algorithm at each iteration. The ground state of the optimized Hamiltonian then defines a classifier for unseen inputs. For shallow encoding circuits, measurements of the prepared states yield classical representations that allow the loss and its gradients to be evaluated efficiently. Moreover, the representation of the final state supports inference without further quantum computation. Therefore, the role of quantum processors is restricted to approximating ground states during training. We further showed, analytically and numerically, that the physical compatibility constraints imposed by the learned state can provide a useful inductive bias for certain tasks. Finally, under standard cryptographic assumptions, Hamiltonian embedding can encode certain learning tasks that are intractable for efficient classical algorithms.

For generic learning tasks, however, it remains unclear when the inductive bias of Hamiltonian embedding offers an advantage over alternative classical or quantum methods. This question is particularly difficult for classical datasets, where symmetry-based arguments may not apply directly \cite{glick2024covariant,ragone2022representation}, and will likely require heuristic model design together with systematic, task-specific benchmarking. Practical use also requires the model to remain trainable and to generalize as the system size grows. For fixed generalization error, our analytical bounds show that the required number of training samples scales linearly with the number of qubits and trainable parameters, up to logarithmic factors. Our numerical results provide complementary evidence for trainability. On binary Fashion-MNIST, Hamiltonian embedding can be trained at scale, including implementations on IBM quantum processors, while maintaining competitive test performance as the system size increases. In the same benchmarks, the generalization performance of the quantum-kernel method of Ref.~\cite{huang2021power} deteriorates with system size, whereas Hamiltonian embedding also achieves higher accuracy than the projected quantum kernels designed to mitigate this behavior.

Hamiltonian embedding therefore offers a promising alternative to existing QML models. Several aspects of the framework remain unexplored. In particular, the freedom to choose both the encoding and the readout observable may enable constructions with improved trainability or expressivity. A central open question is which learning problems are naturally suited to Hamiltonian embedding and whether it can offer advantages over classical methods beyond cryptographically motivated constructions.

\section*{Acknowledgments}
We acknowledge discussions with Aram Harrow, Will Kirby, Sarah Sheldon, Kevin Smith, and Minh Tran. R.M. and V.D. acknowledge the support of the Dutch Research Council (NWO) through the NWO/NWA project Divide and Quantum (Project No. 1389.20.241). V.D. also acknowledges the support of the European Union’s Horizon Europe program through the ERC CoG BeMAIQuantum (Grant No. 101124342). 
This work was also supported by the Dutch National Growth Fund (NGF) as part of the QDNL program.

We used large language models to assist with numerical simulations and parts of the analytical work, including selected steps in the proofs. The authors critically reviewed and verified all incorporated suggestions and take full responsibility for the accuracy and integrity of this work.

\bibliographystyle{alphaurl}
\bibliography{Ref}

\newcommand{\etalchar}[1]{$^{#1}$}
\begin{thebibliography}{CMM{\etalchar{+}}26}

\bibitem[ACGS88]{alexi1988rsa}
Werner Alexi, Benny Chor, Oded Goldreich, and Claus~P. Schnorr.
\newblock {{RSA}} and {{Rabin Functions}}: {{Certain Parts}} are as {{Hard}} as the {{Whole}}.
\newblock {\em SIAM Journal on Computing}, 17(2):194--209, 1988.
\newblock \href {https://doi.org/10.1137/0217013} {\path{doi:10.1137/0217013}}.

\bibitem[AG26]{anschuetz2026arbitrary}
Eric~R. Anschuetz and Xun Gao.
\newblock Arbitrary {{Polynomial Separations}} in {{Trainable Quantum Machine Learning}}.
\newblock {\em Quantum}, 10:1976, 2026.
\newblock \href {https://doi.org/10.22331/q-2026-01-20-1976} {\path{doi:10.22331/q-2026-01-20-1976}}.

\bibitem[AL99]{abrams1999quantumalgorithmprovidingexponential}
Daniel~S. Abrams and Seth Lloyd.
\newblock Quantum {{Algorithm Providing Exponential Speed Increase}} for {{Finding Eigenvalues}} and {{Eigenvectors}}.
\newblock {\em Physical Review Letters}, 83(24):5162--5165, 1999.
\newblock \href {https://doi.org/10.1103/PhysRevLett.83.5162} {\path{doi:10.1103/PhysRevLett.83.5162}}.

\bibitem[BLSF19]{benedetti2019parameterized}
Marcello Benedetti, Erika Lloyd, Stefan Sack, and Mattia Fiorentini.
\newblock Parameterized quantum circuits as machine learning models.
\newblock {\em Quantum Science and Technology}, 4(4):043001, 2019.
\newblock \href {https://doi.org/10.1088/2058-9565/ab4eb5} {\path{doi:10.1088/2058-9565/ab4eb5}}.

\bibitem[BWP{\etalchar{+}}17]{biamonte2017quantum}
Jacob Biamonte, Peter Wittek, Nicola Pancotti, Patrick Rebentrost, Nathan Wiebe, and Seth Lloyd.
\newblock Quantum machine learning.
\newblock {\em Nature}, 549(7671):195--202, 2017.
\newblock \href {https://doi.org/10.1038/nature23474} {\path{doi:10.1038/nature23474}}.

\bibitem[CAB{\etalchar{+}}21]{cerezo2021variational}
M.~Cerezo, Andrew Arrasmith, Ryan Babbush, Simon~C. Benjamin, Suguru Endo, Keisuke Fujii, Jarrod~R. McClean, Kosuke Mitarai, Xiao Yuan, Lukasz Cincio, and Patrick~J. Coles.
\newblock Variational quantum algorithms.
\newblock {\em Nature Reviews Physics}, 3(9):625--644, 2021.
\newblock \href {https://doi.org/10.1038/s42254-021-00348-9} {\path{doi:10.1038/s42254-021-00348-9}}.

\bibitem[CC25]{chang2025primer}
Su~Yeon Chang and M.~Cerezo.
\newblock A {{Primer}} on {{Quantum Machine Learning}}.
\newblock (arXiv:2511.15969), 2025.
\newblock \href {https://arxiv.org/abs/2511.15969} {\path{arXiv:2511.15969}}.

\bibitem[CEMM98]{cleve1998quantum}
R.~Cleve, A.~Ekert, C.~Macchiavello, and M.~Mosca.
\newblock Quantum algorithms revisited.
\newblock {\em Proceedings of the Royal Society of London. Series A: Mathematical, Physical and Engineering Sciences}, 454(1969):339--354, 1998.
\newblock \href {https://doi.org/10.1098/rspa.1998.0164} {\path{doi:10.1098/rspa.1998.0164}}.

\bibitem[CHC{\etalchar{+}}22]{caro2022generalization}
Matthias~C. Caro, Hsin-Yuan Huang, M.~Cerezo, Kunal Sharma, Andrew Sornborger, Lukasz Cincio, and Patrick~J. Coles.
\newblock Generalization in quantum machine learning from few training data.
\newblock {\em Nature Communications}, 13(1):4919, 2022.
\newblock \href {https://doi.org/10.1038/s41467-022-32550-3} {\path{doi:10.1038/s41467-022-32550-3}}.

\bibitem[CLG{\etalchar{+}}25]{cerezo2025does}
M.~Cerezo, Martin Larocca, Diego {Garc{\'i}a-Mart{\'i}n}, N.~L. Diaz, Paolo Braccia, Enrico Fontana, Manuel~S. Rudolph, Pablo Bermejo, Aroosa Ijaz, Supanut Thanasilp, Eric~R. Anschuetz, and Zo{\"e} Holmes.
\newblock Does provable absence of barren plateaus imply classical simulability?
\newblock {\em Nature Communications}, 16(1):7907, 2025.
\newblock \href {https://doi.org/10.1038/s41467-025-63099-6} {\path{doi:10.1038/s41467-025-63099-6}}.

\bibitem[CMM{\etalchar{+}}26]{candelori2026shallow}
Luca Candelori, Swarnadeep Majumder, Antonio Mezzacapo, Javier~Robledo Moreno, Kharen Musaelian, Santhanam Nagarajan, Sunil Pinnamaneni, Kunal Sharma, and Dario Villani.
\newblock Shallow-circuit supervised learning on a quantum processor.
\newblock {\em arXiv preprint arXiv:2601.03235}, 2026.

\bibitem[CSV{\etalchar{+}}21]{cerezo2021cost}
M.~Cerezo, Akira Sone, Tyler Volkoff, Lukasz Cincio, and Patrick~J. Coles.
\newblock Cost {{Function Dependent Barren Plateaus}} in {{Shallow Parametrized Quantum Circuits}}.
\newblock {\em Nature Communications}, 12(1):1791, 2021.
\newblock \href {https://doi.org/10.1038/s41467-021-21728-w} {\path{doi:10.1038/s41467-021-21728-w}}.

\bibitem[GGC{\etalchar{+}}24]{glick2024covariant}
Jennifer~R. Glick, Tanvi~P. Gujarati, Antonio~D. C{\'o}rcoles, Youngseok Kim, Abhinav Kandala, Jay~M. Gambetta, and Kristan Temme.
\newblock Covariant quantum kernels for data with group structure.
\newblock {\em Nature Physics}, 20(3):479--483, 2024.
\newblock \href {https://doi.org/10.1038/s41567-023-02340-9} {\path{doi:10.1038/s41567-023-02340-9}}.

\bibitem[GGPD25]{gil-fuster2025on}
Elies {Gil-Fuster}, Casper Gyurik, Adrian {P{\'e}rez-Salinas}, and Vedran Dunjko.
\newblock On the relation between trainability and dequantization of variational quantum learning models.
\newblock In {\em The Thirteenth International Conference on Learning Representations}, 2025.
\newblock URL: \url{https://openreview.net/forum?id=TdqaZbQvdi}.

\bibitem[HBK{\etalchar{+}}25]{henderson2025quantum}
Laura~J. Henderson, Kerstin Beer, Salini Karuvade, Riddhi Gupta, Angela White, and Sally Shrapnel.
\newblock Quantum advantage without exponential concentration: {{Trainable}} kernels for symmetry-structured data.
\newblock 2025.
\newblock \href {https://arxiv.org/abs/2509.14337} {\path{arXiv:2509.14337}}.

\bibitem[HBM{\etalchar{+}}21]{huang2021power}
Hsin-Yuan Huang, Michael Broughton, Masoud Mohseni, Ryan Babbush, Sergio Boixo, Hartmut Neven, and Jarrod~R. McClean.
\newblock Power of data in quantum machine learning.
\newblock {\em Nature Communications}, 12(1):2631, 2021.
\newblock \href {https://doi.org/10.1038/s41467-021-22539-9} {\path{doi:10.1038/s41467-021-22539-9}}.

\bibitem[HCT{\etalchar{+}}19]{havlicek2019supervised}
Vojtech Havlicek, Antonio~D. C{\'o}rcoles, Kristan Temme, Aram~W. Harrow, Abhinav Kandala, Jerry~M. Chow, and Jay~M. Gambetta.
\newblock Supervised learning with quantum enhanced feature spaces.
\newblock {\em Nature}, 567(7747):209--212, 2019.
\newblock \href {https://doi.org/10.1038/s41586-019-0980-2} {\path{doi:10.1038/s41586-019-0980-2}}.

\bibitem[HKP20]{huang2020predicting}
Hsin-Yuan Huang, Richard Kueng, and John Preskill.
\newblock Predicting many properties of a quantum system from very few measurements.
\newblock {\em Nature Physics}, 16(10):1050--1057, 2020.
\newblock \href {https://doi.org/10.1038/s41567-020-0932-7} {\path{doi:10.1038/s41567-020-0932-7}}.

\bibitem[JGM{\etalchar{+}}24]{jerbi2024shadows}
Sofiene Jerbi, Casper Gyurik, Simon~C. Marshall, Riccardo Molteni, and Vedran Dunjko.
\newblock Shadows of quantum machine learning.
\newblock {\em Nature Communications}, 15(1):5676, 2024.
\newblock \href {https://doi.org/10.1038/s41467-024-49877-8} {\path{doi:10.1038/s41467-024-49877-8}}.

\bibitem[KB15]{kingma2015adam}
Diederik~P. Kingma and Jimmy Ba.
\newblock Adam: {{A}} method for stochastic optimization.
\newblock In {\em International {{Conference}} on {{Learning Representations}}}, 2015.

\bibitem[KPM{\etalchar{+}}26]{kirby2026observationimprovedaccuracyclassical}
William Kirby, Bibek Pokharel, Javier~Robledo Moreno, Kevin~C. Smith, Sergey Bravyi, Abhinav Deshpande, Constantinos Evangelinos, Bryce Fuller, James~R. Garrison, Ben Jaderberg, Caleb Johnson, Petar Jurcevic, Su-un Lee, Simon Martiel, Mario Motta, Seetharami Seelam, Oles Shtanko, Kevin~J. Sung, Minh Tran, Vinay Tripathi, Kazuhiro Seki, Kazuya Shinjo, Han Xu, Lukas Broers, Tomonori Shirakawa, Seiji Yunoki, Kunal Sharma, and Antonio Mezzacapo.
\newblock Observation of {{Improved Accuracy}} over {{Classical Sparse Ground-State Solvers}} using a {{Quantum Computer}}.
\newblock (arXiv:2603.03496), 2026.
\newblock \href {https://arxiv.org/abs/2603.03496} {\path{arXiv:2603.03496}}.

\bibitem[KV94]{kearns1994introduction}
Michael~J. Kearns and Umesh Vazirani.
\newblock {\em An {{Introduction}} to {{Computational Learning Theory}}}.
\newblock The MIT Press, 1994.
\newblock \href {https://doi.org/10.7551/mitpress/3897.001.0001} {\path{doi:10.7551/mitpress/3897.001.0001}}.

\bibitem[LAT21]{liu2021rigorous}
Yunchao Liu, Srinivasan Arunachalam, and Kristan Temme.
\newblock A rigorous and robust quantum speed-up in supervised machine learning.
\newblock {\em Nature Physics}, 17(9):1013--1017, 2021.
\newblock \href {https://doi.org/10.1038/s41567-021-01287-z} {\path{doi:10.1038/s41567-021-01287-z}}.

\bibitem[LTW{\etalchar{+}}25]{larocca2025barren}
Mart{\'i}n Larocca, Supanut Thanasilp, Samson Wang, Kunal Sharma, Jacob Biamonte, Patrick~J. Coles, Lukasz Cincio, Jarrod~R. McClean, Zo{\"e} Holmes, and M.~Cerezo.
\newblock Barren plateaus in variational quantum computing.
\newblock {\em Nature Reviews Physics}, 7(4):174--189, 2025.
\newblock \href {https://doi.org/10.1038/s42254-025-00813-9} {\path{doi:10.1038/s42254-025-00813-9}}.

\bibitem[MBS{\etalchar{+}}18]{mcclean2018barren}
Jarrod~R. McClean, Sergio Boixo, Vadim~N. Smelyanskiy, Ryan Babbush, and Hartmut Neven.
\newblock Barren plateaus in quantum neural network training landscapes.
\newblock {\em Nature Communications}, 9(1):4812, 2018.
\newblock \href {https://doi.org/10.1038/s41467-018-07090-4} {\path{doi:10.1038/s41467-018-07090-4}}.

\bibitem[MGD26]{molteni2026exponential}
Riccardo Molteni, Casper Gyurik, and Vedran Dunjko.
\newblock Exponential quantum advantages in learning quantum observables from classical data.
\newblock {\em npj Quantum Information}, 12(1):19, 2026.
\newblock \href {https://doi.org/10.1038/s41534-025-01162-2} {\path{doi:10.1038/s41534-025-01162-2}}.

\bibitem[MN25]{mikelsen_SKQD2025}
Mathias Mikkelsen and Yuya~O. Nakagawa.
\newblock Quantum-selected configuration interaction with time-evolved state.
\newblock {\em Physical Review Research}, 7(4):043043, 2025.
\newblock \href {https://doi.org/10.1103/75pv-hbrx} {\path{doi:10.1103/75pv-hbrx}}.

\bibitem[PCGL20]{perez-salinas2020data}
Adri{\'a}n {P{\'e}rez-Salinas}, Alba {Cervera-Lierta}, Elies {Gil-Fuster}, and Jos{\'e}~I. Latorre.
\newblock Data re-uploading for a universal quantum classifier.
\newblock {\em Quantum}, 4:226, 2020.
\newblock \href {https://doi.org/10.22331/q-2020-02-06-226} {\path{doi:10.22331/q-2020-02-06-226}}.

\bibitem[QWZ{\etalchar{+}}23]{qi2023barren}
Han Qi, Lei Wang, Hongsheng Zhu, Abdullah Gani, and Changqing Gong.
\newblock The barren plateaus of quantum neural networks: Review, taxonomy and trends.
\newblock {\em Quantum Information Processing}, 22(12):435, 2023.
\newblock \href {https://doi.org/10.1007/s11128-023-04188-7} {\path{doi:10.1007/s11128-023-04188-7}}.

\bibitem[RBN{\etalchar{+}}22]{ragone2022representation}
Michael Ragone, Paolo Braccia, Quynh~T. Nguyen, Louis Schatzki, Patrick~J. Coles, Frederic Sauvage, Martin Larocca, and M.~Cerezo.
\newblock Representation {{Theory}} for {{Geometric Quantum Machine Learning}}.
\newblock 2022.
\newblock \href {https://arxiv.org/abs/2210.07980} {\path{arXiv:2210.07980}}.

\bibitem[RML14]{rebentrost2014quantum}
Patrick Rebentrost, Masoud Mohseni, and Seth Lloyd.
\newblock Quantum {{Support Vector Machine}} for {{Big Data Classification}}.
\newblock {\em Physical Review Letters}, 113(13):130503, 2014.
\newblock \href {https://doi.org/10.1103/PhysRevLett.113.130503} {\path{doi:10.1103/PhysRevLett.113.130503}}.

\bibitem[SBSW20]{schuld2020circuit}
Maria Schuld, Alex Bocharov, Krysta~M. Svore, and Nathan Wiebe.
\newblock Circuit-centric quantum classifiers.
\newblock {\em Physical Review A}, 101(3):032308, 2020.
\newblock \href {https://doi.org/10.1103/PhysRevA.101.032308} {\path{doi:10.1103/PhysRevA.101.032308}}.

\bibitem[Sch05]{schollwock2005density}
U.~Schollw{\"o}ck.
\newblock The density-matrix renormalization group.
\newblock {\em Reviews of Modern Physics}, 77(1):259--315, 2005.
\newblock \href {https://doi.org/10.1103/RevModPhys.77.259} {\path{doi:10.1103/RevModPhys.77.259}}.

\bibitem[SEM23]{schreiber2023classical}
Franz~J. Schreiber, Jens Eisert, and Johannes~Jakob Meyer.
\newblock Classical {{Surrogates}} for {{Quantum Learning Models}}.
\newblock {\em Physical Review Letters}, 131(10):100803, 2023.
\newblock \href {https://doi.org/10.1103/PhysRevLett.131.100803} {\path{doi:10.1103/PhysRevLett.131.100803}}.

\bibitem[Sho97]{shor1997polynomial}
Peter~W. Shor.
\newblock Polynomial-{{Time Algorithms}} for {{Prime Factorization}} and {{Discrete Logarithms}} on a {{Quantum Computer}}.
\newblock {\em SIAM Journal on Computing}, 26(5):1484--1509, 1997.
\newblock \href {https://doi.org/10.1137/S0097539795293172} {\path{doi:10.1137/S0097539795293172}}.

\bibitem[SKI{\etalchar{+}}25]{Sugisaki_SKQD2025}
Kenji Sugisaki, Shu Kanno, Toshinari Itoko, Rei Sakuma, and Naoki Yamamoto.
\newblock Hamiltonian simulation-based quantum-selected configuration interaction for large-scale electronic structure calculations with a quantum computer.
\newblock {\em Physical Chemistry Chemical Physics}, 27(38):20869--20884, 2025.
\newblock \href {https://doi.org/10.1039/d5cp02202a} {\path{doi:10.1039/d5cp02202a}}.

\bibitem[SP21]{schuld2021supervised}
Maria Schuld and Francesco Petruccione.
\newblock {\em Quantum {{Models}} as {{Kernel Methods}}}, pages 217--245.
\newblock Springer International Publishing, Cham, 2021.
\newblock \href {https://doi.org/10.1007/978-3-030-83098-4_6} {\path{doi:10.1007/978-3-030-83098-4_6}}.

\bibitem[SSM21]{schuld2021effectdataencodingexpressive}
Maria Schuld, Ryan Sweke, and Johannes~Jakob Meyer.
\newblock The effect of data encoding on the expressive power of variational quantum machine learning models.
\newblock {\em Physical Review A}, 103(3):032430, 2021.
\newblock \href {https://doi.org/10.1103/PhysRevA.103.032430} {\path{doi:10.1103/PhysRevA.103.032430}}.

\bibitem[Tro12]{tropp2012user}
Joel~A. Tropp.
\newblock User-{{Friendly Tail Bounds}} for {{Sums}} of {{Random Matrices}}.
\newblock {\em Foundations of Computational Mathematics}, 12(4):389--434, 2012.
\newblock \href {https://doi.org/10.1007/s10208-011-9099-z} {\path{doi:10.1007/s10208-011-9099-z}}.

\bibitem[UK26]{umeano2026ground}
Chukwudubem Umeano and Oleksandr Kyriienko.
\newblock Ground state-based quantum feature maps.
\newblock {\em APL Quantum}, 3(1), 2026.

\bibitem[Ver26]{vershynin2026high}
Roman Vershynin.
\newblock {\em High-Dimensional Probability}.
\newblock Cambridge University Press, New York, 1 edition, 2026.

\bibitem[Whi92]{white1992density}
Steven~R. White.
\newblock Density matrix formulation for quantum renormalization groups.
\newblock {\em Physical Review Letters}, 69(19):2863--2866, 1992.
\newblock \href {https://doi.org/10.1103/PhysRevLett.69.2863} {\path{doi:10.1103/PhysRevLett.69.2863}}.

\bibitem[XRV17]{xiao2017fashion}
Han Xiao, Kashif Rasul, and Roland Vollgraf.
\newblock Fashion-{{MNIST}}: A {{Novel Image Dataset}} for {{Benchmarking Machine Learning Algorithms}}.
\newblock 2017.
\newblock \href {https://arxiv.org/abs/1708.07747} {\path{arXiv:1708.07747}}.

\bibitem[YGH{\etalchar{+}}25]{ye2025quantum}
Qi~Ye, Shuangyue Geng, Zizhao Han, Weikang Li, L.-M. Duan, and Dong-Ling Deng.
\newblock Quantum automated learning with provable and explainable trainability.
\newblock (arXiv:2502.05264), 2025.
\newblock \href {https://arxiv.org/abs/2502.05264} {\path{arXiv:2502.05264}}.

\bibitem[YMI{\etalchar{+}}25]{yu2025quantum}
Jeffery Yu, Javier~Robledo Moreno, Joseph~T. Iosue, Luke Bertels, Daniel Claudino, Bryce Fuller, Peter Groszkowski, Travis~S. Humble, Petar Jurcevic, William Kirby, Thomas~A. Maier, Mario Motta, Bibek Pokharel, Alireza Seif, Amir Shehata, Kevin~J. Sung, Minh~C. Tran, Vinay Tripathi, Antonio Mezzacapo, and Kunal Sharma.
\newblock Quantum-{{Centric Algorithm}} for {{Sample-Based Krylov Diagonalization}}.
\newblock (arXiv:2501.09702), 2025.
\newblock \href {https://arxiv.org/abs/2501.09702} {\path{arXiv:2501.09702}}.

\bibitem[ZZN{\etalchar{+}}26]{zhao2026exponential}
Haimeng Zhao, Alexander Zlokapa, Hartmut Neven, Ryan Babbush, John Preskill, Jarrod~R. McClean, and Hsin-Yuan Huang.
\newblock Exponential quantum advantage in processing massive classical data.
\newblock (arXiv:2604.07639), 2026.
\newblock \href {https://arxiv.org/abs/2604.07639} {\path{arXiv:2604.07639}}.

\end{thebibliography}
\appendix

\newpage

\section{Energy-based generalization bounds}\label{app: energy gen bounds}
The bound on the energy generalization gap stated in Theorem~\ref{thm: energy gen bound} follows from two main technical ingredients. First, for any fixed value of $\alpha$, we bound the probability that the deviation $
\norm{H_{\mathcal{D}}(\alpha)-H_{\mathcal{T}}(\alpha)}
$
is large over the random draw of the training set $\mathcal{T}$. Second, we extend this fixed-parameter concentration bound uniformly over the entire parameter space, thereby obtaining control of the deviation simultaneously for all $\alpha$ in the parameter space.

Consider a classification problem with training points $\mathcal{T}=\{x_\ell,y_\ell\}_{\ell=1}^N$ where $x$ is sampled from an underlying distribution $\mathcal{D}$ over an input space $\Omega$ and $y_\ell\in\{0,1\}$. Let $ H_{\mathcal{T}}(\alpha)$ denote the training Hamiltonian defined in \cref{eq: encoding_hamiltonian_T}, constructed from an arbitrary encoding unitary $U(x,\alpha)$ parameterized by $\alpha\in\Theta =[-1,1]^M$.
For any labeled input $z=(x,y)$, we define $H'_z(\alpha)=I-H_z(\alpha)$, where $H_z(\alpha)=U^\dagger(x,\alpha)O(y)U(x,\alpha)$. We then define the population Hamiltonian as $H_{\mathcal{D}}(\alpha)=\mathbb{E}_{z\sim\mathcal{D}}[H'_z(\alpha)]$, namely, the average encoding Hamiltonian associated with a random labeled input drawn from the data distribution.
The key idea for the proof of Theorem~\ref{thm: energy gen bound} relies on controlling how much the empirical energy cost optimized during training can differ from the expected energy of the resulting trained state when averaged over the underlying data distribution. This amounts to controlling the deviation between the training Hamiltonian $H_{\mathcal{T}}(\alpha)$ and the population Hamiltonian $H_{\mathcal{D}}(\alpha)$.

Specifically, we first show that for any fixed $\alpha$, the following lemma holds.
\begin{lemma}\label{lemma: energy_bound}
Let $H_{\T}=I-\frac{1}{N}\sum_\ell H_{z_\ell}(\alpha)$ be a Hamiltonian embedding a training set $\T$ as in \cref{eq: encoding_hamiltonian_T}, such that $0\preceq H_z(\alpha)\preceq I$ for all points $z=(x,y)\in\Omega \times \{0,1\}$ and all $\alpha\in\Theta$, and let $H_{\mathcal{D}}(\alpha)=\mathbb{E}_{z\sim\mathcal{D}}[H'_z(\alpha)]$ be the population Hamiltonian, with $H'_z(\alpha)=I-H_z(\alpha)$. Then, for every $\alpha\in\Theta$, with probability at least $1-\delta$ over the draw of $N$-size training datasets $\T=\{(x_\ell,y_\ell)\}_{\ell=1}^N$,
\begin{equation}\label{eq: lemma_ham_difference_bound}
\norm{H_{\mathcal{T}}(\alpha)-H_{\mathcal{D}}(\alpha)}_{\infty}
\leq
\sqrt{
\frac{8}{N}
\log\left(\frac{2D}{\delta}\right)
}.
\end{equation}
where $D=2^n$ is the Hilbert space dimension.
\end{lemma}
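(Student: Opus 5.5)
The plan is a direct application of the matrix Hoeffding inequality of Tropp~\cite{tropp2012user} to the centered, independent summands that make up $H_{\mathcal T}(\alpha)-H_{\mathcal D}(\alpha)$. Fix $\alpha\in\Theta$. Since $H_{\mathcal T}(\alpha)=\frac1N\sum_\ell H'_{z_\ell}(\alpha)$ and $H_{\mathcal D}(\alpha)=\mathbb E_z[H'_z(\alpha)]$, we can write
\begin{equation*}
H_{\mathcal T}(\alpha)-H_{\mathcal D}(\alpha)=\sum_{\ell=1}^N X_\ell,\qquad X_\ell=\frac1N\left(H'_{z_\ell}(\alpha)-\mathbb E_{z\sim\mathcal D}[H'_z(\alpha)]\right).
\end{equation*}
The $X_\ell$ are independent because the training points are drawn i.i.d. They are also $D\times D$ Hermitian with $\mathbb E[X_\ell]=0$.

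The only structural input is boundedness. From $0\preceq H_z(\alpha)\preceq I$ we get $0\preceq H'_z(\alpha)\preceq I$ for every $z$. Taking expectations gives $0\preceq H_{\mathcal D}(\alpha)\preceq I$ as well. The difference of two operators in $[0,I]$ has spectrum in $[-1,1]$, so $X_\ell^2\preceq \frac{1}{N^2}I=:A_\ell^2$ almost surely. A sharper constant is available via the range $[-1,1]$ of width $2$, but the crude deterministic bound suffices.

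Tropp's matrix Hoeffding inequality states that, for independent centered Hermitian $X_\ell$ with $X_\ell^2\preceq A_\ell^2$ and $\sigma^2=\norm{\sum_\ell A_\ell^2}_\infty$,
\begin{equation*}
\Pr\left[\lambda_{\max}\Big(\sum_\ell X_\ell\Big)\ge t\right]\le D\,e^{-t^2/(8\sigma^2)}.
\end{equation*}
Here $\sigma^2=1/N$. Applying the same bound to $-X_\ell$ controls $\lambda_{\min}$. A union bound over the two tails then gives
\begin{equation*}
\Pr\left[\norm{H_{\mathcal T}(\alpha)-H_{\mathcal D}(\alpha)}_\infty\ge t\right]\le 2D\,e^{-Nt^2/8}.
\end{equation*}
Setting the right-hand side equal to $\delta$ and solving gives $t=\sqrt{\frac8N\log\frac{2D}{\delta}}$, which is exactly \cref{eq: lemma_ham_difference_bound}.

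No step is really an obstacle. The only care needed is to use the right form of the matrix Hoeffding constant (the $8\sigma^2$ denominator in Tropp's Theorem~1.3) so that it matches the stated $8/N$, and to verify the semidefinite bound $X_\ell^2\preceq N^{-2}I$ from the operator-interval argument above. The dimension factor $D=2^n$ enters only through the trace bound in Tropp's inequality. This is the source of the $\log D=n\log 2$ term, and hence of the linear dependence on $n$.
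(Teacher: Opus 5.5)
Your proposal is correct and follows the paper's proof essentially step for step. You center the summands $H'_{z_\ell}(\alpha)-H_{\mathcal D}(\alpha)$, use $0\preceq H'_z(\alpha),H_{\mathcal D}(\alpha)\preceq I$ to get a deterministic bound on their squares, and apply Tropp's matrix Hoeffding inequality to $\pm\sum_\ell X_\ell$ with a union bound to reach $2De^{-Nt^2/8}$. The only cosmetic difference is that you put the $1/N$ inside the summands, so $\sigma^2=1/N$; the paper keeps $X_z$ unnormalized with $\sigma^2=N$ and bounds the event $\norm{\sum_z X_z}_\infty\ge N\varepsilon$, which gives the same exponent.
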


The proof of Lemma~\ref{lemma: energy_bound} relies on the matrix Hoeffding inequality. We report here the relative theorem, the proof can be found in~\cite{tropp2012user}.
\begin{theorem}[Matrix Hoeffding (Theorem 1.3 in~\cite{tropp2012user})]\label{thm: matrix hoeffding}
Consider a finite sequence $\{X_k\}$ of independent, random, self-adjoint matrices with dimension $D$, and let $\{A_k\}$ be a sequence of fixed self-adjoint matrices. Assume that each random matrix satisfies
\begin{equation}
\mathbb{E}_k[X_k]=0
\qquad\text{and}\qquad
X_k^2 \preceq A_k^2 \text{ almost surely } 
\end{equation}

Then, for all $t\ge 0$,
\begin{equation}
\Pr\left\{
\lambda_{\max}\left(\sum_k X_k\right)\ge t
\right\}
\le
D\cdot e^{-t^2/8\sigma^2}, \text{ where }
\sigma^2
:=
\left\|
\sum_k A_k^2
\right\|_\infty.
\end{equation}
\end{theorem}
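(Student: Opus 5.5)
Since this is the matrix Hoeffding inequality of~\cite{tropp2012user}, the plan is to reproduce Tropp's matrix Laplace-transform argument. Write $Y=\sum_k X_k$ and fix $\theta>0$. The first step is the matrix Chernoff bound. Because $e^{\theta\lambda_{\max}(Y)}=\lambda_{\max}(e^{\theta Y})\le \operatorname{Tr} e^{\theta Y}$, Markov's inequality gives
\begin{equation*}
\Pr\{\lambda_{\max}(Y)\ge t\}\le e^{-\theta t}\,\mathbb{E}\operatorname{Tr}e^{\theta Y}.
\end{equation*}
The rest of the proof controls the trace moment generating function of a sum of independent matrices. The difficulty is that $e^{\theta Y}$ does not factor when the $X_k$ do not commute.

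For the second step I would invoke Lieb's concavity theorem: for fixed self-adjoint $H$, the map $B\mapsto \operatorname{Tr}\exp(H+\log B)$ is concave on positive definite matrices. Conditioning on all summands except one and applying Jensen yields $\mathbb{E}\operatorname{Tr}\exp(H+X)\le \operatorname{Tr}\exp(H+\log\mathbb{E}e^{X})$. Iterating over $k$ by independence gives the subadditivity of matrix cumulants,
\begin{equation*}
\mathbb{E}\operatorname{Tr}e^{\theta Y}\le \operatorname{Tr}\exp\Big(\sum_k \log \mathbb{E}e^{\theta X_k}\Big).
\end{equation*}
Since $\operatorname{Tr}\exp$ is monotone with respect to $\preceq$, it then suffices to prove the per-summand semidefinite bound $\log\mathbb{E}e^{\theta X_k}\preceq 2\theta^2 A_k^2$. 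With this bound, the right-hand side is at most $D\,\lambda_{\max}\big(\exp(2\theta^2\sum_k A_k^2)\big)=D\,e^{2\theta^2\sigma^2}$. Minimizing $e^{-\theta t+2\theta^2\sigma^2}$ at $\theta=t/(4\sigma^2)$ gives exactly $D e^{-t^2/8\sigma^2}$.

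The main obstacle is the matrix Hoeffding lemma, $\log\mathbb{E}e^{\theta X}\preceq 2\theta^2A^2$ whenever $\mathbb{E}X=0$ and $X^2\preceq A^2$. The hypothesis gives only a semidefinite comparison $X^2\preceq A^2$, and the exponential is not operator monotone, so one cannot simply exponentiate it. I would first symmetrize. Let $X'$ be an independent copy and $\varepsilon$ a Rademacher sign. Using $\mathbb{E}X'=0$ together with operator convexity at the level needed inside the Lieb step, one passes from $X$ to $\varepsilon(X-X')$. Averaging over $\varepsilon$ gives $\cosh(\theta(X-X'))$, and the scalar inequality $\cosh u\le e^{u^2/2}$ transfers spectrally to $\cosh(\theta W)\preceq e^{\theta^2W^2/2}$ for $W=X-X'$.

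To avoid exponentiating a semidefinite inequality, I would keep $\log$ outside the expectation and exploit that $\log$ is operator monotone. The target is then $\log\mathbb{E}\,e^{\theta^2 W^2/2}\preceq 2\theta^2 A^2$, using $W^2\preceq 2X^2+2X'^2\preceq 4A^2$; making this last comparison rigorous, and tracking the constant that produces the $8$, is where I expect the delicate bookkeeping. If a direct operator inequality fails, the fallback is to prove only the trace-level inequality that feeds into the Lieb step, which is all the final bound requires.
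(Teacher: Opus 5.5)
Your Chernoff step, your use of Lieb's theorem, the monotonicity of $B\mapsto\operatorname{Tr}e^{H+B}$, and the final choice $\theta=t/(4\sigma^2)$ are all correct. These are the tools of Tropp's proof, to which the paper simply defers. The gap is the reduction at the end of your second paragraph: you reduce everything to the semidefinite cumulant bound $\log\mathbb{E}e^{\theta X}\preceq 2\theta^2A^2$ for the summand itself, and the route you sketch for it fails at two points.

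First, ``passing from $X$ to $\varepsilon(X-X')$'' at the level of $\mathbb{E}e^{\theta X}$ would require $e^{\theta X}=e^{\theta(X-\mathbb{E}'X')}\preceq\mathbb{E}'e^{\theta(X-X')}$. That is an operator Jensen inequality for $u\mapsto e^{u}$, which is not operator convex. Only the trace-level convexity of $B\mapsto\operatorname{Tr}e^{H+B}$ holds, and it cannot produce a semidefinite bound on $\log\mathbb{E}e^{\theta X}$.

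Second, concluding $\log\mathbb{E}e^{\theta^2W^2/2}\preceq2\theta^2A^2$ from operator monotonicity of $\log$ requires $\mathbb{E}e^{\theta^2W^2/2}\preceq e^{2\theta^2A^2}$. That means exponentiating $\theta^2W^2/2\preceq2\theta^2A^2$, which is exactly the move you had correctly ruled out. Once the expectation over $W$ sits inside the logarithm, the pointwise comparison $W^2\preceq4A^2$ is useless. This is a genuine obstruction, not bookkeeping, and Tropp's argument never needs such a bound.

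The missing idea is to keep the expectation over the random matrices outside $\operatorname{Tr}\exp$ and apply Lieb's theorem only to the Rademacher sign, conditionally on the matrices. Then the cumulant you must control is a function of a single fixed matrix. For fixed $H$, argue as follows.
\begin{itemize}
\item Convexity of $B\mapsto\operatorname{Tr}e^{H+B}$ and $\mathbb{E}X'=0$ give $\mathbb{E}\operatorname{Tr}e^{H+\theta X}\le\mathbb{E}\operatorname{Tr}e^{H+\theta W}=\mathbb{E}\operatorname{Tr}e^{H+\varepsilon\theta W}$, where $W=X-X'$.
\item Lieb applied to $\varepsilon$ with $(X,X')$ frozen bounds this by $\mathbb{E}\operatorname{Tr}\exp\bigl(H+\log\cosh(\theta W)\bigr)$.
\item The inequality $\log\cosh(\theta W)\preceq\theta^2W^2/2$ is spectral for the one matrix $W$, and $W^2\preceq W^2+(X+X')^2=2X^2+2X'^2\preceq4A^2$.
\item Monotonicity then yields $\mathbb{E}\operatorname{Tr}e^{H+\theta X}\le\operatorname{Tr}e^{H+2\theta^2A^2}$, with no exponentiated operator inequality anywhere.
\end{itemize}
Tropp instead uses convexity once more, $\varepsilon W=\tfrac12(2\varepsilon X)+\tfrac12(-2\varepsilon X')$, and bounds $\log\cosh(2\theta X)\preceq2\theta^2X^2\preceq2\theta^2A^2$; the constant is the same.

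Replace the cumulant-subadditivity formula with $\log\mathbb{E}e^{\theta X_k}$ by iterated conditioning. Peel off $X_n,X_{n-1},\dots$ one at a time, conditioning on the others and absorbing each deterministic $2\theta^2A_k^2$ into the fixed $H$ of the next step. This gives $\mathbb{E}\operatorname{Tr}e^{\theta Y}\le\operatorname{Tr}\exp\bigl(2\theta^2\sum_kA_k^2\bigr)\le De^{2\theta^2\sigma^2}$, and your optimisation finishes the proof. This is the precise content of your ``fallback'', and it is the step the proof actually rests on.
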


In our proof of Lemma~\ref{lemma: energy_bound} we will actually need to bound the operator norm of the sum of random matrices, not just the largest eigenvalue. However, we can obtain a bound on the operator norm of a sum of random matrices $\sum_k X_k$ by applying Theorem~\ref{thm: matrix hoeffding} twice: one to $\sum_k X_k$ and one to $-\sum_k X_k$ as explained in the following corollary.

\begin{corollary}[Matrix Hoeffding for operator norm (Corollary of Theorem 1.3 in~\cite{tropp2012user})]\label{cor: matrix hoeffding}
Consider a finite sequence $\{X_k\}$ of independent, random, self-adjoint matrices with dimension $D$, and let $\{A_k\}$ be a sequence of fixed self-adjoint matrices. Assume that each random matrix satisfies
\begin{equation}
\mathbb{E}_k[X_k]=0
\qquad\text{and}\qquad
X_k^2 \preceq A_k^2 \text{ almost surely } 
\end{equation}

Then, for all $t\ge 0$,
\begin{equation}
\Pr\left\{
\norm{\sum_k X_k}_\infty\ge t
\right\}
\le
2D\cdot e^{-t^2/8\sigma^2}, \text{ where }
\sigma^2
:=
\left\|
\sum_k A_k^2
\right\|_\infty.
\end{equation}
\end{corollary}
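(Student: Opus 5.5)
The corollary follows by applying Theorem~\ref{thm: matrix hoeffding} twice, once to the sequence $\{X_k\}$ and once to $\{-X_k\}$, and then combining the two tail bounds with a union bound.

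First, the operator norm of a self-adjoint matrix $S=\sum_k X_k$ satisfies
\begin{equation}
\norm{S}_\infty=\max\{\lambda_{\max}(S),\,\lambda_{\max}(-S)\},
\end{equation}
since $\norm{S}_\infty=\max\{|\lambda_{\max}(S)|,|\lambda_{\min}(S)|\}$ and $\lambda_{\min}(S)=-\lambda_{\max}(-S)$. (If one of these two quantities is negative, the other is nonnegative and dominates.) Hence
\begin{equation}
\{\norm{S}_\infty\ge t\}\subseteq\{\lambda_{\max}(S)\ge t\}\cup\{\lambda_{\max}(-S)\ge t\}.
\end{equation}

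Second, we check that $\{-X_k\}$ satisfies the hypotheses of Theorem~\ref{thm: matrix hoeffding} with the same fixed matrices $A_k$. Each $-X_k$ is independent and self-adjoint, and $\mathbb{E}[-X_k]=-\mathbb{E}[X_k]=0$. Moreover $(-X_k)^2=X_k^2\preceq A_k^2$ almost surely. So $\sigma^2=\norm{\sum_k A_k^2}_\infty$ is unchanged, and the theorem gives
\begin{equation}
\Pr\{\lambda_{\max}(-S)\ge t\}\le D e^{-t^2/8\sigma^2}.
\end{equation}
Applying the theorem directly to $\{X_k\}$ gives the same bound for $\Pr\{\lambda_{\max}(S)\ge t\}$.

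The union bound then yields $\Pr\{\norm{S}_\infty\ge t\}\le 2De^{-t^2/8\sigma^2}$, which is the claim. There is no substantive obstacle. The only point requiring care is the eigenvalue identity for the operator norm of a Hermitian matrix, together with the observation that the hypothesis $X_k^2\preceq A_k^2$ is invariant under $X_k\mapsto -X_k$.
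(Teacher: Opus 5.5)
Your proof is correct and matches the paper's argument: write $\norm{S}_\infty=\max\{\lambda_{\max}(S),\lambda_{\max}(-S)\}$, apply Tropp's one-sided bound to $\sum_k X_k$ and to $-\sum_k X_k$, and combine the two tails with a union bound. You also check explicitly that $\{-X_k\}$ satisfies the hypotheses, namely zero mean and $(-X_k)^2=X_k^2\preceq A_k^2$; the paper leaves this step implicit, so it is worth keeping.
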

\begin{proof}
For a self-adjoint matrix $A$, the operator norm satisfies
$
\norm{A}_{\infty}
=
\max\left\{
\lambda_{\max}(A),
\lambda_{\max}(-A)
\right\}.
$
Therefore,
$
\norm{A}_{\infty}\ge t
$
is equivalent to
\begin{equation}
\lambda_{\max}(A)\ge t
\qquad\text{or}\qquad
\lambda_{\max}(-A)\ge t .
\end{equation}

Applying Tropp's one-sided matrix Hoeffding bound to $\sum_k X_k$ gives
\begin{equation}
\Pr\left[
\lambda_{\max}\left(\sum_k X_k\right)\ge t
\right]
\le
D\exp\left(-\frac{t^2}{8\sigma^2}\right).
\end{equation}

Applying the same bound to $-\sum_k X_k$ gives
\begin{equation}
\Pr\left[
\lambda_{\max}\left(-\sum_k X_k\right)\ge t
\right]
\le
D\exp\left(-\frac{t^2}{8\sigma^2}\right).
\end{equation}

Hence, by the union bound,
\begin{equation}
\begin{aligned}
\Pr\left[
\norm{\sum_k X_k}_{\infty}\ge t
\right]
&\le
\Pr\left[
\lambda_{\max}\left(\sum_k X_k\right)\ge t
\right]
+
\Pr\left[
\lambda_{\max}\left(-\sum_k X_k\right)\ge t
\right] \\
&\le
2D\exp\left(-\frac{t^2}{8\sigma^2}\right).
\end{aligned}
\end{equation}

\end{proof}

We are now ready to give the complete proof of our Lemma~\ref{lemma: energy_bound} 
\begin{proof}[Proof of Lemma~\ref{lemma: energy_bound}]
We define $X_z(\alpha) = H'_{z}(\alpha) - H_{\mathcal{D}}(\alpha)$. Note that $H_{\T}(\alpha) - H_{\mathcal{D}}(\alpha) = \frac{1}{N} \sum_{z\in\T} X_z(\alpha)$ and that   
\begin{equation}
  \mathbb{E}_{z\sim \D} [X_z(\alpha)]= \mathbb{E}_{z\sim \D} [H'_z(\alpha)]-H_{\mathcal{D}}(\alpha)=H_{\mathcal{D}}(\alpha)-H_{\mathcal{D}}(\alpha)=0  .
\end{equation}
 Moreover, $-I\preceq X_z \preceq I$  since we assumed $0\preceq H'_{z}\preceq I$ for any point $z$ and therefore also $0\preceq H_{d}\preceq I$. Hence, $\Vert X_z \Vert_{\infty} \leq 1$ for any data point $z$.  We can then bound the difference in norm between $H_\T(\alpha)$ and  $H_{\mathcal{D}}(\alpha)$ by
 \begin{equation}\label{eq: norm_difference}
     \norm{H_\T(\alpha)-H_{\mathcal{D}}(\alpha)}_\infty\leq \norm{\frac{1}{N} \sum_{z\in\T} X_z(\alpha)}_\infty
 \end{equation}
 In order to prove the lemma is then sufficient to bound the quantity $\norm{\frac{1}{N} \sum_{z\in\T} X_z(\alpha)}_\infty$. Since the samples $z\in\mathcal T$ are drawn independently, the matrices $X_z(\alpha)$ are also independent for every fixed $\alpha$. Moreover, under the assumption $0\preceq H_z(\alpha)\preceq I$, they satisfy $\norm{X_z(\alpha)}_\infty \le 1$ for all $z$. We can therefore apply the matrix Hoeffding inequality of Corollary~\ref{cor: matrix hoeffding}. In particular, in our case $X_k^2\preceq I$ and therefore $\sigma^2=\norm{\sum_{k=1}^N I}_{\infty}=N$ to obtain: 
 \begin{equation}
    \Pr_{\T}\!\left[\left\|\frac{1}{N}\sum_{i=1}^{N} X_i\right\|_\infty \ge \varepsilon\right]
\leq 
2D \exp\!\left(-\frac{N\varepsilon^2}{8}\right),
\end{equation}
 where $D =2^n$ is the Hilbert space dimension. Finally, after setting the right hand to be $\delta$, we get 
\begin{equation}
   \operatorname{Pr}_{\T}\left[ \left\|\frac{1}{N}\sum_{i=1}^{N} X_i\right\|_\infty \leq \sqrt{\frac{8}{N}\log\left(\frac{2 D}{\delta}\right)}\right] \geq 1-\delta~.
\end{equation}
which, combined with \cref{eq: norm_difference} proves the bound stated in Lemma~\ref{lemma: energy_bound}.

\end{proof}

Lemma~\ref{lemma: energy_bound} bounds the probability of having a large difference in operator norm between the training and population Hamiltonian for a fixed $\alpha$. Importantly, this probability is taken over different random datasets $\T$. In general, the set of trained parameter $\alpha^\star$ will depend on the particular training set used in the training, therefore when averaging over different $\T$ this dependence must be taken into account. When the Hamiltonian embedding is Lipschitz continuous in $\alpha$, the fixed-$\alpha$ bound can be extended uniformly over the whole parameter space. This is done by first applying the bound to all points of a finite covering net and then using the Lipschitz property to control the deviation at parameters outside the net. This is shown in the full proof of Theorem~\ref{thm: energy gen bound}, that we restate here for convenience.

\genbound*

\begin{proof}
Let the parameters take values in $\Theta=[-1,1]^M$. Equip the parameter space with the $\ell_\infty$ norm
$
\norm{\alpha-\beta}_\infty
:=
\max_{j=1,\ldots,M}|\alpha_j-\beta_j|$. Then $[-1,1]^M$ is exactly the radius-one $\ell_\infty$ ball in $\mathbb{R}^M$.
Let $\eta\in(0,1]$. An $\eta$-net $\mathcal A_\eta\subset\Theta$ is a finite set such that for every $\alpha\in\Theta$, there exists $\beta\in\A_\eta$ satisfying
$
\norm{\alpha-\beta}_\infty\le\eta$.
It is known~\cite{vershynin2026high} that the covering numbers of a ball $B_R$ of radius $R>0$ around a fixed point in $\mathbb{R}^M$, for $0\leq\eta\leq R$ is:
\begin{equation}
\mathcal{N}(B_R,||.||,\eta)=|\A_\eta|
\le
\left(1+\frac{2R}{\eta}\right)^M
\le
\left(\frac{3}{\eta}\right)^M.
\label{eq:covering-number}
\end{equation}
Where in the last step we consider $R=1$, as we are considering a ball of radius $R=1$ around 0 (that corresponds to $\Theta = [-1,1]^M$).
For each fixed $\beta\in\A_\eta$, \cref{eq: lemma_ham_difference_bound} gives
$
\text{Prob}_{\T}\!\left[
\norm{H_{\T}(\beta)-H_{\mathcal{D}}(\beta)}\ge t
\right]
\le
2D\exp\!\left(-\frac{Nt^2}{8}\right)$. We now bound the probability that there is a large deviation for at least a point in the net.
Define the bad event
\begin{equation}
E_\beta
:=
\left\{
\norm{H_{\T}(\beta)-H_{\mathcal{D}}(\beta)}\ge t
\right\}.
\end{equation}
The event that at least one net point is bad is $
\bigcup_{\beta\in\A_\eta}E_\beta$.
The union bound does not require the events $E_\beta$ to be independent. It gives
\begin{align}
\text{Prob}_{\T}\!\left[
\sup_{\beta\in\A_\eta}
\norm{H_{\T}(\beta)-H_{\mathcal{D}}(\beta)}\ge t
\right]
&\le
\sum_{\beta\in\A_\eta}\text{Prob}_{\T}(E_\beta)\\
&\le
2D|\A_\eta|\exp\!\left(-\frac{Nt^2}{8}\right)\\
&\le
2D\left(\frac{3}{\eta}\right)^M
\exp\!\left(-\frac{Nt^2}{8}\right).
\label{eq:union-bound}
\end{align}
To make the right-hand side at most $\delta$, it is enough to choose
\begin{equation}
t
=
\sqrt{
\frac{8}{N}
\left[
\log\!\left(\frac{2D}{\delta}\right)
+
M\log\!\left(\frac{3}{\eta}\right)
\right]
}.
\label{eq:t-net}
\end{equation}
Hence, with probability at least $1-\delta$,
\begin{equation}\label{eq: ham_difference_net}
\sup_{\beta\in\A_\eta}
\norm{H_{\T}(\beta)-H_{\mathcal{D}}(\beta)}
\le
\sqrt{
\frac{8}{N}
\left[
\log\!\left(\frac{2D}{\delta}\right)
+
M\log\!\left(\frac{3}{\eta}\right)
\right]
}.
\end{equation}

The finite-net bound is not yet enough: the final trained parameter $\alpha^\star(\T)$ need not lie exactly on the grid of points in $\mathcal{A}_\eta$. However, we can extend the bound to a supremum of all the parameter points in $\alpha \in [-1,1]^M$ by assuming regularity of the Hamiltonian family. In particular, Lipschitz continuity implies the existence of a $L\geq 0$ such that for every data point $z$ and every $\alpha,\beta\in[-1,1]^M$ it holds that $\norm{H_z(\alpha)-H_z(\beta)}_\infty
\le
L|\alpha-\beta|_\infty$. We notice that both the training Hamiltonian $H_\T(\alpha)$ and the full population Hamiltonian inherit the same Lipschitz constant:
\begin{align}
\norm{H_T(\alpha)-H_T(\beta)}
&\le
\frac{1}{N}\sum_{i=1}^{N}
\norm{H_{z_i}(\alpha)-H_{z_i}(\beta)}\\
&\le
L\norm{\alpha-\beta}_\infty.
\label{eq:lipschitz-empirical}
\end{align}
and, similarly,
\begin{align}
\norm{H_{\mathcal{D}}(\alpha)-H_{\mathcal{D}}(\beta)}
&\le
\mathbb{E}_z\!\left[
\norm{H_z(\alpha)-H_z(\beta)}
\right]\\
&\le
L\norm{\alpha-\beta}_\infty.
\label{eq:lipschitz-population}
\end{align}
Define $ \Delta_T(\alpha):=H_{\T}(\alpha)-H_{\mathcal{D}}(\alpha)$.
Combining Eqs.~\eqref{eq:lipschitz-empirical} and \eqref{eq:lipschitz-population},
\begin{align}
\norm{\Delta_T(\alpha)-\Delta_T(\beta)}
&\le
\norm{H_T(\alpha)-H_T(\beta)}
+
\norm{H_{\mathcal{D}}(\alpha)-H_{\mathcal{D}}(\beta)}\\
&\le
2L\norm{\alpha-\beta}_\infty.
\label{eq:lipschitz-delta}
\end{align}
Now take an arbitrary $\alpha\in[-1,1]^M$. Choose $\beta\in\A_\eta$ such that $ \norm{\alpha-\beta}_\infty\le\eta$.
Then
\begin{align}
\norm{\Delta_T(\alpha)}
&\le
\norm{\Delta_T(\beta)}
+
\norm{\Delta_T(\alpha)-\Delta_T(\beta)}\\
&\le
\norm{\Delta_T(\beta)}+2L\eta.
\end{align}
Taking the supremum over all $\alpha$ gives
\begin{equation}\label{eq: lipschitz error }
\sup_{\alpha\in[-1,1]^M}\norm{\Delta_T(\alpha)}
\le
\sup_{\beta\in\A_\eta}\norm{\Delta_T(\beta)}
+2L\eta.
\end{equation}

We can now conclude the proof of the Theorem by inserting \cref{eq: ham_difference_net} in \cref{eq: lipschitz error } in order to obtain:
\begin{equation}
    \sup_{\alpha\in[-1,1]^M}
\norm{H_{\mathcal{T}}(\alpha)-H_{\mathcal{D}}(\alpha)}
\leq
\sqrt{
\frac{8}{N}
\left[
\log\left(\frac{2D}{\delta}\right)
+
M\log\left(\frac{24L}{\varepsilon}\right)
\right]
}
+
\frac{\varepsilon}{4}.
\end{equation}
where we chose $\eta=\frac{\varepsilon}{8L}$.
This concludes the proof of Theorem~\ref{thm: energy gen bound}.

\end{proof}

Theorem~\ref{thm: energy gen bound} provides a bound on the energy generalization gap defined in \cref{eq: energy_gap}, it holds for any $\alpha$ and therefore even for the $\alpha^\star$ obtained after the training. The energy generalization gap can be directly used to bound the total risk $R$ defined in \cref{eq: total risk} as we show in Corollary~\ref{cor:classification-risk-bound}.

\totrisk*
\begin{proof}
Theorem~\ref{thm: energy gen bound} bounds the energy generalization gap for any parameter $\alpha\in\Theta=[-1,1]^M$ and, in particular, for the $\alpha^\star$ obtained after the model is trained. In particular, for any $\epsilon>0$,  with probability larger than $1-\delta$ over different training data $\T$, it holds that 
\begin{equation}
    \sup_{\alpha\in[-1,1]^M}
\norm{H_{\mathcal{T}}(\alpha)-H_{\mathcal{D}}(\alpha)}
\leq
\sqrt{
\frac{8}{N}
\left[
\log\left(\frac{2D}{\delta}\right)
+
M\log\left(\frac{24L}{\varepsilon}\right)
\right]
}
+
\frac{\varepsilon}{4}.
\end{equation}
Suppose now that the number of training samples satisfies
\begin{equation}\label{eq: appendix sample size}
N
\ge
\frac{128}{\varepsilon^2}
\left[
\log\left(\frac{2D}{\delta}\right)
+
M\log\left(\frac{24L}{\varepsilon}\right)
\right],
\end{equation}
then Theorem~\ref{thm: energy gen bound} implies that, with probability $1-\delta$,  $ \sup_{\alpha\in[-1,1]^M}
\norm{H_{\mathcal{T}}(\alpha)-H_{\mathcal{D}}(\alpha)}
\leq \frac{\varepsilon}{2}$. Recalling the definition of total cost, we obtain:
\begin{align}\label{eq: appendix bound_cost}
C(\psi_{{\scriptscriptstyle\mathcal{T}},\alpha^\star},\alpha^\star)
&=
\bra{\psi_{{\scriptscriptstyle\mathcal{T}},\alpha^\star}}
H_{\mathcal{D}}(\alpha^\star)
\ket{\psi_{{\scriptscriptstyle\mathcal{T}},\alpha^\star}} \\
&\le
\bra{\psi_{{\scriptscriptstyle\mathcal{T}},\alpha^\star}}
H_{\mathcal T}(\alpha^\star)
\ket{\psi_{{\scriptscriptstyle\mathcal{T}},\alpha^\star}}
+
\|H_{\mathcal{D}}(\alpha^\star)-H_{\mathcal T}(\alpha^\star)\| \\
&\le
\widehat C(\psi_{{\scriptscriptstyle\mathcal{T}},\alpha^\star},\alpha^\star)
+
\frac{\varepsilon}{2}.
\end{align}
It now remains to relate the population energy to the total classification error $R$. Recall that, for a data point $z=(x,y)$ embedded through the Hamiltonian $H_z(\alpha)$, the energy loss evaluated on the state $\psi$ is defined in \cref{eq: energy_loss} as:
\begin{equation}
   \ell_{\text{energy}}(\psi, H_z(\alpha))=1-\langle \psi | H_{z}(\alpha) | \psi \rangle . 
\end{equation}
 Moreover, the model prediction rule for a point $x$ is given by \cref{eq: prediction rule}, which we also state here for convenience:
\begin{equation}
    \widetilde{f}_{\psi_{{\scriptscriptstyle\mathcal{T}},\alpha^\star},\alpha^\star}(x')
    =
    \operatorname*{argmax}_{\tilde y\in\{0,1\}}
    P_{\psi_{{\scriptscriptstyle\mathcal{T}},\alpha^\star},\alpha^\star}(x',\tilde y).
\end{equation}
where $    P_{\psi_{{\scriptscriptstyle\mathcal{T}},\alpha},\alpha}(x,y) = \operatorname{Tr}\left(O(y) U(x,\alpha)|\psi_{{\scriptscriptstyle\mathcal{T}},\alpha}\rangle \langle \psi_{{\scriptscriptstyle\mathcal{T}},\alpha}|U^{\dagger}(x,\alpha)\right)$. We notice that if $(x,y)$ is misclassified, then the model assigns at least as much probability to the wrong label as to the true label, in other words
\begin{equation}
    \ell_{\mathrm{energy}}(\psi,H_z(\alpha))
=
1-
\bra{\psi_{{\scriptscriptstyle\mathcal{T}},\alpha}}H_z(\alpha)\ket{\psi_{{\scriptscriptstyle\mathcal{T}},\alpha}}
\ge
\frac12.
\end{equation}
for misclassified point $z$. This directly implies that $\mathbf 1\{\tilde f_{\alpha,\psi_{{\scriptscriptstyle\mathcal{T}},\alpha}}(x)\neq f(x)\}
\le
2\ell_{\mathrm{energy}}(\psi,H_z(\alpha))$ for any point $z=(x,y)$.
Therefore, taking the expectation over different point $z$ for $x\sim\mathcal{D}$ and remembering that $R(\psi,\alpha)=\Pr_{x\sim\mathcal{D}}[\widetilde{f}_{\psi_{{\scriptscriptstyle\mathcal{T}},\alpha},\alpha}(x)\neq f(x)]$, we obtain:
\begin{align}
R(\psi_{{\scriptscriptstyle\mathcal{T}},\alpha^\star},\alpha^\star)
&\le
2C(\psi_{{\scriptscriptstyle\mathcal{T}},\alpha^\star},\alpha^\star) \\
&\le
2\widehat C(\psi_{{\scriptscriptstyle\mathcal{T}},\alpha^\star},\alpha^\star)
+
\varepsilon.
\end{align}
where in the second equation we used the bound on the population cost in \cref{eq: appendix bound_cost} given by the training cost plus the energy generalization gap assuming the sample size given in \cref{eq: appendix sample size}.
    
\end{proof}

\subsection{Lipschitz assumption}
In Theorem~\ref{thm: energy gen bound}, we assumed that the parameterized Hamiltonians used to encode the data satisfy a Lipschitz continuity condition. Namely, we required that there exists a constant $L>0$ such that $\norm{H_z(\alpha)-H_z(\beta)}
\le
L\norm{\alpha-\beta}_\infty$ for any $\alpha,\beta\in[-1,1]^M$ and all the labeled input $z=(x,y)$. We now show that this is a natural assumption, and that it holds for common choices of encoding unitaries $U(x,\alpha)$ used in the construction of $H_z(\alpha)=U^\dagger(x,\alpha)O(y)U(x,\alpha)$. 

Suppose that the data are encoded through rotation angles given by linear combinations of the input features, for example
\begin{equation}
U(\alpha,x)
=
e^{-i\theta_\alpha(x)G},
\qquad
\theta_\alpha(x)
=
\sum_{j=1}^M \alpha_j x_j .
\end{equation}

If $
H_z(\alpha)
=
U(x,\alpha)^\dagger O(y)U(x,\alpha)$ with $
\|O(y)\|\le 1,$
then differentiating with respect to $\alpha_j$ gives a commutator-type expression. Since
$
\frac{\partial \theta_\alpha(x)}{\partial \alpha_j}
=
x_j$,
one obtains
\begin{equation}
\left\|
\frac{\partial H_z(\alpha)}{\partial \alpha_j}
\right\|
\le
2|x_j|\|G\|.
\end{equation}
If the features are uniformly bounded, $|x_j|\le B_x$, then
\begin{equation}
\left\|
\frac{\partial H_z(\alpha)}{\partial \alpha_j}
\right\|
\le
2B_x\|G\|.
\end{equation}

Therefore, moving coordinate by coordinate from $\alpha$ to $\beta$,
\begin{equation}
\|H_z(\alpha)-H_z(\beta)\|
\le
2B_x\|G\|
\sum_{j=1}^M
|\alpha_j-\beta_j|.
\end{equation}
In particular,
\begin{equation}
\|H_z(\alpha)-H_z(\beta)\|
\le
2MB_x\|G\|
\|\alpha-\beta\|_\infty .
\end{equation}
Thus, for Pauli generators with $\|G\|=1$, one may take $L\le 2MB_x$, up to convention-dependent factors such as the presence of a $1/2$ in the rotation generator.

\section{Gradient evaluation and the Hellmann–Feynman relation}\label{app: gradient_cost}
Here we justify why, when evaluating the gradient of the cost function in Eq.~\eqref{eq:cost-function}, the ground state can be treated as fixed at the value corresponding to the current parameters, despite its implicit dependence on $\alpha$. We formalize the result in Theorem~\ref{thm: gradient energy}.

\begin{theorem}[Energy cost gradient]\label{thm: gradient energy}
    Assume that the data Hamiltonian $H_{\mathcal T}(\alpha)
=
I-\frac{1}{N}\sum_{\ell=1}^{N}H_{z_\ell}(\alpha)$, as defined in Eq.~\eqref{eq: encoding_hamiltonian_T}, is continuously differentiable with a nondegenerate ground state.
    Define the energy cost as in Eq.~\eqref{eq:cost-function}
\begin{equation}
\widehat C_{\mathcal{T}}(\psi_{{\scriptscriptstyle\mathcal{T}},\alpha}, \alpha)
=
\langle\psi_{{\scriptscriptstyle\mathcal{T}},\alpha}|H_{\mathcal T}(\alpha)|\psi_{{\scriptscriptstyle\mathcal{T}},\alpha}\rangle.
\end{equation}

Then, for every parameter $\alpha_j$,
\begin{equation}
\left.
\frac{\partial}{\partial\alpha_j}
\widehat C_{\mathcal T}
(\psi_{{\scriptscriptstyle\mathcal T},\alpha},\alpha)
\right|_{\alpha=\alpha_t}
=
-\frac{1}{N}\sum_{\ell=1}^{N}
\langle\psi_{{\scriptscriptstyle\mathcal T},\alpha_t}|
\left.
\frac{\partial H_{z_\ell}(\alpha)}{\partial\alpha_j}
\right|_{\alpha=\alpha_t}
|\psi_{{\scriptscriptstyle\mathcal T},\alpha_t}\rangle.
\end{equation}
 
\end{theorem}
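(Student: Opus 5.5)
The plan is to prove this as an instance of the Hellmann--Feynman theorem. First I will establish that the ground state $\ket{\psi_{{\scriptscriptstyle\mathcal T},\alpha}}$ can be chosen to depend differentiably on $\alpha$ in a neighbourhood of $\alpha_t$. Since $H_{\mathcal T}(\alpha)$ is Hermitian, finite-dimensional and continuously differentiable, and its lowest eigenvalue $E_0(\alpha_t)$ is nondegenerate, the gap to the next eigenvalue is strictly positive at $\alpha_t$. Continuity of eigenvalues (Weyl's inequality, $|\lambda_k(A)-\lambda_k(B)|\le\norm{A-B}_\infty$) keeps this gap open on a neighbourhood of $\alpha_t$.

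On that neighbourhood I will write the ground-state projector as a Riesz contour integral,
\begin{equation}
P(\alpha)=\frac{1}{2\pi i}\oint_\Gamma (zI-H_{\mathcal T}(\alpha))^{-1}\,dz ,
\end{equation}
where $\Gamma$ is a small circle isolating $E_0$. The map $\alpha\mapsto P(\alpha)$ is $C^1$ because the resolvent is. A normalized representative is then
\begin{equation}
\ket{\psi_\alpha}=\frac{P(\alpha)\ket{\psi_{\alpha_t}}}{\norm{P(\alpha)\ket{\psi_{\alpha_t}}}} ,
\end{equation}
which is $C^1$ near $\alpha_t$. The phase ambiguity of the eigenvector is harmless, because the cost depends only on $\ket{\psi}\bra{\psi}=P(\alpha)$. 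Equivalently, $\widehat C_{\mathcal T}(\psi_{{\scriptscriptstyle\mathcal T},\alpha},\alpha)=\operatorname{Tr}[P(\alpha)H_{\mathcal T}(\alpha)]=E_0(\alpha)$.

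Next I will differentiate by the product rule:
\begin{equation}
\partial_j E_0=\bra{\partial_j\psi}H_{\mathcal T}\ket{\psi}+\bra{\psi}H_{\mathcal T}\ket{\partial_j\psi}+\bra{\psi}\partial_j H_{\mathcal T}\ket{\psi}.
\end{equation}
Using $H_{\mathcal T}\ket{\psi}=E_0\ket{\psi}$, the first two terms combine to $E_0\,\partial_j\braket{\psi}{\psi}=0$, since the state stays normalized. This leaves $\bra{\psi}\partial_jH_{\mathcal T}\ket{\psi}$. Linearity of differentiation applied to $H_{\mathcal T}(\alpha)=I-\frac1N\sum_\ell H_{z_\ell}(\alpha)$ gives $\partial_jH_{\mathcal T}=-\frac1N\sum_\ell\partial_jH_{z_\ell}$, which yields the stated formula at $\alpha=\alpha_t$.

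The main obstacle is the differentiability step. Nondegeneracy is exactly what makes $E_0(\alpha)$ differentiable; at a level crossing it is only Lipschitz. I will therefore carefully invoke the resolvent argument, or standard analytic perturbation theory for isolated simple eigenvalues as in Kato, rather than assume a smooth eigenvector exists. A cleaner alternative that sidesteps eigenvector smoothness entirely is a variational sandwich. Write $E_0(\alpha)=\min_{\phi}\bra{\phi}H_{\mathcal T}(\alpha)\ket{\phi}$. Then $E_0(\alpha)\le\bra{\psi_{\alpha_t}}H_{\mathcal T}(\alpha)\ket{\psi_{\alpha_t}}$, with equality at $\alpha_t$, so the difference $\bra{\psi_{\alpha_t}}H_{\mathcal T}(\alpha)\ket{\psi_{\alpha_t}}-E_0(\alpha)$ is a nonnegative function minimized at $\alpha_t$. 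Whenever $E_0$ is differentiable there, which holds by simplicity of the eigenvalue, the two derivatives must agree. I would present this as the main argument, since it also makes transparent why freezing the state in Algorithm~\ref{alg:ham-embed} is legitimate.
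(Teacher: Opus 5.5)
Your proposal is correct. Its first route (product rule, $H_{\mathcal T}\lvert\psi\rangle=E_0\lvert\psi\rangle$, normalization, then linearity in the $H_{z_\ell}$) is exactly the paper's proof. However, the paper simply assumes inside its proof that the ground state is differentiable in $\alpha$, which is not among the theorem's hypotheses. Your Weyl-gap plus Riesz-projector construction of a $C^1$ normalized representative supplies precisely that missing step. Writing $\widehat C_{\mathcal T}(\psi_\alpha,\alpha)=\operatorname{Tr}[P(\alpha)H_{\mathcal T}(\alpha)]=E_0(\alpha)$ also makes the left-hand side manifestly phase-independent.

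The variational sandwich you propose as the main argument is a genuinely different, envelope-theorem route: Fermat's rule applied to the nonnegative function $\langle\psi_{\alpha_t}|H_{\mathcal T}(\alpha)|\psi_{\alpha_t}\rangle-E_0(\alpha)$, which vanishes at $\alpha_t$. It never differentiates an eigenvector. It also shows directly that the frozen-state cost in Algorithm~\ref{alg:ham-embed} is tangent to $E_0$ at $\alpha_t$, which the paper only remarks afterwards.

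It does not, however, remove perturbation theory entirely. A one-sided inequality bounds difference quotients from one side only, so you still need differentiability of $E_0$ at $\alpha_t$. You can get this from the implicit function theorem applied to $\det(zI-H_{\mathcal T}(\alpha))$ at its simple root, or from $E_0=\operatorname{Tr}[PH_{\mathcal T}]$ with your Riesz projector.

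A cleaner option is to add the reverse inequality $E_0(\alpha_t)\le\langle\psi_\alpha|H_{\mathcal T}(\alpha_t)|\psi_\alpha\rangle$. Let $D_h=(H_{\mathcal T}(\alpha_t+he_j)-H_{\mathcal T}(\alpha_t))/h$, with $e_j$ the $j$-th unit vector. The two inequalities then squeeze the difference quotient of $E_0$ between $\operatorname{Tr}[P(\alpha_t)D_h]$ and $\operatorname{Tr}[P(\alpha_t+he_j)D_h]$, in either order depending on the sign of $h$. Mere continuity of the ground-state projector, which follows from the gap, then proves differentiability of $E_0$ and the stated formula at once.
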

\begin{proof}
Remember that the cost function as a function of the parameter $\alpha$ is defined as:
\begin{equation}
\widehat C_{\mathcal T}(\psi_{{\scriptscriptstyle\mathcal{T}},\alpha},\alpha)
=
\bra{\psi_{{\scriptscriptstyle\mathcal{T}},\alpha}}
H_{\mathcal T}(\alpha)
\ket{\psi_{{\scriptscriptstyle\mathcal{T}},\alpha}},
\end{equation}
where $\ket{\psi_{{\scriptscriptstyle\mathcal{T}},\alpha}}$ is a normalized ground state of $H_{\mathcal T}(\alpha)$, satisfying
$
H_{\mathcal T}(\alpha)\ket{\psi_{{\scriptscriptstyle\mathcal{T}},\alpha}}
=
E_0(\alpha)\ket{\psi_{{\scriptscriptstyle\mathcal{T}},\alpha}}$ with $E_0(\alpha)$ the ground energy of the data Hamiltonian $H_{\mathcal{T}}(\alpha)$ at the point $\alpha$.

Assuming that the ground state is non-degenerate and differentiable with respect to $\alpha$, differentiating $\widehat C_{\mathcal T}(\psi_{{\scriptscriptstyle\mathcal{T}},\alpha},\alpha)$ with respect to $\alpha_j$ gives
\begin{align}
\frac{\partial \widehat C_{\mathcal T}(\psi_{{\scriptscriptstyle\mathcal{T}},\alpha},\alpha)}{\partial \alpha_j}
={}&
\bra{\partial_j\psi_{{\scriptscriptstyle\mathcal{T}},\alpha}}
H_{\mathcal T}(\alpha)
\ket{\psi_{{\scriptscriptstyle\mathcal{T}},\alpha}}
+
\bra{\psi_{{\scriptscriptstyle\mathcal{T}},\alpha}}
\frac{\partial H_{\mathcal T}(\alpha)}{\partial\alpha_j}
\ket{\psi_{{\scriptscriptstyle\mathcal{T}},\alpha}}
\nonumber\\
&+
\bra{\psi_{{\scriptscriptstyle\mathcal{T}},\alpha}}
H_{\mathcal T}(\alpha)
\ket{\partial_j\psi_{{\scriptscriptstyle\mathcal{T}},\alpha}},
\end{align}
where $\ket{\partial_j\psi_{{\scriptscriptstyle\mathcal{T}},\alpha}}
\equiv
\partial\ket{\psi_{{\scriptscriptstyle\mathcal{T}},\alpha}}/\partial\alpha_j$.
Using the eigenvalue equation, the first and third terms combine to give
\begin{equation}
    E_0(\alpha)\left(\bra{\partial_j\psi_{{\scriptscriptstyle\mathcal{T}},\alpha}}\ket{\psi_{{\scriptscriptstyle\mathcal{T}},\alpha}}+\bra{\psi_{{\scriptscriptstyle\mathcal{T}},\alpha}}\ket{\partial_j\psi_{{\scriptscriptstyle\mathcal{T}},\alpha}}\right).
\end{equation}

Since $\ket{\psi_{{\scriptscriptstyle\mathcal{T}},\alpha}}$ is normalized, differentiating
$\bra{\psi_{{\scriptscriptstyle\mathcal{T}},\alpha}}\ket{\psi_{{\scriptscriptstyle\mathcal{T}},\alpha}}=1$ yields
$\bra{\partial_j\psi_{{\scriptscriptstyle\mathcal{T}},\alpha}}\ket{\psi_{{\scriptscriptstyle\mathcal{T}},\alpha}}
+
\bra{\psi_{{\scriptscriptstyle\mathcal{T}},\alpha}}\ket{\partial_j\psi_{{\scriptscriptstyle\mathcal{T}},\alpha}}=0$.
Therefore,
\begin{equation}
\frac{\partial \widehat C_{\mathcal T}(\psi_{{\scriptscriptstyle\mathcal{T}},\alpha},\alpha)}{\partial\alpha_j}
=
\bra{\psi_{{\scriptscriptstyle\mathcal{T}},\alpha}}
\frac{\partial H_{\mathcal T}(\alpha)}{\partial\alpha_j}
\ket{\psi_{{\scriptscriptstyle\mathcal{T}},\alpha}},
\label{eq: hellmann_feynman}
\end{equation}
which is the Hellmann-Feynman theorem.

In particular, at iteration $t$ of Algorithm~\ref{alg:ham-embed}, after computing the ground state $\ket{\psi_{{\scriptscriptstyle\mathcal{T}},\alpha_t}}$ of $H_{\mathcal T}(\alpha_t)$, Eq.~\eqref{eq: hellmann_feynman} gives
\begin{equation}
\left.
\frac{\partial \widehat C_{\mathcal T}(\psi_{{\scriptscriptstyle\mathcal{T}},\alpha},\alpha)}{\partial\alpha_j}
\right|_{\alpha=\alpha_t}
=
\bra{\psi_{{\scriptscriptstyle\mathcal{T}},\alpha_t}}
\left.
\frac{\partial H_{\mathcal T}(\alpha)}
{\partial\alpha_j}
\right|_{\alpha=\alpha_t}
\ket{\psi_{{\scriptscriptstyle\mathcal{T}},\alpha_t}},
\end{equation}
Hence, although the ground state depends implicitly on the parameters $\alpha$, its derivative does not need to be evaluated when computing the gradient of the ground-state energy at the current parameter value. Equivalently, the frozen-state cost
$\widehat C_{\mathcal T}(\psi_{{\scriptscriptstyle\mathcal{T}},\alpha_t},\alpha)$
has the same gradient as the ground-state energy $E_0(\alpha)$ at $\alpha=\alpha_t$.

Finally, using
$H_{\mathcal T}(\alpha)
=
I - \frac{1}{N}\sum_{\ell=1}^{N}H_{z_\ell}(\alpha)$,
we obtain
\begin{equation}
\left.
\frac{\partial \widehat C_{\mathcal T}(\psi_{{\scriptscriptstyle\mathcal{T}},\alpha},\alpha)}{\partial\alpha_j}
\right|_{\alpha=\alpha_t}
=
-\frac{1}{N}
\sum_{\ell=1}^{N}
\bra{\psi_{{\scriptscriptstyle\mathcal{T}},\alpha_t}}
\left.
\frac{\partial H_{z_\ell}(\alpha)}
{\partial\alpha_j}
\right|_{\alpha=\alpha_t}
\ket{\psi_{{\scriptscriptstyle\mathcal{T}},\alpha_t}},
\end{equation}
which gives the gradient used in Eq.~\eqref{eq: derivative_cost} in the main text.
\end{proof}

\subsection{Gradients with approximate ground states}

In practice, the ground states in the optimization loop are obtained through quantum algorithms like QPE or SKQD. These may return only an approximate ground state
$\ket{\widetilde{\psi}_\alpha}$ of $H_{\mathcal T}(\alpha)$. Defining the corresponding approximate energy as
\begin{equation}
\widetilde{E}(\alpha)
=
\bra{\widetilde{\psi}_\alpha}
H_{\mathcal T}(\alpha)
\ket{\widetilde{\psi}_\alpha},
\end{equation}
its exact derivative with respect to $\alpha_j$ is
\begin{align}
\frac{\partial \widetilde{E}(\alpha)}{\partial \alpha_j}
={}&
\bra{\partial_j\widetilde{\psi}_\alpha}
H_{\mathcal T}(\alpha)
\ket{\widetilde{\psi}_\alpha}
+
\bra{\widetilde{\psi}_\alpha}
\frac{\partial H_{\mathcal T}(\alpha)}{\partial \alpha_j}
\ket{\widetilde{\psi}_\alpha}
\nonumber\\
&+
\bra{\widetilde{\psi}_\alpha}
H_{\mathcal T}(\alpha)
\ket{\partial_j\widetilde{\psi}_\alpha},
\end{align}
where
$\ket{\partial_j\widetilde{\psi}_\alpha}
\equiv
\partial\ket{\widetilde{\psi}_\alpha}/\partial\alpha_j$.
Therefore, unless $\ket{\widetilde{\psi}_\alpha}$ is an exact eigenstate, the
Hellmann-Feynman cancellation does not hold exactly for the derivative of
$\widetilde{E}(\alpha)$.

Nevertheless, the gradient obtained by keeping the approximate ground state
fixed provides a controlled approximation to the exact ground-state gradient.
Let $\ket{\psi_{{\scriptscriptstyle\mathcal{T}},\alpha}}$ denote the exact ground state of
$H_{\mathcal T}(\alpha)$ and define its fidelity with the approximate state as
$F(\alpha)=|\bra{\psi_{{\scriptscriptstyle\mathcal{T}},\alpha}}\ket{\widetilde{\psi}_\alpha}|^2$. Using Hölder's inequality together with the pure-state trace-distance relation
$\left\|
\ketbra{\widetilde{\psi}_\alpha}{\widetilde{\psi}_\alpha}
-
\ketbra{\psi_{{\scriptscriptstyle\mathcal{T}},\alpha}}{\psi_{{\scriptscriptstyle\mathcal{T}},\alpha}}
\right\|_1
=
2\sqrt{1-F(\alpha)}$,
we obtain
\begin{equation}
\left|
\bra{\widetilde{\psi}_\alpha}
\frac{\partial H_{\mathcal T}(\alpha)}{\partial\alpha_j}
\ket{\widetilde{\psi}_\alpha}
-
\bra{\psi_{{\scriptscriptstyle\mathcal{T}},\alpha}}
\frac{\partial H_{\mathcal T}(\alpha)}{\partial\alpha_j}
\ket{\psi_{{\scriptscriptstyle\mathcal{T}},\alpha}}
\right|
\nonumber 
\leq
2
\left\|
\frac{\partial H_{\mathcal T}(\alpha)}{\partial\alpha_j}
\right\|
\sqrt{1-F(\alpha)}.
\label{eq: approximate_hf_bound}
\end{equation}
Using the Hellmann--Feynman relation in Eq.~\eqref{eq: hellmann_feynman},
this gives
\begin{equation}
\left|
\bra{\widetilde{\psi}_\alpha}
\frac{\partial H_{\mathcal T}(\alpha)}{\partial\alpha_j}
\ket{\widetilde{\psi}_\alpha}
-
\frac{\partial E_0(\alpha)}{\partial\alpha_j}
\right|
\leq
2
\left\|
\frac{\partial H_{\mathcal T}(\alpha)}{\partial\alpha_j}
\right\|
\sqrt{1-F(\alpha)}.
\end{equation}
Hence, if the fidelity of the ground state found by the quantum solver is $F(\alpha)\geq 1-\epsilon$, the error in the gradient obtained from
the approximate ground state is at most
$2\left\|\partial H_{\mathcal T}(\alpha)/\partial\alpha_j\right\|
\sqrt{\epsilon}$. Moreover, as discussed in Appendix~\ref{app: barren plateaus}, for the standard Pauli encoding introduced in the main text, the operator norm $\left\|\partial H_{\mathcal T}(\alpha)/\partial\alpha_j\right\|$ is bounded.

\section{Absence of barren plateaus in the Hamiltonian embedding model}\label{app: barren plateaus}
One appealing feature of Hamiltonian embedding is that, for local encoding Hamiltonians, once the dataset-encoding ground state has been obtained, inference reduces to estimating a local observable on the learned state. Together with enabling classical-shadow-based inference, this locality structure also allows us to establish trainability guarantees for the model. In particular, for a particular class of Pauli-rotation encoding $U(x,\alpha)$, we can theoretically prove that a nontrivial subset of the trainable parameters avoids
barren plateaus.  We present the proof for the simple observable $A=1/n\sum_i Z_i$ and a family of shallow Hamiltonian embedding architectures. The same argument extends to observables that are sums of bounded-locality Pauli terms, provided that each Pauli term entering the bound anticommutes with at least one final-layer rotation generator acting on its support.

\begin{theorem}[Cost-level absence of barren plateaus]\label{thm: no bp}
Let $U(x,\alpha)$ be a depth-$K$ layered unitary of the form
$
U(x,\alpha)
=
V^{K}(\alpha^{K})U_{\rm enc}^{K}(x)
\cdots
V^{1}(\alpha^{1})U_{\rm enc}^{1}(x)$,
where $\alpha=(\alpha^{1},\ldots,\alpha^{K})$ contains $M$ trainable parameters. Assume
that $A=\frac{1}{n}\sum^n_{i=1}Z_i$ and the final trainable layer contains one independent  $X$- or $Y$-rotation on each
readout qubit, parameterized by the angles
$\alpha_1^K,\ldots,\alpha_n^K$ with $\alpha_i^K\in[0,2\pi]$. Let
\begin{equation}
C_t(\alpha)
:=
\widehat C(\psi_{{\scriptscriptstyle\mathcal{T}},\alpha_t},\alpha)
=
\mathbb E_{(x,y)\in\mathcal T}
\left[
\ell_{\rm eng}
\left(
\psi_{{\scriptscriptstyle\mathcal{T}},\alpha_t},
H_{(x,y)}(\alpha)
\right)
\right]
\end{equation}
be the empirical energy cost at optimization step $t$, where the state
$\psi_{{\scriptscriptstyle\mathcal{T}},\alpha_t}$ is kept fixed when differentiating with respect to $\alpha$.
Assume that, for some $\gamma>0$,
\begin{equation}\label{eq: energy_assumpion}
C_t(\alpha_t)
\leq
\frac12-\gamma.
\end{equation}
Then the coordinate-averaged variance of the gradients of the cost function satisfies
\begin{equation}\label{eq: bp result}
\frac1M
\sum_{\mu=1}^M
\operatorname{Var}_{\alpha_\mu}
\left(
\frac{\partial C_t(\alpha)}{\partial \alpha_\mu}
\right)
\geq
\frac{\gamma^2}{2M n},
\end{equation}
where $\operatorname{Var}_{\alpha_\mu}$ denotes the coordinate-wise variance with respect to
$\alpha_\mu\sim\operatorname{Unif}([0,2\pi])$ while keeping
all the other parameters fixed at their optimization-step values.
\end{theorem}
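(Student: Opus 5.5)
The plan is to make the dependence of $C_t$ on each final-layer angle completely explicit, compute its coordinate-wise gradient variance in closed form, and then convert the energy assumption \eqref{eq: energy_assumpion} into a lower bound on the sum of these variances. All other coordinates $\mu$ contribute nonnegative variances, so they can simply be dropped.

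\textbf{Rewriting the cost.} Using $O(y)=\tfrac12[I+(-1)^yA]$ with $s_\ell=(-1)^{y_\ell}$ and $A=\frac1n\sum_i Z_i$, the frozen-state cost becomes
\begin{equation*}
C_t(\alpha)=\frac12-\frac{1}{2n}\sum_{i=1}^n m_i(\alpha),\qquad m_i(\alpha)=\frac1N\sum_{\ell=1}^N s_\ell\,\langle\psi_{{\scriptscriptstyle\mathcal{T}},\alpha_t}|U^\dagger(x_\ell,\alpha)Z_iU(x_\ell,\alpha)|\psi_{{\scriptscriptstyle\mathcal{T}},\alpha_t}\rangle .
\end{equation*}

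\textbf{Isolating a final-layer angle.} Write $U=R(\alpha^K)W(x,\alpha')$, where $R=\prod_i e^{-i\alpha^K_i G_i/2}$ with $G_i\in\{X_i,Y_i\}$. The remaining part $W$ is independent of $\alpha^K$, and the angles $\alpha^K_i$ are independent of $x$. Since the final-layer rotations act on distinct qubits, conjugating $Z_j$ by $R$ only involves $\alpha^K_j$. Moreover, because $G_j$ anticommutes with $Z_j$,
\begin{equation*}
R^\dagger Z_jR=\cos\alpha_j^K\,Z_j+\sin\alpha_j^K\,Q_j ,
\end{equation*}
where $Q_j=\pm Y_j$ or $\pm X_j$. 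Hence, as a function of $\alpha_i^K$ alone,
\begin{equation*}
C_t=c_i-\frac{1}{2n}\left(a_i\cos\alpha_i^K+b_i\sin\alpha_i^K\right).
\end{equation*}
Here $a_i=\frac1N\sum_\ell s_\ell\langle W_\ell^\dagger Z_iW_\ell\rangle$, $b_i=\frac1N\sum_\ell s_\ell\langle W_\ell^\dagger Q_iW_\ell\rangle$, and $c_i$ are all independent of $\alpha_i^K$.

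\textbf{Variance computation.} The derivative is $\frac{1}{2n}(a_i\sin\alpha_i^K-b_i\cos\alpha_i^K)$. Its mean over $\alpha_i^K\sim\mathrm{Unif}[0,2\pi]$ vanishes, and its variance is exactly $(a_i^2+b_i^2)/(8n^2)$.

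\textbf{Using the energy assumption.} At $\alpha=\alpha_t$, Cauchy--Schwarz gives $m_i(\alpha_t)=a_i\cos+b_i\sin\le\sqrt{a_i^2+b_i^2}$. The hypothesis $C_t(\alpha_t)\le\frac12-\gamma$ is equivalent to $\sum_i m_i(\alpha_t)\ge 2\gamma n$. By Cauchy--Schwarz again,
\begin{equation*}
\sum_i(a_i^2+b_i^2)\ge\sum_i m_i^2\ge\frac{(\sum_i m_i)^2}{n}\ge 4\gamma^2n .
\end{equation*}
Therefore the variances over the $n$ final-layer coordinates sum to at least $4\gamma^2 n/(8n^2)=\gamma^2/(2n)$. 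Adding the nonnegative variances of the other $M-n$ coordinates and dividing by $M$ yields \eqref{eq: bp result}.

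\textbf{Main obstacle.} The only delicate step is verifying the exact trigonometric form. This requires that $\alpha_i^K$ affects only the $Z_i$ term, and that the $\alpha_t$-evaluated quantities $m_i$ are controlled by the amplitude $\sqrt{a_i^2+b_i^2}$. Both follow from the final layer being a product of single-qubit rotations that anticommute with $Z_i$, with $x$-independent angles, placed after all encoding gates.

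A Markov-type count on the same quantities (since $|m_i|\le1$) would also give the informal statement: at least $2\gamma^2n$ indices have $m_i\ge\gamma$ and thus variance $\ge\gamma^2/(8n^2)$. That refinement is not needed for the averaged bound.
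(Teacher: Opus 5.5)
Your proposal is correct and follows the paper's proof essentially step for step. Both arguments use the same sinusoidal dependence $a_i\cos\alpha_i^K+b_i\sin\alpha_i^K$ on each final-layer angle, the same closed-form variance $(a_i^2+b_i^2)/(8n^2)$, and the same Cauchy--Schwarz/Jensen step (you apply it to the $m_i(\alpha_t)$ before bounding them by the amplitudes, the paper applies it after), and both drop the remaining coordinates by nonnegativity.

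One small remark on your closing aside, which lies outside the statement being proved: thresholding at $m_i\ge\gamma$ only gives variance at least $\gamma^2/(8n^2)$, not the $\gamma^2/(4n^2)$ of the informal theorem. To recover that constant you need a stronger threshold, as the paper's corollary does by counting indices with $R_i^2\ge 2\gamma^2$.
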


\begin{proof}
At optimization step $t$, we keep the state $\ket{\psi_{{\scriptscriptstyle\mathcal{T}},\alpha_t}}$ fixed and optimize only
the circuit parameters appearing in the data-dependent Hamiltonian. For a single training
example $z_\ell=(x_\ell,y_\ell)$, the fixed-state energy loss is defined as in \cref{eq: energy_loss}:
\begin{equation}
\ell_{\mathrm{eng}}(\psi_{{\scriptscriptstyle\mathcal{T}},\alpha_t},H_{z_\ell}(\alpha))
=
1-\bra{\psi_{{\scriptscriptstyle\mathcal{T}},\alpha_t}}
H_{z_\ell}(\alpha)
\ket{\psi_{{\scriptscriptstyle\mathcal{T}},\alpha_t}},
\end{equation}
where the state $\psi_{{\scriptscriptstyle\mathcal{T}},\alpha_t}$ is kept fixed when differentiating with respect to
$\alpha$. Let
$\rho_t
=
\ket{\psi_{{\scriptscriptstyle\mathcal{T}},\alpha_t}}\bra{\psi_{{\scriptscriptstyle\mathcal{T}},\alpha_t}}$.
Recalling the expression for $H_T(\alpha)$, with
$O(y)=\frac{I+(-1)^yA}{2}$ and $A=\frac1n\sum_{j=1}^n Z_j$,  we define
\begin{equation}\label{eq: hi}
h_j(x;\alpha)
:=
\operatorname{Tr}
\left[
\rho_t
U(x,\alpha)^\dagger Z_j U(x,\alpha)
\right].
\end{equation}
Let us define $s(x)=(-1)^{y(x)}$.
Then the fixed-state single-sample energy loss is
$\ell_{\mathrm{eng}}(\psi_{{\scriptscriptstyle\mathcal{T}},\alpha_t},H_{(x,y)}(\alpha))
=
\frac12
-
\frac{s(x)}{2n}
\sum_{j=1}^n
h_j(x;\alpha)$,
while the frozen total energy cost at optimization step $t$ is
\begin{equation}
C_t(\alpha)
=
\frac12
-
\frac{1}{2n}
\sum_{j=1}^n
\mathbb E_{x\in \T}
\left[
s(x)h_j(x;\alpha)
\right].
\end{equation}
The average is taken over the training set $\T$. By the assumption in \cref{eq: energy_assumpion},
$C_t(\alpha_t)
\leq
\frac12-\gamma$.
Equivalently,
\begin{equation}\label{eq: direct_low_energy_signal}
\frac1n
\sum_{j=1}^n
\mathbb E_{x\in \T}
\left[
s(x)h_j(x;\alpha_t)
\right]
\geq
2\gamma .
\end{equation}

We now isolate the dependence of the encoding unitary $U(x,\alpha)$ on the final trainable layer.
After fixing all previous layers, including both the data-encoding gates and the
trainable gates in layers $1,\ldots,K-1$, we write
$U(x,\alpha)
=
V_K(\alpha^K)U_{<K}(x,\alpha^{<K})$,
where $U_{<K}(x,\alpha^{<K})$ denotes the product of all layers before the final one, and
$V_K(\alpha^K)$ is the last trainable layer. At optimization step $t$, the parameters
$\alpha^{<K}$ are fixed at $\alpha_t^{<K}$, so we absorb them into the frozen unitary
$W_t(x)
:=
U_{<K}(x,\alpha_t^{<K})$.
Thus, for the purpose of analyzing the final layer at optimization step $t$, we write
$U(x,\alpha^K)
=
V_K(\alpha^K)W_t(x)$.
In particular, we take the final layer to consist of one trainable single-qubit rotation on each readout
qubit,
$\alpha^K
=
(\alpha_1^K,\ldots,\alpha_n^K)$.

By assumption, the rotation acting on qubit $i$ is generated by a Pauli operator that does
not commute with the measured observable $Z_i$. For example, if the final gate on qubit $i$ is an
$X_i$-rotation, then
$R_{X_i}(\alpha_i^K)^\dagger
Z_i
R_{X_i}(\alpha_i^K)
=
Z_i\cos \alpha_i^K
+
Y_i\sin \alpha_i^K$.
Equivalently, after freezing $\rho_t$, the dataset, all earlier parameters, and all
final-layer parameters except $\alpha_i^K$, there exist real coefficients $a_i(x)$ and
$b_i(x)$ such that we can rewrite $h_i$ in \cref{eq: hi} as
\begin{equation}\label{eq: direct_hi_sinusoid}
h_i(x;\alpha)
=
a_i(x)\cos \alpha_i^K
+
b_i(x)\sin \alpha_i^K .
\end{equation}
Note that the coefficients $a_i(x)$ and $b_i(x)$ are independent of $\alpha_i^K$.

We now pass to the empirical cost by averaging $\ell_{\text{eng}}$ over the input $x$. Since $\alpha_i^K$ is independent of the data point $x$, we can 
define the labeled empirical coefficients
$
\overline A_i
:=
\mathbb E_{x\in \T}
\left[
s(x)a_i(x)
\right]
$
and
$
\overline B_i
:=
\mathbb E_{x\in \T}
\left[
s(x)b_i(x)
\right]$ and obtain that:
\begin{equation}\label{eq: empirical_signal_direct}
\mathbb E_{x\in \T}
\left[
s(x)h_i(x;\alpha)
\right]
=
\overline A_i\cos \alpha_i^K
+
\overline B_i\sin \alpha_i^K .
\end{equation}
Define the empirical amplitude $
R_i
:=
\sqrt{\overline A_i^2+\overline B_i^2}$.
By the amplitude-phase form of a sinusoid, there exists a phase $\varphi_i$ such that
\begin{equation}\label{eq: empirical_signal_amplitude_phase}
\mathbb E_{x\in \T}
\left[
s(x)h_i(x;\alpha)
\right]
=
R_i\cos(\alpha_i^K+\varphi_i).
\end{equation}
This implies that, for every value of $\alpha_i^K$,
\begin{equation}\label{eq: empirical_signal_bounded_by_R}
\left|
\mathbb E_{x\in \T}
\left[
s(x)h_i(x;\alpha)
\right]
\right|
\leq
R_i .
\end{equation}
Evaluating \cref{eq: empirical_signal_bounded_by_R} at $\alpha=\alpha_t$ we obtain:
\begin{equation}
\mathbb E_{x\in \T}
\left[
s(x)h_i(x;\alpha_t)
\right]
\leq
\left|
\mathbb E_{x\in \T}
\left[
s(x)h_i(x;\alpha_t)
\right]
\right|
\leq
R_i .
\end{equation}
Therefore,
$
\frac1n
\sum_{i=1}^n
R_i
\geq
2\gamma $ and, by Jensen's inequality,
\begin{equation}\label{eq: bound Ri}
\frac1n
\sum_{i=1}^n
R_i^2
\geq
\left(
\frac1n
\sum_{i=1}^n
R_i
\right)^2
\geq
4\gamma^2 .
\end{equation}
Equivalently, $
\sum_{i=1}^n
R_i^2
\geq
4n\gamma^2 $.

Using \cref{eq: empirical_signal_direct}, the frozen empirical cost can be written as
\begin{equation}\label{eq: cost_direct_final_layer_form}
C_t(\alpha^K)
=
\frac12
-
\frac{1}{2n}
\sum_{i=1}^n
\left(
\overline A_i\cos\alpha_i^K
+
\overline B_i\sin\alpha_i^K
\right),
\end{equation}
where all earlier-layer parameters are fixed at $\alpha_t^{<K}$.

 The derivative of the frozen empirical cost with respect to the final-layer parameter $\alpha_i^K$ can be computed using Eq.~\eqref{eq: cost_direct_final_layer_form}
\begin{equation}
\frac{\partial C_t(\boldsymbol{\alpha}^K)}
{\partial \alpha_i^K}
=
\frac{1}{2n}
\left(
A_i \sin \alpha_i^K
-
B_i \cos \alpha_i^K
\right).
\label{eq:gradient_final_layer}
\end{equation}
Following the standard barren-plateau analysis, we evaluate the gradient variance with respect to independently and uniformly distributed final-layer angles
$\alpha_i^K\sim\mathrm{Unif}([0,2\pi])$, while keeping all other
parameters fixed. Since
\begin{equation}
\mathbb{E}_{\alpha_i^K}
\left[
\frac{\partial C_t}{\partial \alpha_i^K}
\right]
=0,
\end{equation}
we obtain
\begin{align}
\operatorname{Var}_{\alpha_i^K}
\left(
\frac{\partial C_t}{\partial \alpha_i^K}
\right)
&=
\mathbb{E}_{\alpha_i^K}
\left[
\left(
\frac{\partial C_t}{\partial \alpha_i^K}
\right)^2
\right]
\nonumber\\
&=
\frac{1}{4n^2}
\mathbb{E}_{\alpha_i^K}
\left[
\left(
A_i\sin\alpha_i^K-B_i\cos\alpha_i^K
\right)^2
\right]
\nonumber\\
&=
\frac{A_i^2+B_i^2}{8n^2}
=
\frac{R_i^2}{8n^2}.
\label{eq:gradient_variance_final_layer}
\end{align}
Summing over the $n$ final-layer parameters and using
Eq.~\eqref{eq: bound Ri}, we therefore find
\begin{equation}
\sum_{i=1}^n
\operatorname{Var}_{\alpha_i^K}
\left(
\frac{\partial C_t}{\partial \alpha_i^K}
\right)
=
\frac{1}{8n^2}
\sum_{i=1}^n R_i^2
\ge
\frac{\gamma^2}{2n}.
\label{eq:final_layer_gradient_variance}
\end{equation}

Finally, let $M$ denote the total number of trainable parameters in
the circuit. Since the variance associated with every parameter
direction is nonnegative, the average variance over all trainable
parameters is lower bounded by the contribution of the $n$
final-layer parameters alone:
\begin{align}
\frac{1}{M}
\sum_{\mu=1}^M
\operatorname{Var}_{\alpha_\mu}
\left(
\frac{\partial C_t}{\partial\alpha_\mu}
\right)
&\ge
\frac{1}{M}
\sum_{i=1}^n
\operatorname{Var}_{\alpha_i^K}
\left(
\frac{\partial C_t}{\partial\alpha_i^K}
\right)
\nonumber\\
&\ge
\frac{\gamma^2}{2Mn}.
\end{align}
This proves Eq.~\eqref{eq: bp result}.

If the ansatz has depth $K$ and $O(n)$ trainable parameters per
layer, then $M=O(Kn)$. Hence, there exists a constant $c>0$ such that
\begin{equation}
\frac{1}{M}
\sum_{\mu=1}^M
\operatorname{Var}_{\alpha_\mu}
\left(
\frac{\partial C_t}{\partial\alpha_\mu}
\right)
\ge
\frac{c\gamma^2}{Kn^2}.
\end{equation}

\end{proof}

We now show that the results in Theorem~\ref{thm: no bp} are enough to guarantee that a non-trivial fraction of parameters in the model have gradient variance that remains inverse-polynomial in the system size, and therefore avoid barren plateaus. 

\begin{corollary}[Non-negligible fraction of parameters avoids barren plateaus.]
\label{cor:bp-fraction}
Under the assumptions of Theorem~\ref{thm: no bp}, let
\begin{equation}
S_\gamma
:=
\left\{
i\in\{1,\ldots,n\}:
\operatorname{Var}_{\alpha_i^K}
\left(
\frac{\partial C_t}{\partial \alpha_i^K}
\right)
\ge
\frac{\gamma^2}{4n^2}
\right\}.
\end{equation}

Then
\begin{equation}
|S_\gamma|
\ge
\frac{2\gamma^2}{1-2\gamma^2}\,n.
\end{equation}
Hence, if the total number of trainable parameters satisfies
$M=O(Kn)$, an $\Omega(\gamma^2/K)$ fraction of all trainable
parameters has gradient variance at least $\gamma^2/(4n^2)$.
In particular, for $K=O(1)$ number of layers and $\gamma=\Theta(1)$,
a constant fraction of the trainable parameters has at least inverse-polynomial gradient variance.
\end{corollary}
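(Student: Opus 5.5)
The plan is a reverse-Markov (pigeonhole) argument on the per-parameter variances computed in the proof of Theorem~\ref{thm: no bp}. Two facts from that proof will be used. First, for each final-layer parameter the variance is exactly $\operatorname{Var}_{\alpha_i^K}(\partial C_t/\partial\alpha_i^K)=R_i^2/(8n^2)$. Second, these amplitudes satisfy $\sum_{i=1}^n R_i^2\ge 4n\gamma^2$, by Eq.~\eqref{eq: bound Ri}. In terms of $R_i$, membership $i\in S_\gamma$ is equivalent to $R_i^2\ge 2\gamma^2$.

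The only new ingredient is a uniform upper bound on each $R_i$. By Eq.~\eqref{eq: empirical_signal_amplitude_phase}, $R_i=\max_{\alpha_i^K}\mathbb E_{x\in\mathcal T}[s(x)h_i(x;\alpha)]$, with all other parameters held fixed. Since $|s(x)|=1$ and $|h_i(x;\alpha)|=|\operatorname{Tr}[\rho_t U^\dagger Z_i U]|\le \|Z_i\|_\infty=1$, the empirical average is at most $1$ for every angle. Hence $R_i^2\le 1$, i.e.\ $\operatorname{Var}_{\alpha_i^K}\le 1/(8n^2)$.

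Now set $k=|S_\gamma|$ and split the sum $\sum_i R_i^2$ into indices inside and outside $S_\gamma$. Inside $S_\gamma$ each term is at most $1$; outside $S_\gamma$ each term is strictly less than $2\gamma^2$. This gives
\begin{equation}
4n\gamma^2\le \sum_{i=1}^n R_i^2\le k+(n-k)\,2\gamma^2 ,
\end{equation}
which rearranges to $k(1-2\gamma^2)\ge 2\gamma^2 n$. Dividing by $1-2\gamma^2$ yields the claimed bound, provided this quantity is positive.

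Checking positivity is the main (minor) obstacle. Because $0\preceq O(y)\preceq I$, the cost satisfies $C_t\ge 0$, so the hypothesis $C_t(\alpha_t)\le \tfrac12-\gamma$ forces $\gamma\le\tfrac12$. Then $2\gamma^2\le\tfrac12<1$, the division is legitimate, and moreover $k\ge 2\gamma^2 n$.

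For the fraction statement, the bound $k\ge 2\gamma^2 n$ together with $M\le cKn$ gives $k/M\ge 2\gamma^2/(cK)=\Omega(\gamma^2/K)$. In particular this fraction is constant when $K=O(1)$ and $\gamma=\Theta(1)$, and every parameter in $S_\gamma$ has variance at least $\gamma^2/(4n^2)$, which is inverse-polynomial in $n$.
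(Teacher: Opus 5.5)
Your proposal is correct and follows the paper's proof essentially step for step. Both use the uniform bound $R_i\le 1$, split $\sum_i R_i^2\ge 4n\gamma^2$ at the threshold $2\gamma^2$, and justify dividing by $1-2\gamma^2$ via $\gamma\le 1/2$; the only cosmetic difference is that you identify $S_\gamma$ directly with $\{i: R_i^2\ge 2\gamma^2\}$ through the exact variance formula $R_i^2/(8n^2)$, whereas the paper defines that set separately and notes it is contained in $S_\gamma$.
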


\begin{proof}
From Eq.~\eqref{eq: bound Ri} in Theorem~\ref{thm: no bp}, we have
$
\sum_{i=1}^n R_i^2 \ge 4n\gamma^2.$
Moreover, $R_i\le 1$ for every $i$. Indeed, by Eq.~\eqref{eq: empirical_signal_amplitude_phase},
\begin{equation}
R_i
=
\max_{\alpha_i^K}
\left|
\mathbb{E}_{x\in\mathcal T}
\left[
s(x)h_i(x;\alpha)
\right]
\right|,
\end{equation}
and since $s(x)\in\{-1,+1\}$ and
$h_i(x;\alpha)=\operatorname{Tr}
[\rho_t U(x,\alpha)^\dagger Z_i U(x,\alpha)]\in[-1,1]$, the absolute value of the expectation is at most one.

Define the set of indices $i$ whose corresponding $R^2_i$ is greater than $2\gamma^2$
\begin{equation}
\widetilde S_\gamma
:=
\left\{
i\in\{1,\ldots,n\}:R_i^2\ge 2\gamma^2
\right\},
\end{equation}
and let $m:=|\widetilde S_\gamma|$. For the indices in
$\widetilde S_\gamma$ we use the bound $R_i^2\le 1$ derived above, while for the
remaining $n-m$ indices we have $R_i^2<2\gamma^2$. Therefore,
\begin{equation}
\sum_{i=1}^n R_i^2
\le
m+2\gamma^2(n-m).
\end{equation}
Combining this with Eq.~\eqref{eq: bound Ri} gives
$
4n\gamma^2
\le
m+2\gamma^2(n-m)$,
and hence
\begin{equation}
m(1-2\gamma^2)\ge 2n\gamma^2.
\end{equation}
Since the assumption $C_t(\alpha_t)\le 1/2-\gamma$ implies
$0<\gamma\le 1/2$, we have $1-2\gamma^2>0$, and thus the number of indices $i$ whose $R^2_i$ is greater than $2\gamma^2$ is  
\begin{equation}
m
\ge
\frac{2\gamma^2}{1-2\gamma^2}\,n.
\end{equation}

For every $i\in\widetilde S_\gamma$, the same argument as in Theorem~\ref{thm: no bp} allows us to bound the variance of the gradient for the parameter $\alpha_i^K$ using the quantity $R_i$
\begin{equation}
\operatorname{Var}_{\alpha_i^K}
\left(
\frac{\partial C_t}{\partial \alpha_i^K}
\right)
=
\frac{R_i^2}{8n^2}
\ge
\frac{\gamma^2}{4n^2}.
\end{equation}
Therefore $\widetilde S_\gamma\subseteq S_\gamma$, and hence
\begin{equation}
|S_\gamma|
\ge
|\widetilde S_\gamma|
\ge
\frac{2\gamma^2}{1-2\gamma^2}\,n.
\end{equation}

Finally, if $M=O(Kn)$, then
\begin{equation}
\frac{|S_\gamma|}{M}
=
\Omega\!\left(\frac{\gamma^2}{K}\right).
\end{equation}
Thus, for constant depth $K=O(1)$ and
$\gamma\ge 1/\operatorname{poly}(n)$, an inverse-polynomial fraction
of all trainable parameters has inverse-polynomial gradient variance.
\end{proof}

Our result concerns the cost evaluated for different $\alpha$ with the prepared ground state held fixed, as in line 21 of Algorithm~\ref{alg:ham-embed}. It shows that varying a non-negligible fraction of the parameters can produce cost differences of at least inverse-polynomial size. We note that, in general, the absence of barren plateaus alone does not guarantee successful training, as optimization may still encounter local minima or other stationary points.

\section{Universality of the Hamiltonian Embedding model}
\label{app:shallow-universality}

In this Appendix, we show that the Hamiltonian embedding model allows for universal construction. In particular, given a measurable target function $f:\Omega\subset \mathbb{R}^d \to\{0,1\}$ with inputs from $\Omega=[-\pi L,\pi L]^d$, where $L>0$ is an arbitrary parameter determining the size of the input domain, we construct an HE model with a fixed product encoding $U(x)$ whose ground-state classifier approximates $f$ up to $\varepsilon$ error under
the uniform input distribution on~$\Omega$.

\begin{theorem}[Universality with constant-depth encodings]
Let $\Omega=[-\pi L,\pi L]^d$ be the input space, $\mu$ be the uniform
distribution on $\Omega$ and let
$f:\Omega\to\{0,1\}$ be a measurable function. For every
$\varepsilon,\delta\in(0,1)$, there exist finite
$n,N\in\mathbb N$ and a constant depth encoding $U(x)$ such that given a training dataset
$\mathcal T=\{(x_\ell,f(x_\ell))\}_{\ell=1}^N$,
with $x\sim \mu$ the following holds. Let
$|\psi_{\mathcal T}\rangle$ be a ground state of
\begin{equation}
H_{\mathcal T}
=I-\frac1N\sum_{\ell=1}^N
U(x_\ell)^\dagger O(f(x_\ell))U(x_\ell),
\end{equation}
where $O(y)=(I+(-1)^yA)/2$ with $A=n^{-1}\sum_{i=1}^n Z_i$.
Then, the HE classifier $\tilde f_{\mathcal T}$ defined by the prediction rule in Eq.~\eqref{eq: prediction rule}, using the fixed encoding $U(x)$ and the ground state of $H_{\mathcal T}$, satisfies with at least $1-\delta$ probability over
the training dataset,
\begin{equation}
\Pr_{x\sim\mu}
[\tilde f_{\mathcal T}(x)\neq f(x)]
\leq\varepsilon.
\end{equation}
\end{theorem}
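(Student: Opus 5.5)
The plan is to exploit the fact that, for a product encoding and the one-local readout $A=n^{-1}\sum_i Z_i$, the embedding Hamiltonian is a sum of single-qubit terms. Its ground state is therefore a product state that can be written down in closed form, and the resulting classifier reduces to a cell-wise majority vote over the training labels. Concretely, I will partition $\Omega$ into $m^d$ congruent cubes $\{Q_c\}$ of side $2\pi L/m$ and use $n=2m^d$ qubits, two for each cell: $(c,a)$ and $(c,b)$. The encoding is a single layer of $x$-dependent $Y$-rotations, which is depth one. On qubit $(c,a)$ the Heisenberg-rotated axis is $n_{c,a}(x)=\hat z$ if $x\in Q_c$ and $+\hat x$ otherwise, i.e.\ rotation angle $0$ or $\pi/2$. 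On qubit $(c,b)$ it is $n_{c,b}(x)=\hat z$ if $x\in Q_c$ and $-\hat x$ otherwise.

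First step: compute the ground state. Writing $s_\ell=(-1)^{y_\ell}$, we have
$H_{\mathcal T}=\tfrac12 I-\tfrac{1}{2n}\sum_{i}\vec B_i\cdot\vec\sigma_i$ with $\vec B_i=N^{-1}\sum_\ell s_\ell\, n_i(x_\ell)$. This is minimized by the product state whose $i$-th Bloch vector is $\vec r_i=\vec B_i/|\vec B_i|$; if some $\vec B_i=0$, any choice works. For cell $c$ let $w_c=N^{-1}\sum_{\ell:x_\ell\in Q_c}s_\ell$ and $\beta_c=N^{-1}\sum_{\ell:x_\ell\notin Q_c}s_\ell$. Then $\vec B_{c,a}=w_c\hat z+\beta_c\hat x$ and $\vec B_{c,b}=w_c\hat z-\beta_c\hat x$, so the two Bloch vectors are mirror images with equal $z$-components $w_c/\sqrt{w_c^2+\beta_c^2}$. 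The prediction score is $\tfrac1n\sum_i \vec r_i\cdot n_i(x)$. For a test point $x\in Q_c$, every other pair $c'\neq c$ contributes $r^{x}_{c',a}-r^{x}_{c',b}=0$, so the score equals $\tfrac{2}{n}\, w_c/|\vec B_{c,a}|$. Hence $\tilde f_{\mathcal T}(x)$ is exactly the empirical majority label of the training points falling in $Q_c$. Empty or tied cells can be handled by either tie-breaking convention, since they are charged to the error budget below.

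Second step: approximation. Since $f$ is measurable and $\mu$ is uniform, $f^{-1}(1)$ can be approximated in $\mu$-measure by finite unions of dyadic cubes. This follows from regularity of Lebesgue measure, or from the Lebesgue density theorem plus dominated convergence. Let $p_c=\mu(f=1\mid Q_c)$. Then for $m$ large, $\sum_c\mu(Q_c)\min(p_c,1-p_c)\le\varepsilon/3$. Third step: estimation. Fix $\eta=\varepsilon/3$. Cells with $|p_c-\tfrac12|<\eta$ have total mass bounded through $\min(p_c,1-p_c)>\tfrac12-\eta$, so they contribute at most a constant multiple of the previous bound. A cleaner route is to bound the excess error by $\sum_c\mu(Q_c)\,|2p_c-1|\,\mathbb 1[\text{flip}_c]$. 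Cells with $|2p_c-1|<2\eta$ then cost at most $2\eta$. For each of the finitely many cells with $|2p_c-1|\ge 2\eta$ and $\mu(Q_c)\ge \eta/m^d$, Hoeffding's inequality on $w_c$, whose mean is $\mu(Q_c)(1-2p_c)$, gives the correct sign with failure probability $\exp(-\Omega(N\eta^2\mu(Q_c)^2))$. A union bound over $m^d$ cells then fixes $N$. Cells of mass below $\eta/m^d$ cost at most $\eta$ in total. Summing the three contributions gives error $\le\varepsilon$ with probability $\ge1-\delta$.

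The main obstacle is the first step, namely ensuring that the contributions of all other cells cancel exactly at inference time. A naive one-qubit-per-cell design leaves uncontrolled cross terms $r^x_{c'}$ from the learned state. The mirrored pair with axes $\pm\hat x$ is the device that removes them, and I would verify carefully that the ground-state direction is symmetric enough to cancel even when $\beta_c\neq0$. The rest is a standard histogram-classifier consistency argument, whose constants I would not optimize.
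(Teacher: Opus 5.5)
\textbf{The proposal has a fatal gap: the cross-cell cancellation in Step 1 is false.} The mirrored pair does not cancel the contribution of other cells; it doubles it.

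For a qubit with nonzero field, the ground-state Bloch vector is $\vec r_i=\vec B_i/|\vec B_i|$. Its contribution to the score at a test point is therefore
\[
\vec r_i\cdot n_i(x)=\frac{1}{N|\vec B_i|}\sum_{\ell} s_\ell\, n_i(x)\cdot n_i(x_\ell).
\]
This depends on the axis map only through the kernel $n_i(x)\cdot n_i(x')$, so it is invariant under any reflection of $n_i$. Your $(c,b)$ map is the reflection $\hat x\mapsto-\hat x$ of the $(c,a)$ map, so the two qubits contribute identically. Concretely, mirror-image Bloch vectors have opposite $x$-components, and you read them along opposite axes. For $x\notin Q_{c'}$ you therefore get $r^x_{c',a}-r^x_{c',b}=2\beta_{c'}/R_{c'}$ with $R_{c'}=\sqrt{w_{c'}^2+\beta_{c'}^2}$, not $0$.

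For $x\in Q_c$ the score is then $m^{-d}\bigl[w_c/R_c+\sum_{c'\neq c}\beta_{c'}/R_{c'}\bigr]$. Write $\beta_{c'}=\bar s-w_{c'}$ with $\bar s=N^{-1}\sum_\ell s_\ell$, and note $|w_{c'}|\lesssim m^{-d}$. Whenever $|\bar s|\gg m^{-d}$, each of the $m^d-1$ cross terms is close to $\operatorname{sign}(\bar s)$, and together they swamp the single local term. The classifier then outputs the global majority label everywhere. For example, if $f$ is the indicator of a slab of $\mu$-mass $0.4$, the error is about $0.4$ for all $m$ and $N$. So the reduction to a cell-wise majority vote is wrong, and your Steps 2 and 3, though standard histogram-consistency arguments, rest on it.

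\textbf{The missing idea is how to handle the normalization of the learned Bloch vectors.} Each qubit contributes an $O(1)$ amount wherever its readout axis overlaps the learned direction, regardless of how much data supports it. Consequently, qubits assigned to a cell cannot be made to abstain outside it by a fixed choice of outside axis. Pairing two qubits whose axis maps differ by an orthogonal transformation only duplicates the contribution.

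The paper avoids locality altogether, as follows.
\begin{itemize}
\item It approximates $s=(-1)^f$ by a finite trigonometric sum $p(x)=\sum_{k\in K}a_k\cos(k\cdot x/L+\phi_k)$ with $\|s-p\|^2_{L^2(\mu)}\le\varepsilon/4$.
\item It assigns $m_k\approx n a_k/W$ qubits the rotation $R_Y(k\cdot x/L)$. Each such qubit contributes a full-amplitude cosine $\cos(k\cdot x/L+\arg C_k)$. The phase is learned from the empirical Fourier coefficient $C_k$, while the amplitudes are hard-wired into the target-dependent multiplicities.
\item Concentration of $C_k$ around $\widehat s_k$, together with $\Pr[\tilde f_{\mathcal T}\neq f]\le\|s-Wg_{\mathcal T}\|^2_{L^2(\mu)}$, finishes the argument.
\end{itemize}

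Separately, your claim that ``any choice works'' when $\vec B_i=0$ is also incorrect. In that case the ground space is degenerate and the reduced Bloch vectors on those qubits are arbitrary, so this event must be excluded by a probability argument.
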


\begin{proof}

First, we write the target function as $s(x)=(-1)^{f(x)}$ and compute
its Fourier coefficients
$
\widehat s_k=\mathbb E_\mu[s(x)e^{-ik\cdot x/L}]=|\widehat s_k|e^{i\phi_k}$, where $\mu$ is the uniform distribution over $\Omega$.
Since $s$ is bounded and measurable, it belongs to $L^2(\mu)$. Completeness of the Fourier basis therefore ensures that, for every $\varepsilon\in(0,1)$, a finite real Fourier sum $p$ exists: 
\begin{equation}
p(x)=\sum_{k\in K}a_k\cos(k\cdot x/L+\phi_k),
\end{equation}
where $a_k=2|\widehat s_k|$ for $k\neq0$, $a_0=|\widehat s_0|$ for the constant term, and $\phi_k=\arg\widehat s_k$.
Parseval's identity then guarantees that with a finite $|K|$ it holds that $\|s-p\|_{L^2(\mu)}^2\leq\varepsilon/4$.  Here $K$ is the set of frequencies retained from the original Fourier decomposition of $s$, excluding those with zero Fourier coefficients and keeping one frequency from each nonzero pair $\{k,-k\}$. For example, we can take $K$ to contain frequencies $k\in\mathbb Z^d$ whose entries have absolute value at most a cutoff $M$, chosen such that $\|s-p\|_{L^2(\mu)}^2\leq\varepsilon/4$.

We now construct the encoding $U$ such that the HE model approximates the function $p$. Set
$W=\sum_{k\in K}a_k$. For each frequency $k\in K$, we choose integer multiplicities $m_k$ such that $\sum_{k\in K}m_k=n$ and $|m_k/n-a_k/W|\leq1/n$, where $n$ is the number of qubits we use to encode the HE model. We then define the following HE model
\begin{equation}\label{eq: universal encoding}
U(x)=\bigotimes_{k\in K}R_Y(k\cdot x/L)^{\otimes m_k},
\qquad
A=\frac1n\sum_{i=1}^n Z_i,
\qquad
O(y)=\frac{I+(-1)^yA}{2}.
\end{equation}
Thus, each frequency $k\in K$ appearing in the truncated Fourier series $p$ specifies a single-qubit
rotation, and its Fourier magnitude determines the fraction
of qubits encoding that frequency. The phases $\phi_k$
will be assigned during training in the following way.

Let us consider a draw of independent inputs $x_\ell\sim\mu$ and labeled by $y_\ell=f(x_\ell)$. Define the empirical correlations
\begin{equation}
C_k=\frac1N\sum_{\ell=1}^N
(-1)^{y_\ell}e^{-ik\cdot x_\ell/L}.
\end{equation}
Notice that $C_k$ approaches the Fourier coefficient $\widehat{s}_k$ of the target $s$ as $N\rightarrow\infty$. The real and imaginary part of $C_k$ are given by Euler's formula: 
\begin{equation}\label{eq: real imag Ck}
\operatorname{Re}(C_k)
=\frac1N\sum_{\ell=1}^N(-1)^{y_\ell}
\cos(k\cdot x_\ell/L),
\qquad
\operatorname{Im}(C_k)
=-\frac1N\sum_{\ell=1}^N(-1)^{y_\ell}
\sin(k\cdot x_\ell/L).
\end{equation}

Since the encoding in Eq.~\eqref{eq: universal encoding} acts independently on each qubit, the
identity $R_Y(\theta)^\dagger ZR_Y(\theta)
=\cos\theta\,Z-\sin\theta\,X$ gives for each qubit $i$:
\begin{equation}
U(x)^\dagger Z_iU(x)
=\cos(k_i\cdot x/L)Z_i-\sin(k_i\cdot x/L)X_i,
\end{equation}
where $k_i$ is the frequency assigned to qubit $i$.

We can now write the embedded Hamiltonian in terms of $C_k$. Substituting these identities and the readout $O(y)$ into
the definition of $H_{\mathcal T}$ in Eq.~\eqref{eq: encoding_hamiltonian_T} yields
\begin{equation}\label{eq: Ham universal}
H_{\mathcal T}
=I-\frac1N\sum_{\ell=1}^N
U(x_\ell)^\dagger O(y_\ell)U(x_\ell)
=\frac12I-\frac1{2n}\sum_{i=1}^n
\left[
\operatorname{Re}(C_{k_i})Z_i
+\operatorname{Im}(C_{k_i})X_i
\right].
\end{equation}
By writing $C_k$ in polar form,  $C_k=|C_k|e^{i\theta_k}$, it is easy to show that the ground state of the embedded Hamiltonian in Eq.~\eqref{eq: Ham universal} is given by product state of one-qubit states $R_Y(\arg C_{k_i})|0\rangle$, for the qubits encoding the frequency $k_i$.
In fact, each one-qubit Hamiltonian term is real and symmetric in the computational basis, so a ground state can be chosen with real amplitudes. Since every normalized real qubit state can be written as $R_Y(\beta)|0\rangle$, it suffices to minimize the energy over this one-parameter family. Its Pauli expectations are
$\langle Z\rangle=\cos\beta$ and
$\langle X\rangle=\sin\beta$, so the corresponding
energy term has expectation
\begin{equation}
E_k(\beta)
=\frac{1}{2n}[1-|C_k|
\left(\cos\theta_k\cos\beta+\sin\theta_k\sin\beta\right)]
=\frac{1}{2n}[1-|C_k|\cos(\beta-\theta_k)].
\end{equation}
This reaches the lowest eigenvalue, $E_k=1/(2n)[1-|C_k|]$,
when $\beta=\theta_k$. Therefore, the one-qubit ground
state is $R_Y(\arg C_k)|0\rangle$. Since the Hamiltonian
terms act on separate qubits, their ground states can
be combined into a product ground state.
It is important to note that the empirical phase $\theta_k$ is distinct from the
target Fourier phase $\phi_k=\arg\widehat s_k$.
However, as $C_k\to{\widehat s_k}$, the learned phase
$\theta_k$ approaches $\phi_k$ modulo $2\pi$.

We now show that the HE classifier obtained from the learned
ground state approximates the target labels encoded by $s$.
The full ground state is $
|\psi_{\mathcal T}\rangle
=\bigotimes_{k\in K}
\left(R_Y(\arg C_k)|0\rangle\right)^{\otimes m_k}$, which implements the following prediction score function:
\begin{equation}
g_{\mathcal T}(x)
:=\langle\psi_{\mathcal T}|
U(x)^\dagger A U(x)|\psi_{\mathcal T}\rangle
=\sum_{k\in K}\frac{m_k}{n}
\cos\!\left(k\cdot x/L+\arg C_k\right).
\end{equation}
Indeed, the encoding rotation and the ground-state rotation
combine into $R_Y(k\cdot x/L+\arg C_k)$ on each qubit,
whose $Z$ expectation is the corresponding cosine.
The model predicts label $0$ when $g_{\mathcal T}(x)\geq0$
and label $1$ otherwise.

For fixed $n$, define the limiting score for $N\rightarrow\infty $
\begin{equation}
g_\infty(x)
=\sum_{k\in K}\frac{m_k}{n}
\cos(k\cdot x/L+\phi_k).
\end{equation}
Since each cosine has magnitude at most one,
\begin{equation}
\|Wg_\infty-p\|_\infty
\leq\sum_{k\in K}\left|\frac{Wm_k}{n}-a_k\right|
\leq\frac{W|K|}{n}.
\end{equation}
We can then choose $n$ such that this bound is at most
$\sqrt{\varepsilon}/4$.
Secondly, for a finite dataset, $\arg C_k$ will only be an approximation of $\phi_k=\arg \widehat{ s}_k$. However, from the fact that $C_k\rightarrow\widehat{ s}_k $ as $N\rightarrow \infty$, we have that for any $\delta>0$, we can first choose $n$ such that
$W|K|/n\leq\sqrt{\varepsilon}/4$, and then a finite $N$
such that, with probability at least $1-\delta$,
\begin{equation}
\|Wg_{\mathcal T}-Wg_{\infty}\|_\infty
\leq\sqrt{\varepsilon}/4.
\end{equation}

On this event, we can bound the average approximation error of the HE model with respect the target $s$ as:
\begin{align}
\|s-Wg_{\mathcal T}\|_{L^2(\mu)}
&\leq
\|s-p\|_{L^2(\mu)}
+\|p-Wg_\infty\|_\infty
+\|Wg_\infty-Wg_{\mathcal T}\|_\infty\\
&\leq
\frac{\sqrt{\varepsilon}}2
+\frac{\sqrt{\varepsilon}}4
+\frac{\sqrt{\varepsilon}}4
=\sqrt{\varepsilon}.
\end{align}

Since $W>0$, a mistaken prediction implies
$|s(x)-Wg_{\mathcal T}(x)|\geq1$. Therefore,
$
\Pr_{x\sim\mu}
[\tilde f_{\mathcal T}(x)\neq f(x)]
\leq\|s-Wg_{\mathcal T}\|_{L^2(\mu)}^2
\leq\varepsilon.
$

\end{proof}
The construction guarantees universality with finite,
target-dependent number of qubits $n$ and training points $N$. We note that a polynomial scaling of $n$ and $N$ is not guaranteed for arbitrary
target functions.

\section{Solving a classically hard problem with Hamiltonian embedding}\label{app: DCR}

\subsection{The discrete cube root problem}

We first recall the discrete cube root problem and the corresponding classical hardness assumption. Let $p$ and $q$ be two distinct large prime numbers of the form $3k+2$ and $3k'+2$, respectively, and assume that they are represented by approximately the same number of bits. Let $N=pq$ be their product, which we take to be an $n$-bit integer, and denote by $\mathbb{Z}_N=\{0,\ldots,N-1\}$ the integers modulo $N$.

We consider the discrete cube function
\begin{equation}
    f_N:\mathbb{Z}_N\to\mathbb{Z}_N,
    \qquad
    f_N(y)=y^3 \mod N.
\end{equation}
For primes of the form specified above, this map is a bijection on $\mathbb{Z}_N$, and therefore its inverse is well defined. We denote this inverse by
\begin{equation}\label{eq: dcr function}
    g_N(x)=f_N^{-1}(x)=\sqrt[3]{x}\mod N.
\end{equation}
Indeed, since $p=3k+2$ and $q=3k'+2$, one has $\gcd(3,p-1)=\gcd(3,q-1)=1$. Hence cubing is invertible modulo both $p$ and $q$, and therefore also modulo $N=pq$ by the Chinese remainder theorem.

\begin{definition}[Discrete cube root problem]
Let $p$ and $q$ be two distinct primes of the form $3k+2$ and $3k'+2$, and let $N=pq$ be an $n$-bit integer. Given as input $N$ and $x\in\mathbb{Z}_N$, the discrete cube root problem asks to output $y\in\mathbb{Z}_N$ such that
\begin{equation}
    y^3=x \mod N.
\end{equation}
\end{definition}

The forward function $f_N$ can be computed efficiently on a classical computer using modular exponentiation. By contrast, computing the inverse function $g_N$ from only $N$ and $x$ is believed to be classically hard when the factorization of $N$ is unknown. This is formalized by the following assumption, which is a special case of the RSA assumption with exponent $e=3$.

\begin{definition}[Discrete cube root assumption]
For any polynomial $P(\cdot)$, there is no classical algorithm $A$ running in time $P(n)$ such that, given as input an $n$-bit integer $N=pq$, where $p$ and $q$ are random primes of the form $3k+2$ and $3k'+2$, and a random $x\in\mathbb{Z}_N$, outputs a value $y\in\mathbb{Z}_N$ satisfying
\begin{equation}
    y^3=x \mod N
\end{equation}
with probability at least $1/P(n)$. The probability is taken over the random choice of the primes $p,q$, the input $x\in\mathbb{Z}_N$, and the internal randomness of $A$.
\end{definition}

The discrete cube function is also a trapdoor function. If the factors $p$ and $q$ of $N$ are known, then one can compute Euler's totient $\varphi(N)=(p-1)(q-1)$. Since $p=3k+2$ and $q=3k'+2$, both $p-1$ and $q-1$ are congruent to $1$ modulo $3$, and therefore $\gcd(3,\varphi(N))=1$. It follows that $3$ has a multiplicative inverse modulo $\varphi(N)$. Hence, by Euclid's algorithm, one can efficiently find an integer $d$ such that $3d=1 \mod \varphi(N)$, or equivalently $3d=r\varphi(N)+1$ for some integer $r$.

We now show why this value of $d$ gives a secret key for inverting the discrete cube function. Let $x=f_N(y)=y^3\mod N$ for some $y\in\mathbb{Z}_N$. Then
\begin{equation}\label{eq: dcr_d}
x^d=(y^3)^d=y^{3d}=y^{r\varphi(N)+1}\mod N.
\end{equation}
 Using $\varphi(N)=(p-1)(q-1)$ together with Fermat's little theorem modulo $p$ and modulo $q$, one obtains $y^{r\varphi(N)+1}=y \mod p$ and $y^{r\varphi(N)+1}=y \mod q$. Since $p$ and $q$ are distinct primes, congruence modulo both $p$ and $q$ implies congruence modulo $N=pq$. Therefore, $x^d=y \mod N$. Thus, for every $x\in\mathbb{Z}_N$, the discrete cube root can be computed efficiently as $g_N(x)=x^d \mod N$.

The integer $d$ therefore acts as a trapdoor key: it can be computed efficiently from the factorization of $N$, but is believed to be inaccessible to efficient classical algorithms given only $N$. Hence, knowledge of the factorization makes the inverse efficiently computable, whereas without this trapdoor the problem is assumed to be classically intractable.

\subsection{Definition of the learning problem}

Based on the DCR function in \cref{eq: dcr function}, we define the following function that classifies the input $x\in\mathbb{Z}_N$ of our learning problem.

For a fixed shift parameter $s\in\mathbb Z_N$, define
$g_{N,s}:\mathbb Z_N\to \{0,1\}$,
by
\begin{equation}\label{eq: g_sN}
y(x)=g_{N,s}(x)=
\begin{cases}
1, & \text{if } g_N(x)\in \left[s,\, s+\dfrac{N-1}{2}\right]\\[6pt]
0, & \text{otherwise}
\end{cases}
\end{equation}
where, as in Eq.~\ref{eq: dcr_labeling_function}, the interval
is interpreted cyclically in $\mathbb{Z}_N$.

Equivalently, using the key $d$ as in \cref{eq: dcr_d},
\begin{equation}\label{eq: g_Ns^d}
g_{N,s}^{d}(x)=
\mathbf 1\!\left[x^d \bmod N \in \left[s,\, s+\dfrac{N-1}{2}\right]\right].
\end{equation}

Based on the labeling function above, we define the following learning problem. 
Let $N$ be a large semiprime of the form $N=pq$ as above, we formally define the concept class $\mathcal{C}_N$ as the family of all concepts function $g_{N,s}$ in \cref{eq: g_sN}:
\begin{equation}\label{eq: DCR_concept_class_n}
\mathcal{C}_N
=
\left\{
g_{N,s}
\;\middle|\;
s\in\mathbb{Z}_N
\right\}.
\end{equation}
Then, the learner is given training data labeled by one of the concept in $\mathcal{C}_N$, 
\begin{equation}
\mathcal{T}=\{(x_\ell,y_\ell)\}_{\ell=1}^m,
\qquad x_\ell\in \mathbb Z_N,
\qquad y_\ell=g_{N,s^\star}^{d^\star}(x_\ell)\in\{0,1\},
\end{equation}
for some hidden target parameters $(d^\star,s^\star)$, where $g_{N,s}^d$ denotes the representation of $g_{N,s}$ associated with the secret key $d$. Importantly, the learner does not know the true $(d^\star,s^\star)$, it can only retrieve information about them through the training data. We also assume that the learner has knowledge of $N$ from the specification of the concept class $\mathcal{C}_N$ or from the training data (for example we could trivially append it to all the training points $x_\ell \in\mathbb Z_N$ as $(x_\ell,N,y_\ell)$).
The goal of the learning task is to implement a learning algorithm $A$ that meets the following learning condition.

\begin{definition}[Efficient PAC learning]
\label{def:pac-learning}
Let $\mathcal{C}_N$ be the DCR concept class. We say that $\mathcal{C}_N$ is efficiently classically learnable if there exists a classical learning algorithm $A$ with the following property. For every target concept $g_{N,s}\in\mathcal{C}_N$, every probability distribution $\mathcal{D}$ over $\mathbb{Z}_N$, and every $\varepsilon,\delta>0$, given a training set
$\mathcal{T}=\{(x_\ell,g_{N,s}(x_\ell))\}_{\ell=1}^m$ with $x_\ell\sim\mathcal{D}$ drawn independently, the algorithm outputs a hypothesis $h:\mathbb{Z}_N\to\{0,1\}$ such that
\begin{equation}\label{eq: learning_condition}
    \Pr_{x\sim\mathcal{D}}
    \left[
        h(x)\neq g_{N,s}(x)
    \right]
    \leq
    \varepsilon
\end{equation}
with probability at least $1-\delta$ over the random draw of the training set and the internal randomness of $A$. Moreover, $A$ runs in time polynomial in $n$, $1/\varepsilon$, and $1/\delta$.
\end{definition}

\subsection{Classical hardness of the learning problem}

The classical hardness result for the DCR learning task relies on the result of~\cite{alexi1988rsa}. In particular, they show that even a weak predictor for suitable bits of the discrete cube root is sufficient to recover the full inverse function. Based on this, \cite{jerbi2024shadows} showed that no efficient classical learner can achieve accuracy significantly better than random guessing on the concept class defined from the discrete cube root function. We first recall the notion of an $\epsilon(n)$-oracle.

\begin{definition}[$\epsilon(n)$-oracle]
Let $O_{N,s}$ be a probabilistic oracle such that, on input $x\in\mathbb{Z}_N$, it outputs the label $g_{N,s}(x)$ correctly with probability $1/2+\epsilon(n)$ over the random choice of $x$ and the internal randomness of the oracle. We say that $O_{N,s}$ is an $\epsilon(n)$-oracle for $g_{N,s}$.
\end{definition}

The result of Alexi et al.~\cite{alexi1988rsa} implies that a weak oracle with inverse-polynomial advantage over random guessing is already sufficient to break the discrete cube root assumption.

\begin{lemma}[Corollary (a) to Theorem 1 in~\cite{alexi1988rsa}]
\label{lem:weak-oracle-dcr}
For any concept $g_{N,s}$ in the DCR concept class, given access to a $1/\operatorname{poly}(n)$-oracle $O_{N,s}$ for $g_{N,s}$, there exists a polynomial-time classical algorithm that uses $O_{N,s}$ to compute the discrete cube root $g_N(x)=\sqrt[3]{x}\mod N$ for every $x\in\mathbb{Z}_N$ with constant success probability.
\end{lemma}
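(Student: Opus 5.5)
This lemma is cited from Alexi, Chor, Goldreich and Schnorr, so the plan is to show how the DCR concept class fits their hard-core bit framework and then invoke their reduction. The relevant predicate here is the "half-interval" bit. Write $y=g_N(x)=\sqrt[3]{x}\bmod N$, so that $g_{N,s}(x)=\mathbf 1[y-s \bmod N\in[0,(N-1)/2]]$. This is the location predicate of $y-s$. Equivalently, up to complementing, it is the "most significant bit" type predicate of the shifted root.

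\textbf{Step 1 (reduction to $s=0$).} Cubing is a homomorphism for multiplication, not addition, so a shift $s$ of the root cannot be absorbed into $x$ directly. Instead I would use the ACGS methodology, which samples $x\cdot r^3$ for known random $r$. Then $\sqrt[3]{xr^3}=yr$, and the oracle answers questions about $yr-s$. Since the ACGS reconstruction only needs predictions of a fixed bit-predicate on the root of random multiples $a y \bmod N$ with known $a$, I would define $B_s(z)=\mathbf 1[z-s\in[0,(N-1)/2]]$. I would then observe that the ACGS Theorem 1 covers any predicate of the form "$z$ lies in a fixed half-interval", which they treat via the equivalence of the least significant bit and the half-interval bit: $\mathrm{half}(2z)=\mathrm{lsb}(z)$ for odd $N$. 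The shift $s$ is a fixed, possibly unknown, constant. I would handle it by guessing $s$'s top $O(\log n)$ bits, or by noting that the binary-gcd/rational-approximation procedure of ACGS tolerates an additive offset, since it only tracks $ay \bmod N$ up to interval approximations.

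\textbf{Step 2 (amplification).} The weak oracle is correct with probability $1/2+\epsilon$ only on average over $x$. So I would use the pairwise-independent sampling of ACGS: for queries $a_i y$ with $a_i=a+ib$ and random $a,b$, take a majority vote over $\mathrm{poly}(n/\epsilon)$ samples. Chebyshev's inequality then gives each reconstructed bit correctly with high probability. This is where the random self-reducibility (multiplying $x$ by $r^3$) makes the average-case oracle usable on every fixed $x$.

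\textbf{Step 3 (reconstruction).} Run the ACGS binary-gcd inversion procedure to recover $y$, and verify the output by checking $y^3\equiv x \bmod N$. This yields constant success probability, boostable by repetition.

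\textbf{Main obstacle.} The delicate part is Step 1: ensuring that the unknown shift $s$ does not break the correlation structure exploited by ACGS. I would first try to show that $B_s(ay)$ determines $\mathrm{lsb}$-type information about $ay-s$. I would then replace the unknown $s$ by guessed approximations over polynomially many candidates, and verify each candidate by the cube check. In the paper itself, the lemma is simply cited from Ref.~\cite{alexi1988rsa} (as used in~\cite{jerbi2024shadows}).
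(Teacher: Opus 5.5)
The paper gives no argument for this lemma: it is imported from \cite{alexi1988rsa} via \cite{jerbi2024shadows}. Insofar as your proposal ends by resting on that citation, it coincides with the paper. The problem is the justification you build around it. As written, Step~1 is a genuine gap, and it is exactly the step where $g_{N,s}$ differs from the predicates ACGS treat.

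\begin{itemize}
\item The identity $\mathrm{half}(2z)=\mathrm{lsb}(z)$ is backwards. Since $N$ is odd, $\mathrm{lsb}(2z\bmod N)=\mathbf 1[z>N/2]$. By contrast, $\mathrm{half}(2z\bmod N)$ is governed by the second binary digit of $z/N$ and is unrelated to the parity of $z$.
\item Even after correction, this equivalence concerns only the half-interval starting at $0$. It does not show that ACGS Theorem~1 covers an arbitrary $[s,s+\frac{N-1}{2}]$.
\item Aiming for lsb-type information about $ay-s$ points the wrong way. The quantity $ay-s$ is not a known multiple of $y$, and the ACGS reconstruction manipulates only multiples $ry\bmod N$ with known $r$.
\item You give no reason why guessing only the top $O(\log n)$ bits of $s$ should suffice.
\end{itemize}

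Both missing ingredients are short.

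\emph{Robustness in $s$.} Since $y=g_N(x)$ is uniform when $x$ is, $g_{N,s}$ and $g_{N,s'}$ disagree on a $2d/N$ fraction of inputs, where $d$ is the cyclic distance between $s$ and $s'$. Hence an $\epsilon$-oracle for $g_{N,s}$ is an $\epsilon/2$-oracle for $g_{N,s'}$ whenever $d\le \epsilon N/4$. Run the algorithm for each of the $O(1/\epsilon)$ points $s'$ of a grid of spacing $\epsilon N/2$, and check each output by $y^3\equiv x \bmod N$. This reduces the problem to a \emph{known} shift $s'$, and it is exactly what guessing the top $O(\log n)$ bits means.

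\emph{Known shift.} The ACGS reconstruction maintains approximations, to some accuracy $\Delta$, of the position $w=ry\bmod N$ of every multiple it queries. This is what the guessed rational approximations and the pairwise-independent points provide. Query the oracle at $x\,(2^{-1}r)^3\bmod N$; its cube root is $2^{-1}w\bmod N\in\{w/2,(w+N)/2\}$.
\begin{itemize}
\item These two candidates are antipodal, so exactly one lies in the arc $[s',s'+N/2)$.
\item The oracle's answer therefore reveals which candidate is the true value, i.e.\ the parity $\mathrm{lsb}(ry)$.
\item This fails only when a candidate lies within about $\Delta$ of $s'$ or $s'+N/2$. That event has probability $O(\Delta/N)$, which can be made small compared with $\epsilon$ at polynomial cost by guessing the initial approximations with $O(\log(1/\epsilon))$ more bits.
\end{itemize}
With these two observations your Steps~2 and~3 go through.

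Note also where the lemma is actually used, in Lemma~\ref{lem:dcr-learning-hardness}. There the reduction generates its own labelled samples $(r^3\bmod N,\,g_{N,0}(r^3\bmod N))$ from random $r$. So the case $s=0$, the standard half predicate, already suffices.
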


Using this result, the authors in~\cite{jerbi2024shadows} obtain the following classical hardness statement for the corresponding learning problem.

\begin{lemma}[Classical hardness of the DCR learning task]
\label{lem:dcr-learning-hardness}
Assume the discrete cube root assumption. Then no polynomial-time classical learning algorithm can, with constant success probability over the random generation of the training data and its internal randomness, output a hypothesis $h$ whose expected error on the DCR concept $g_{N,s}$ is smaller than random guessing by an inverse-polynomial amount. More precisely, for the uniform distribution over $\mathbb{Z}_N$, no efficient classical learner can achieve
\begin{equation}
    \Pr_{x\sim U(\mathbb{Z}_N)}
    \left[
        h(x)\neq g_{N,s}(x)
    \right]
    \leq
    \frac{1}{2}
    -
    \frac{1}{\operatorname{poly}(n)} .
\end{equation}
\end{lemma}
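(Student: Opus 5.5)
The plan is a reduction: from any efficient classical learner achieving error at most $\frac12-\frac{1}{\operatorname{poly}(n)}$ under the uniform distribution, we build an $\epsilon(n)$-oracle for $g_{N,s}$ with $\epsilon(n)=1/\operatorname{poly}(n)$. Lemma~\ref{lem:weak-oracle-dcr} then turns this oracle into a polynomial-time classical algorithm for discrete cube roots modulo $N$, contradicting the discrete cube root assumption. The argument therefore has three parts: generate training data without the trapdoor, convert the learner's hypothesis into an oracle, and invoke Lemma~\ref{lem:weak-oracle-dcr}.

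\textbf{Data generation.} Suppose a learner $A$ exists. Given a DCR challenge $(N,x)$, the reduction fixes a shift $s$ of its choice, for example uniformly random in $\mathbb{Z}_N$. It samples $r_\ell\in\mathbb{Z}_N$ uniformly and sets $x_\ell=r_\ell^3 \bmod N$. Since cubing is a bijection on $\mathbb{Z}_N$ for primes of the form $3k+2$, each $x_\ell$ is uniform on $\mathbb{Z}_N$. Moreover $g_N(x_\ell)=r_\ell$, so the label
\begin{equation}
y_\ell=\mathbf{1}\!\left[r_\ell\in\left[s,\,s+\tfrac{N-1}{2}\right]\right]
\end{equation}
can be computed without knowing $d$. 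The reduction thus produces a training set with exactly the distribution required by Definition~\ref{def:pac-learning}, in polynomial time.

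\textbf{Oracle construction.} Running $A$ on this training set yields a hypothesis $h$. By assumption, with constant probability $c$ over the data and the internal randomness, $\Pr_x[h(x)\neq g_{N,s}(x)]\le \frac12-\frac{1}{p(n)}$. Define the oracle $O_{N,s}(x)$ as: run $A$ on freshly generated data, then output $h(x)$. Conditioned on a good run its advantage is at least $1/p(n)$. On a bad run it could in principle be anticorrelated with the label, and this is the step I expect to need care. The plan is to avoid mixing good and bad runs altogether. We verify each candidate $h$ on polynomially many additional self-generated labeled samples, which gives a Hoeffding estimate of its error, and keep a hypothesis only if its empirical advantage exceeds $1/(2p(n))$. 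Repeating $O(1/c)$ times, a good hypothesis is kept with constant probability. Fixing that $h$ and using it as the oracle gives an $\epsilon(n)$-oracle with $\epsilon(n)=1/(4p(n))$, averaged over uniform $x$ as the definition requires.

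\textbf{Conclusion.} Feeding this oracle into Lemma~\ref{lem:weak-oracle-dcr} computes $\sqrt[3]{x}\bmod N$ with constant probability, in polynomial time overall. The constant failure probability of the oracle-construction step only multiplies the success probability by a constant, so the result still contradicts the discrete cube root assumption, which forbids success probability $1/P(n)$. One subtlety must be handled: the assumption is stated for random primes and random challenge $x$, while Lemma~\ref{lem:weak-oracle-dcr} inverts every $x$ for a concept with a fixed $s$. The reduction may choose $s$ freely, and any $s$ yields a legitimate concept in $\mathcal{C}_N$, so the hardness holds for worst-case targets and in particular for the one the reduction picks.
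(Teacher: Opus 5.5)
Your proposal is correct and follows the same reduction as the paper: a classical learner with inverse-polynomial advantage yields an $\epsilon(n)$-oracle for $g_{N,s}$, Lemma~\ref{lem:weak-oracle-dcr} turns that oracle into a polynomial-time cube-root algorithm, and this contradicts the discrete cube root assumption. The paper states this in a few lines, so your additions are details it leaves implicit rather than a different route: you simulate uniformly distributed labeled data by cubing random $r_\ell$ without the trapdoor, and you validate candidate hypotheses on fresh self-generated samples so that a fixed good oracle is obtained despite the learner's constant failure probability.
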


Indeed, if a classical learner satisfying the learning condition in Definition~\ref{def:pac-learning} existed, then the hypothesis it outputs would define a $1/\operatorname{poly}(n)$-oracle for the label function $g_{N,s}$. By Lemma~\ref{lem:weak-oracle-dcr}, this oracle could then be used to compute the full discrete cube root function $g_N(x)$ in polynomial time with constant success probability. This would contradict the discrete cube root assumption. Therefore, under this assumption, the DCR learning task cannot be solved efficiently by classical learning algorithms with non-negligible advantage over random guessing.

\subsection{Quantum solution with Hamiltonian embedding}
 We now show how this problem can be encoded within the Hamiltonian-embedding framework. In particular, we construct an embedding Hamiltonian of the form $H_T = \sum_\ell H(x_\ell)$ whose ground state can be used to correctly classify new inputs, in the same way as described in Section~\ref{sec: ham embedding model}. This allows us to show in the following lemma the quantum learnability of the DCR task within the Hamiltonian embedding model. Together with Lemma~\ref{lem:dcr-learning-hardness}, this formally establishes the result anticipated in the main text as Theorem~\ref{thm: dcr advantage}.

\begin{lemma}\label{lemma: dcr quantum learnability}
Let $C_N$ be the DCR concept class, and let $\mathcal T=\{(x_\ell,y_\ell)\}_{\ell=1}^m$ be the training data associated to a concept $g_{N,s^\star}^{d^\star}\in C_N$ with $x_\ell$ sampled independently from the uniform distribution $\mathcal U(\mathbb Z_N)$. For every $\epsilon,\delta>0$, there exists
$m=\operatorname{poly}(n,1/\epsilon,\log(1/\delta))$, an encoding unitary
$U(x)$, and a local label observable $O(y)$, such that, with probability at
least $1-\delta$ over the draw of $\mathcal T$, the ground state of the
corresponding embedding Hamiltonian $H_{\mathcal T}$ induces a hypothesis
$\tilde f:\mathbb Z_N\to\{0,1\}$ as in \cref{eq: prediction rule} satisfying
\begin{equation}
\Pr_{x\sim\mathcal U(\mathbb Z_N)}
\!\left[\tilde f(x)\neq g_{N,s^\star}^{d^\star}(x)\right]
\leq \epsilon .
\end{equation}
Moreover, such a ground state can be prepared efficiently by a quantum algorithm, and once obtained, the induced hypothesis $\tilde f$ can be efficiently evaluated classically.
\end{lemma}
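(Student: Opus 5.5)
The plan is to use the encoding of Eq.~\eqref{eq: U(x) DCR} on $\mathcal H=\mathrm{span}\{|d,s\rangle\}\otimes\mathbb C^2$, with $O(y)=I\otimes|y\rangle\langle y|$ and the term $H(x_\ell,y_\ell)$ of Eq.~\eqref{eq: Hz DCR}. First I will compute the spectrum of $H_{\mathrm{DCR}}$. Every term is diagonal in the basis $|d,s,b\rangle$, since $U(x)$ only permutes basis states. On $|d,s,b\rangle$ the term $I-U^\dagger(x_\ell) O(y_\ell) U(x_\ell)$ has eigenvalue $\mathbf 1[b\oplus g_{d,s}(x_\ell)\neq y_\ell]$, and the penalty $I\otimes|1\rangle\langle1|$ adds $b$. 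The energy of $|d,s,0\rangle$ is therefore exactly the number of training points on which $g_{d,s}$ disagrees with the labels, and every $b=1$ state costs at least $m$. The target $(d^\star,s^\star)$ has energy $0$, so the ground space is spanned by $|d,s,0\rangle$ with $g_{d,s}$ consistent with $\mathcal T$. The prediction rule applied to $U(x)|d,s,0\rangle=|d,s,g_{d,s}(x)\rangle$ outputs $g_{d,s}(x)$ deterministically. The induced hypothesis is therefore the function $g_{d,s}$, and it is classically evaluable by modular exponentiation once $(d,s)$ is read out in the computational basis.

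The next step is generalization. Any consistent $g_{d,s}$ with arbitrary $d$ could a priori have large error, and the pairs $(d,s)$ range over about $N^2=2^{2n}$ functions. I will use the standard Occam/union bound over this finite class: with $m\ge \epsilon^{-1}(2n\ln 2+\ln(1/\delta))$, with probability $1-\delta$ every $g_{d,s}$ with error larger than $\epsilon$ is inconsistent with some training point. Hence every ground state gives error at most $\epsilon$. This could also be phrased through Corollary~\ref{cor:classification-risk-bound} with a finite net, but the direct argument is cleaner.

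The main obstacle is efficient quantum preparation of a ground state, since generic diagonal Hamiltonians encode NP-hard search. Here I would not rely on a generic solver. Instead I would exploit the structure of the problem to produce a specific ground state $|d^\star,s',0\rangle$ directly. The key quantity is a valid key: since $N$ is known, Shor's algorithm factors $N$, which gives $\varphi(N)$ and then $d^\star=3^{-1}\bmod\varphi(N)$ via Euclid. Any such key yields the true cube root, so $g_{d^\star,s}=g_{N,s}$ for all $s$. It then remains to find $s'$ consistent with the data. Classically compute $r_\ell=x_\ell^{d^\star}\bmod N$; label-$1$ points must lie in a cyclic half-interval starting at $s'$, and label-$0$ points outside it. Scanning the $m$ sorted values $r_\ell$ determines a feasible $s'$ in $\operatorname{poly}(m,n)$ time; for example, take $s'$ to be the smallest cyclic start that covers all label-$1$ residues and none of the label-$0$ residues, which exists because $s^\star$ does. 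Preparing $|d^\star,s',0\rangle$ is then trivial, and the energy check above certifies that it is a ground state.

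Finally I would combine the steps: the consistent pair lies in the ground space, the Occam bound gives error at most $\epsilon$ with $m=\operatorname{poly}(n,1/\epsilon,\log(1/\delta))$, and Shor plus classical post-processing gives efficient preparation. I expect the delicate points to be justifying that this procedure counts as ``ground-state preparation'' within the framework, and handling the cyclic-interval edge cases in the search for $s'$.
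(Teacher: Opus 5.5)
Your proof is correct. The construction, the spectral analysis of $H_{\mathcal T}$, the preparation step and the classical evaluation all coincide with the paper's. The preparation step is the same: Shor gives a valid key $3^{-1}\bmod\varphi(N)$, you compute the cube roots classically, pick a consistent cyclic shift $s'$, and prepare $|d^\star,s',0\rangle$.

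The genuine difference is the generalization step.

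\textbf{The paper's route.} Its proof of this lemma bounds the error of the specific prepared state $|d^\star,s',0\rangle$ through the shift-learning guarantee cited from \cite{jerbi2024shadows}. That guarantee says $O(\log(1/\delta)/\epsilon')$ uniform samples suffice to find $s'$ within cyclic distance $\epsilon' N$ of $s^\star$. Taking $\epsilon'=\epsilon/2$ then gives error at most $\epsilon$. The Occam/union bound over the at most $N^2$ hypotheses $g^{d}_{N,s}$ appears in the paper only afterwards, as the separate Lemma~\ref{lem:dcr-ground-state-generalization}.

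\textbf{Your route.} You apply that union bound directly. This costs an extra additive $O(n/\epsilon)$ in the sample size, which is still polynomial. In exchange, your argument is self-contained:
\begin{itemize}
\item any consistent $s'$ works, so you need neither a particular shift-selection rule nor a cited guarantee for it;
\item your worry about cyclic edge cases in choosing $s'$ disappears beyond finding some consistent shift;
\item you get for free the robustness statement that every zero-energy basis state, including those with $d\neq d^\star$, has error at most $\epsilon$.
\end{itemize}

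\textbf{One correction.} Your claim that ``every ground state gives error at most $\epsilon$'' holds for basis ground states only. For a superposition $\sum_{d,s} c_{d,s}|d,s,0\rangle$, the induced classifier is a weighted majority vote of the basis hypotheses. Markov's inequality then only gives error at most $2\epsilon$; the paper makes the same remark. This does not affect the lemma, since the state you prepare is a basis state.
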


\begin{proof}
The embedded Hamiltonian is defined on the Hilbert space
$
\mathcal H
=
\mathrm{span}\{|d,s\rangle : d,s\in \mathbb Z_N\}\otimes \mathbb C^2$,
where the last qubit serves as an ancilla with computational basis $\{|0\rangle,|1\rangle\}$.

For each input $x\in \mathbb{Z}_N$, define the unitary
\begin{equation}\label{eq: encoding_unitary}
U(x)\,|d,s\rangle|b\rangle
=
|d,s\rangle\,|b\oplus g_{N,s}^{d}(x)\rangle .
\end{equation}
We note that $g_{N,s}^{d}(x)$ can be evaluated efficiently classically, since its computation only requires modular exponentiation and a classical comparison. Hence, there exists a polynomial-size classical circuit computing $g_{N,s}^{d}(x)$ from $x$, $d$, and $s$. By reversibly implementing this computation, with at most polynomial overhead, one obtains a polynomial-size quantum circuit realizing the unitary $U(x)$.
Define now the observable
\begin{equation}\label{eq: observable}
O(y)=I\otimes |y\rangle\langle y|,
\qquad y\in\{0,1\}.
\end{equation}

We can now define the embedding Hamiltonian $H_\mathcal{T}$ for the training set $\mathcal{T}$. For one training example $(x_\ell,y_\ell)$, define the Hamiltonian term
\begin{equation}\label{eq: Hxy}
H(x_\ell,y_\ell)
=
I-U(x_\ell)^\dagger O(y_\ell)U(x_\ell)+I\otimes |1\rangle\langle 1|.
\end{equation}
The dataset Hamiltonian $H_\mathcal{T}$ is then:
\begin{equation}\label{eq: total H}
H_\mathcal{T}=\sum_{\ell=1}^m H(x_\ell,y_\ell).
\end{equation}
We now show that the ground state of $H_\mathcal{T}$ can be used to correctly predict the labels of unseen inputs. In particular, in analogy to \cref{eq: prediction rule}, for a new point $x$ the predicted label is the value of $y$ that maximizes \\ $\operatorname{Tr}\!\left[
O(y)\,
U(x)\,
|d^\star,s^\star,0\rangle\langle d^\star,s^\star,0|
\,U(x)^\dagger
\right]$, with $|d^\star,s^\star,0\rangle$ ground state of $H_\mathcal{T}$.  

We first show that the energy of each term $H(x_\ell,y_\ell)$ is minimized by states of the form $|d,s,0\rangle$ with $(d,s)$ such that $g_{N,s}^{d}(x_\ell)=y_\ell$. Then, since the training labels are generated by the target concept $y_\ell=g_{N,s^\star}^{d^\star}(x_\ell)$, the state $|d^\star,s^\star,0\rangle$ associated with the true parameters $d^\star$ and $s^\star$ minimizes every term in $H_{\mathcal T}$ and thus has zero energy.

Let us consider each term $H(x_\ell,y_\ell)$ in \cref{eq: Hxy}.  We can expand $U(x_\ell)^\dagger O(y_\ell)U(x_\ell)$ as follows:
\begin{equation}\label{eq: UOU}
U(x_\ell)^\dagger O(y_\ell)U(x_\ell)
=
\sum_{(d,s):\, g_{N,s}^{d}(x_\ell)=y_\ell}
|d,s\rangle\langle d,s|\otimes |0\rangle\langle 0|
+
\sum_{(d,s):\, g_{N,s}^{d}(x_\ell)\neq y_\ell}
|d,s\rangle\langle d,s|\otimes |1\rangle\langle 1|.
\end{equation}
\cref{eq: UOU} follows from computing the action of $U(x_\ell)^\dagger O(y_\ell)U(x_\ell)$ on generic states $\ket{d,s,b}$ using the definitions of $U(x)$ and $O(y)$ in \cref{eq: encoding_unitary,eq: observable}:
\begin{equation}
    U^\dagger(x_\ell) O(y_\ell)U(x_\ell)\ket{d,s,b}=\delta_{y_\ell,b\oplus g^{d}_{N,s}(x_\ell)}\ket{d,s,b}
\end{equation}

Therefore, by rewriting the Hamiltonian $H(x_\ell,y_\ell)$ in \cref{eq: Hxy} using \cref{eq: UOU} we obtain: 
\begin{equation}
\label{eq: H(x)}
H(x_\ell,y_\ell)
=
I
-
\sum_{(d,s):\, g_{N,s}^{d}(x_\ell)=y_\ell}
|d,s\rangle\langle d,s|\otimes |0\rangle\langle 0| \;
-
\sum_{(d,s):\, g_{N,s}^{d}(x_\ell)\neq y_\ell}
|d,s\rangle\langle d,s|\otimes |1\rangle\langle 1|
\;\;+
I\otimes |1\rangle\langle 1|
    \end{equation}

We now evaluate the energy of a generic basis state $\ket{d,s,b}$ to determine which states minimize the Hamiltonian terms $H(x_\ell,y_\ell)$.
\begin{itemize}
    \item 
If $b=1$, i.e. $|\psi\rangle=|d,s,1\rangle$, then
\begin{align}
\langle \psi|H(x_\ell,y_\ell)|\psi\rangle
=
\begin{cases}
2, & \text{if } g_{N,s}^{d}(x_\ell)=y_\ell,\\
1, & \text{if } g_{N,s}^{d}(x_\ell)\neq y_\ell.
\end{cases}
\end{align}

\item If $b=0$, i.e. $|\psi\rangle=|d,s,0\rangle$, then
\begin{align}
\langle \psi|H(x_\ell,y_\ell)|\psi\rangle
=
\begin{cases}
0, & \text{if } g_{N,s}^{d}(x_\ell)=y_\ell,\\
1, & \text{if } g_{N,s}^{d}(x_\ell)\neq y_\ell.
\end{cases}
\end{align}
\end{itemize}
Hence, each term $H(x_\ell,y_\ell)$ in the dataset Hamiltonian $H_{\mathcal T}$ is minimized precisely by states of the form $\ket{d,s,0}$ satisfying $g_{N,s}^{d}(x_\ell)=y_\ell$.

Since the total Hamiltonian is given by $ H_\mathcal{T}=\sum_{\ell=1}^m H(x_\ell,y_\ell)$, then for a generic basis state $|d,s,0\rangle$ one has
\begin{equation}
\langle d,s,0|H_\mathcal{T}|d,s,0\rangle
=
\sum_{\ell=1}^m
\mathbf 1\!\left[g_{N,s}^{d}(x_\ell)\neq y_\ell\right].
\end{equation}
Therefore, the energy of a state $\ket{d,s,0}$ is exactly the number of training examples misclassified by the hypothesis specified by $(d,s)$:
\begin{equation}\label{eq: dcr gound energy}
\langle d,s,0|H_T|d,s,0\rangle
=
\abs{\{\text{training points misclassified by }g_{N,s}^{d}\}}.
\end{equation}
It follows that a basis state $\lvert d,s,0\rangle$ has zero energy if and only if the corresponding hypothesis $g^d_{N,s}$ correctly classifies every training example. Since the target hypothesis $g^{d^\star}_{N,s^\star}$ is consistent with the training data, the ground energy of $H_{\mathcal T}$ is zero, and its ground space is spanned by the states $\lvert d,s,0\rangle$ corresponding to hypotheses that are consistent with the training set. 

We note that the exponent $d^\star$ can be recovered efficiently by a quantum algorithm by factoring $N$~\cite{shor1997polynomial}. Once $d^\star$ is known, the remaining parameter $s$ can be learned efficiently from the training set by first computing the DCR of each input:
$
x_\ell \longmapsto x_\ell^{d^\star}\!\!\!\mod N
=
\sqrt[3]{x_\ell}\!\!\!\mod N
=
g_N(x_\ell)$, with $g_N$ the DCR function defined in Eq.~\eqref{eq: dcr function}.
The transformed dataset
$
\widetilde{\mathcal T}
=
\{(g_N(x_\ell),y_\ell)\}_{\ell=1}^m$
then reduces the problem to learning the shift of a one-dimensional interval classifier. Since the data are realizable, a shift $s'$ that is consistent with all the training examples can be found efficiently. While several shifts may be consistent with a polynomial-size training set, sufficiently many training samples guarantee that a suitable consistent shift $s'$ has small population error under the uniform distribution $\mathcal U(\mathbb Z_N)$.  In particular, \cite{jerbi2024shadows} shows that, with $O(\log(1/\delta)/{\epsilon'})$ uniformly sampled examples, one can identify a shift $s'$ satisfying
$
|s^\star-s'|\leq \epsilon' N$
with probability at least $1-\delta$, where $|s^\star-s'|$ denotes the cyclic distance between the two shifts in $\mathbb{Z}_N$.
Let us fix $\epsilon'=\epsilon/2$. This in turn guarantees
\begin{equation}\label{eq: learning_guarantee_dcr}
\Pr_{x\sim\mathcal U(\mathbb Z_N)}
\!\left[g_{N,s'}(x)\neq g_{N,s^\star}(x)\right]
\leq \epsilon.
\end{equation}
Therefore, the quantum learning algorithm finds the state $\ket{d^\star,s',0}$ which has zero energy with respect to the data Hamiltonian $H_\mathcal{T}$.

Once a ground state $|d^\star,s',0\rangle$ is obtained, a new point $x$ is classified by the same labeling rule as in \cref{eq: prediction rule}. In particular, the probability associated to the label $y$ for an input $x$ is given by
\begin{equation}\label{eq: prediction}
P_{d^\star,s'}(x,y)
=
\operatorname{Tr}\!\left[
O(y)\,
U(x)\,
|d^\star,s',0\rangle\langle d^\star,s',0|
\,U(x)^\dagger
\right].
\end{equation}

Since, by the construction of $U(x)$ in Eq.~\eqref{eq: encoding_unitary},
$
U(x)|d^\star,s',0\rangle
=
|d^\star,s',g_{N,s'}^{d^\star}(x)\rangle$,
it follows that
\begin{equation}
P_{d^\star,s'}(x,y)=
\begin{cases}
1, & \text{if } g_{N,s'}^{d^\star}(x)=y,\\
0, & \text{if } g_{N,s'}^{d^\star}(x)\neq y.
\end{cases}
\end{equation}
Hence, the prediction rule induced by the ground state
$|d^\star,s',0\rangle$ is precisely the hypothesis
$g_{N,s'}^{d^\star}$. By Eq.~\eqref{eq: learning_guarantee_dcr},
this hypothesis has population error at most $\epsilon$ under
$\mathcal U(\mathbb Z_N)$, which proves the lemma.  

Finally, we note that, once the state $|d^\star,s',0\rangle$ has been obtained quantumly, the prediction probability $P_{d^\star,s'}(x,y)$ in \cref{eq: prediction} can be evaluated entirely classically, since $U(x)$ acting on $|d^\star,s',0\rangle$ is efficiently computable by directly evaluating $g^{d^\star}_{N,s'}(x)$.
\end{proof}

In Lemma~\ref{lemma: dcr quantum learnability}, we constructed an embedding Hamiltonian $H_{\mathcal{T}}$ whose ground space contains a state $\ket{d^\star,s',0}$ associated with a hypothesis $g_{N,s'}^{d^\star}$ that solves the DCR learning task, and we showed that such a state can be found efficiently by a quantum algorithm. In the following lemma, we show that, more generally, every ground state of $H_{\mathcal{T}}$ induces a hypothesis with small population error on the DCR task.

\begin{lemma}
\label{lem:dcr-ground-state-generalization}
Let
$
\mathcal{H}_N
=
\left\{
g_{N,s}^{d}: d,s\in\mathbb{Z}_N
\right\},
$ be the hypothesis class of functions $g_{N,s}^{d}$ as in Eq.~\eqref{eq: g_Ns^d}.
Given the training set
$
\mathcal{T}
=
\{(x_\ell,y_\ell)\}_{\ell=1}^{m},
$  with  $
y_\ell
=
g_{N,s^\star}^{d^\star}(x_\ell)$,
 generated by drawing the inputs independently from the uniform
distribution over $\mathbb{Z}_N$, let $H_\mathcal{T}$ be the embedding Hamiltonian as in Eq.~\eqref{eq: total H}.

For any $\epsilon,\delta>0$, if
\[
m
\ge
\frac{1}{\epsilon}
\left(
2\log N+\log\frac{1}{\delta}
\right),
\]
then, with probability at least $1-\delta$ over the draw of the training set,
every zero-energy basis state $\lvert d,s,0\rangle$ of
$H_{\mathcal T}$ induces a hypothesis $g_{N,s}^{d}$ satisfying
\[
\Pr_{x\sim U(\mathbb{Z}_N)}
\left[
g_{N,s}^{d}(x)
\neq
g_{N,s^\star}^{d^\star}(x)
\right]
\le
\epsilon.
\]
In particular, the conclusion holds independently of whether
$d=d^\star$.
\end{lemma}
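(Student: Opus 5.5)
This is the standard realizable finite-class (Occam) argument, applied to the explicit energy formula for $H_{\mathcal T}$. From Eq.~\eqref{eq: dcr gound energy}, a basis state $\lvert d,s,0\rangle$ has zero energy if and only if $g^{d}_{N,s}$ is consistent with all $m$ training examples. So it suffices to show that, with probability at least $1-\delta$, every hypothesis in $\mathcal H_N$ that is consistent with $\mathcal T$ has population error at most $\epsilon$. The hypothesis class is finite: it is indexed by pairs $(d,s)\in\mathbb Z_N^2$, so $|\mathcal H_N|\le N^2$. No structure of $g^d_{N,s}$ is used, which is why the conclusion does not depend on whether $d=d^\star$.

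\begin{itemize}
\item \textbf{Bad hypotheses.} Call $(d,s)$ bad if
\[
\mathrm{err}(d,s):=\Pr_{x\sim U(\mathbb Z_N)}\big[g^{d}_{N,s}(x)\neq g^{d^\star}_{N,s^\star}(x)\big]>\epsilon .
\]
\item \textbf{One bad hypothesis.} Fix a bad $(d,s)$. The inputs $x_\ell$ are i.i.d.\ uniform and the labels are $y_\ell=g^{d^\star}_{N,s^\star}(x_\ell)$. Hence the probability that $g^d_{N,s}$ agrees with all $m$ labels is at most $(1-\epsilon)^m\le e^{-\epsilon m}$.
\item \textbf{Union bound.} Summing over all pairs $(d,s)$ (bad or not), the probability that some bad hypothesis is consistent with $\mathcal T$ is at most
\[
N^2 e^{-\epsilon m}=\exp\!\big(2\log N-\epsilon m\big).
\]
This is at most $\delta$ exactly when $m\ge \frac1\epsilon\big(2\log N+\log\frac1\delta\big)$, which is the stated sample size, with natural logarithms.
\item \textbf{Conclusion.} On the complementary event, every consistent hypothesis has error at most $\epsilon$. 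Therefore every zero-energy basis state $\lvert d,s,0\rangle$ induces a hypothesis with population error at most $\epsilon$.
\end{itemize}

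The only point needing care is the link from zero-energy basis states to consistent hypotheses. That link is already given by Eq.~\eqref{eq: dcr gound energy}, together with the fact that states with ancilla $\lvert 1\rangle$ have energy at least $1$ per term, so they never attain zero energy. It is also worth noting that $H_{\mathcal T}$ is diagonal in the $\lvert d,s,b\rangle$ basis. Consequently the zero-energy eigenspace is spanned exactly by the consistent $\lvert d,s,0\rangle$, which is the object the lemma refers to. No step here is a genuine obstacle; the argument is a direct Occam bound.
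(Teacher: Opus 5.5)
Your proposal is correct and matches the paper's own proof: zero-energy basis states $\lvert d,s,0\rangle$ correspond exactly to hypotheses consistent with $\mathcal T$, a fixed hypothesis with error above $\epsilon$ survives all $m$ samples with probability at most $e^{-\epsilon m}$, and a union bound over $|\mathcal H_N|\le N^2$ hypotheses yields the stated sample size. Your remark that $H_{\mathcal T}$ is diagonal in the $\lvert d,s,b\rangle$ basis, with ancilla-$\lvert 1\rangle$ states costing at least $1$ per term, is a correct extra detail that the paper leaves implicit.
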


\begin{proof}
By Eq.~\eqref{eq: dcr gound energy}, for every basis state $\lvert d,s,0\rangle$,
\begin{equation}
\langle d,s,0|
H_{\mathcal T}
|d,s,0\rangle
=
\sum_{\ell=1}^{m}
\mathbf{1}
\left[
g_{N,s}^{d}(x_\ell)\neq y_\ell
\right].
\end{equation}
Since the target hypothesis $g_{N,s^\star}^{d^\star}$ is consistent with
the training data, the ground energy of $H_{\mathcal T}$ is zero.
Therefore, any zero-energy basis state $\lvert d,s,0\rangle$
corresponds to a hypothesis $g_{N,s}^{d}$ that is consistent with all
training examples.

Now consider a fixed hypothesis $h\in\mathcal{H}_N$ whose population
error with respect to the target hypothesis is larger than $\epsilon$,
namely
\begin{equation}
R(h)
:=
\Pr_{x\sim U(\mathbb{Z}_N)}
\left[
h(x)\neq g_{N,s^\star}^{d^\star}(x)
\right]
>
\epsilon.
\end{equation}
The probability that $h$ nevertheless agrees with all $m$
independently drawn training examples is
\begin{equation}
(1-R(h))^m
\le
(1-\epsilon)^m
\le
e^{-m\epsilon}.
\end{equation}
Since $d,s\in\mathbb{Z}_N$, the hypothesis class satisfies $
|\mathcal{H}_N|\le N^2$.

Hence, by a union bound,
\begin{equation}
\Pr_{\mathcal T}
\left[
\exists h\in\mathcal{H}_N:
R(h)>\epsilon
\ \text{and}\
\widehat R_{\mathcal T}(h)=0
\right]
\le
N^2 e^{-m\epsilon}.
\end{equation}
If
\begin{equation}
m
\ge
\frac{1}{\epsilon}
\left(
2\log N+\log\frac{1}{\delta}
\right),
\end{equation}
then the right-hand side is at most $\delta$.
Therefore, with probability at least $1-\delta$, every hypothesis in
$\mathcal{H}_N$ that is consistent with the training set has population
error at most $\epsilon$.

Since every zero-energy basis state of $H_{\mathcal T}$ corresponds to
such a consistent hypothesis, the claimed result follows.
\end{proof}

We note that the same argument can be extended to arbitrary ground states that are superpositions of zero-energy basis states. Indeed, for a ground state
$\ket{\psi}=\sum_{d,s} c_{d,s}\ket{d,s,0}$, the prediction probabilities are convex combinations of those associated with the individual hypotheses $g_{N,s}^d$, since the states $\ket{d,s}$ are mutually orthogonal. Consequently, if every zero-energy basis hypothesis has population error at most $\epsilon/2$, the classifier induced by $\ket{\psi}$ has population error at most $\epsilon$, since a misclassification can occur only when at least half of the total weight is assigned to basis hypotheses that err on that input.

We note that this example establishes the existence of a hard task that can be solved within the Hamiltonian embedding framework, for which a quantum algorithm can efficiently prepare the ground state of the embedded Hamiltonian, while doing so classically is believed to be hard. This is sufficient to demonstrate an advantage for that specific task. For more general learning problems, however, we do not expect to be able to design a problem-specific algorithm for preparing the ground state of the embedded Hamiltonian. Instead, one would likely rely on more generic quantum ground-state preparation methods, such as QPE or SKQD.

Finally, for the ground state $\ket{d^\star,s',0}$ constructed by the quantum learning algorithm, inference can be performed entirely classically. Indeed, once the parameters $d^\star$ and $s'$ are known, the corresponding hypothesis $g_{N,s'}^{d^\star}(x)$ can be evaluated efficiently using classical modular exponentiation and a classical comparison. This differs from the settings considered in Section~\ref{sec: classical inference}, where the classical surrogates are constructed using task-agnostic procedures. 

\section{A construction with highly entangled ground states}\label{sec: example_entanglement}
In Section~\ref{sec: observable O}, we discussed that the Hamiltonian embedding framework allows for different choices of the Hermitian operator $A$ entering the observable $O(y)$ in Eq.~\eqref{eq: observable O}. The choice of the label observable $O(y)$ can significantly affect the performance of the model, and different operators may be more suitable depending on the learning task under consideration. Although identifying which choices of $O(y)$ are best suited to a given learning problem is itself an interesting question, in this Appendix and Appendix~\ref{app: provable sep surrogate} we show that the choice of $O(y)$ can influence properties of the ground state $\psi_{{\scriptscriptstyle\mathcal{T}},\alpha^\star}$ found by the Hamiltonian embedding model, such as its degree of entanglement. In this Appendix, we provide a simple example of a label dependent observable $O(y)$ for which the ground states obtained within the Hamiltonian embedding framework exhibit a high degree of entanglement. We note that, in this particular construction, the resulting states remain efficiently classically representable, as they are derived from Clifford states. More elaborate choices of $O(y)$ could in principle give rise to ground states with more complex structure. The purpose of this example is therefore not to establish classical hardness results as we did in Section~\ref{sec: hard problem}, but rather to illustrate how the choice of $O(y)$ can directly influence the properties of the ground state found by the model. 

\subsection{Entangled graph states as ground states}
In order to obtain entangled ground states, we start from the parent Hamiltonian of a graph state, which can exhibit a large amount of bipartite entanglement. Specifically, for $n$ qubits, we consider a graph composed of two cycles of $n/2$ vertices each, connected by a perfect matching between corresponding vertices. To each vertex $a$ we associate the graph-state stabilizer
\begin{equation}
K_a = X_a \prod_{q\in N(a)} Z_q ,
\end{equation}
where $N(a)$ denotes the set of vertices adjacent to $a$. The corresponding parent Hamiltonian is constructed by introducing one local check for each qubit $
h_a=\frac{1}{2}\left(I-K_a\right),$ and defining $
H=\frac{1}{n}\sum_{a=1}^{n} h_a .
$

Since $K_a$ has eigenvalues $\pm1$, each check contributes energy $0$ when $K_a=+1$ and energy $1$ when $K_a=-1$. The ground state is therefore the unique simultaneous $+1$ eigenstate of all the stabilizers $K_a$, namely the graph state associated with the chosen graph.

In order to parameterize this construction with respect to the label $y$, we associate the two classes $y=0$ and $y=1$ with two different stabilizer syndromes. In particular, we choose the two syndromes $
\beta^{(0)}=(0,\ldots,0)
$, $
\beta^{(1)}=(1,0,1,0,\ldots,1,0)
$ with both $\beta^{(0)}, \beta^{(1)}\in\{0,1\}^n$
and define the label-dependent checks as

\begin{equation}
h_a(y)
=
\frac{1}{2}
\left[
I-(-1)^{\beta_a^{(y)}}K_a
\right],
\end{equation}

We can define then the label-dependent observable as the parent Hamiltonian of the graph states with the signature $\beta^{(0)}$ or $\beta^{(1)}$ according to the label $y$:

\begin{equation}\label{eq: label_observable}
O(y)
=
\frac{1}{n}\sum_{a=1}^{n}h_a(y)
=
\frac{1}{2}
\left[
I-\frac{1}{n}\sum_{a=1}^{n}(-1)^{\beta_a^{(y)}}K_a
\right].
\end{equation}

For $y=0$, the ground state satisfies $K_a=+1$ for every qubit $a$, whereas for $y=1$ it satisfies $K_a=(-1)^{\beta_a^{(1)}}$, corresponding to the alternating stabilizer syndrome.

To complete the construction of the Hamiltonian embedding model, we now dress the observable $O(y)$ with a particular choice of encoding unitary $U(x,\alpha)$. In particular, for $n$ qubits we choose the simple unitary 
\begin{equation}
    U(x,\alpha)
=
\bigotimes_{a=1}^{n}
R_Z\!\left(\alpha_a^Z x_a\right)
R_Y\!\left(\alpha_a^Y x_a\right).
\end{equation}
with $\alpha=\{\alpha_a^Y,\alpha_a^Z\}_{a=1}^{n}\in\mathbb{R}^{2n}$ denoting the set of trainable parameters.
We notice that there are no entangling gates in $U(x,\alpha)$. This, however, does not preclude the embedding Hamiltonian from having entangled ground states, as we demonstrate below.
For a datapoint $(x,y)$ the encoding Hamiltonian is then
\begin{equation}\label{eq: Hxy graph}
H_{x,y}(\alpha)
=
\frac{1}{n}
\sum_{a=1}^{n}
\frac{
I-(-1)^{\beta^{(y)}_a}\widetilde K_a(x,\alpha)
}{2},
\end{equation}
where
$
\widetilde K_a(x,\alpha)
=
U(x,\alpha)^\dagger K_a U(x,\alpha)$. For a training set
$
\mathcal T=\{(x_i,y_i)\}_{i=1}^N$,
the total Hamiltonian is
$
H_{\mathcal T}(\alpha)
=
\frac{1}{N}
\sum_{i=1}^N
H_{x_i,y_i}(\alpha)$.

\subsection{Entanglement properties of the ground state}

Our numerical experiments show that, for this construction, the ground states learned by the model exhibit a high degree of entanglement. This can be understood directly from the structure of the label Hamiltonians.
The ground state $\ket{G_{\beta^{(y)}}}$ of the labeled observable $O(y)$ in Eq.~\eqref{eq: label_observable} is the simultaneous eigenstate satisfying
$
    K_a\ket{G_{\beta}}
    =
    (-1)^{\beta_a}\ket{G_{\beta}}$, $\forall a$.
For $\beta^{(0)}$, this is the graph state
\begin{equation}\label{eq: graph_0}
    \ket{G_0}
    =
    \prod_{(i,j)\in E}CZ_{ij}\ket{+}^{\otimes n},
\end{equation}
while for $\beta^{(1)}$ we have
\begin{equation}
    \ket{G_1}
    =
    \left(\prod_{a:\beta_a^{(1)}=1} Z_a\right)\ket{G_0}.
\end{equation}

The embedding Hamiltonian associated with the full training set is
\begin{equation}\label{eq: data_hamiltonian_graph}
    H_{\mathcal T}(\alpha)
    =
    \frac{1}{|\mathcal T|}
    \sum_{(x_\ell,y_\ell)\in\mathcal T}
    H_{(x_\ell,y_\ell)}.
\end{equation}
Its ground state $\ket{\psi_{{\scriptscriptstyle\mathcal{T}},\alpha}^\star}$ is in general determined by the competition between the different training terms and therefore does not coincide with the ground state of any individual term.

The case of zero energy loss is achieved when the learned parameters $\alpha^\star$ and ground state $\ket{\psi_{{\scriptscriptstyle\mathcal{T}},\alpha^\star}}$ satisfy
\begin{equation}
    U(x_\ell,\alpha^\star)
    \ket{\psi_{{\scriptscriptstyle\mathcal{T}},\alpha^\star}}
    = \ket{G_{y_\ell}},
\end{equation}
for every training example $(x_\ell,y_\ell)$.
Since $U(x,\alpha)$ is a product of single-qubit rotations, it is a local unitary and therefore preserves the bipartite entanglement spectrum. Hence, we can expect that the learned ground state $\ket{\psi_{{\scriptscriptstyle\mathcal{T}},\alpha^\star}}$ follows the same bipartite entanglement as the corresponding graph states $\ket{G_0}$ and $\ket{G_1}$.

\subsection{Numerical demonstration}
To study the entanglement properties of the ground states reached by the constructed HE model, we consider the following constructed classification task. As input data $x$, we use images from the Fashion-MNIST dataset and apply PCA to reduce their dimensionality to $n$, matching the number of qubits in the model. Rather than using the original Fashion-MNIST class labels, we assign labels through a planted teacher model. Specifically, we fix a highly entangled graph-state ground state $\ket{\psi_{\mathrm{T}}}$ together with a set of teacher parameters $\alpha_{\mathrm{T}}$, and we label each input $x$ according to the class $y\in\{0,1\}$ whose corresponding Hamiltonian term $H_{x,y}(\alpha_T)$ in Eq.~\eqref{eq: Hxy graph} has the lower energy, i.e.
\begin{equation}
 y(x) = \arg\min_{y\in\{0,1\}} \bra{\psi_T} H_{x,y}(\alpha_T) \ket{\psi_T}
 \end{equation}

We note that, in this Appendix, $O(y)$ is defined as the label-dependent parent Hamiltonian, so that the state associated with the correct label minimizes its expectation value and prediction is performed via an $\arg\min_y$ rule. This differs only by convention from the main text, where $O(y)$ denotes the complementary score observable and prediction is therefore performed via an $\arg\max_y$ rule. Moreover, the operators used here do not satisfy $O(0)+O(1)= I$. Prediction selects the label with the lower expectation value, without interpreting these values as probabilities.
The resulting labeled dataset is then used to train a Hamiltonian-embedding student model of the same form, without providing it with either the teacher state or the teacher parameters. 
We notice that the purpose of this experiment is to study the entanglement properties of ground states that can be learned within the Hamiltonian-embedding
model, rather than the ability of the outer optimization to locate such
solutions from arbitrary initializations. We therefore initialize the
student parameters in the neighborhood of the planted teacher,
$\alpha^{(0)}=\alpha_T+\epsilon$, with
$\epsilon_j\sim\mathcal{N}(0,0.25^2)$.
In particular, we choose the teacher parameters $\alpha_T$ by sampling each entry independently from a uniform distribution over $[-\pi,\pi]$, and take as the teacher state $\ket{\psi_T}$ a highly entangled graph-syndrome state,
\begin{equation}
    \ket{\psi_T}=Z^{\beta_T}\ket{G_0},
\end{equation}
where $\ket{G_0}$ is the graph state introduced in Eq.~\eqref{eq: graph_0} and $\beta_T\in\{0,1\}^n$ is a fixed nontrivial stabilizer syndrome for each system size.

We use $N=800$ training samples, together with $200$ validation samples and $400$ test samples for each run.
For the outer parameter optimization, we use the energy-based analogue of the hinge loss introduced in Eq.~\eqref{eq: hinge loss}. Since classification is performed by selecting the label with the lowest energy, we define the class-energy margin as
$ \Delta_\alpha(x,y)=E_{1-y}(x)-E_y(x), $
and optimize
$ \ell_{\mathrm{hinge}}^{(\gamma)}(x,y) =\max\{0,\gamma-\Delta_\alpha(x,y)\} $, with $\gamma=0.05$.
At each optimization step, the ground state of the corresponding training Hamiltonian is recomputed exactly and then kept fixed while updating the encoding parameters.
After training, we study how accurately the final exact learned ground
state can be represented by an MPS of bounded bond dimension. We fix the
qubit ordering used in the simulation and convert the exact state into
an MPS, truncating its Schmidt decompositions so that the maximum bond
dimension is bounded by $\chi$. For each value of $\chi$, we evaluate
both the fidelity of the resulting truncated MPS with the exact learned
state and the test accuracy obtained when this MPS is used in place of
the exact state.

As shown in \cref{fig:mps-bond-scaling}, low-bond-dimension MPS approximations in this fixed ordering have low fidelity with the learned ground state, and
the corresponding classification accuracy is substantially reduced.
Increasing $\chi$ progressively recovers both the exact state and the classification performance.

\begin{figure}[t]
    \centering
    \includegraphics[width=0.9\linewidth]{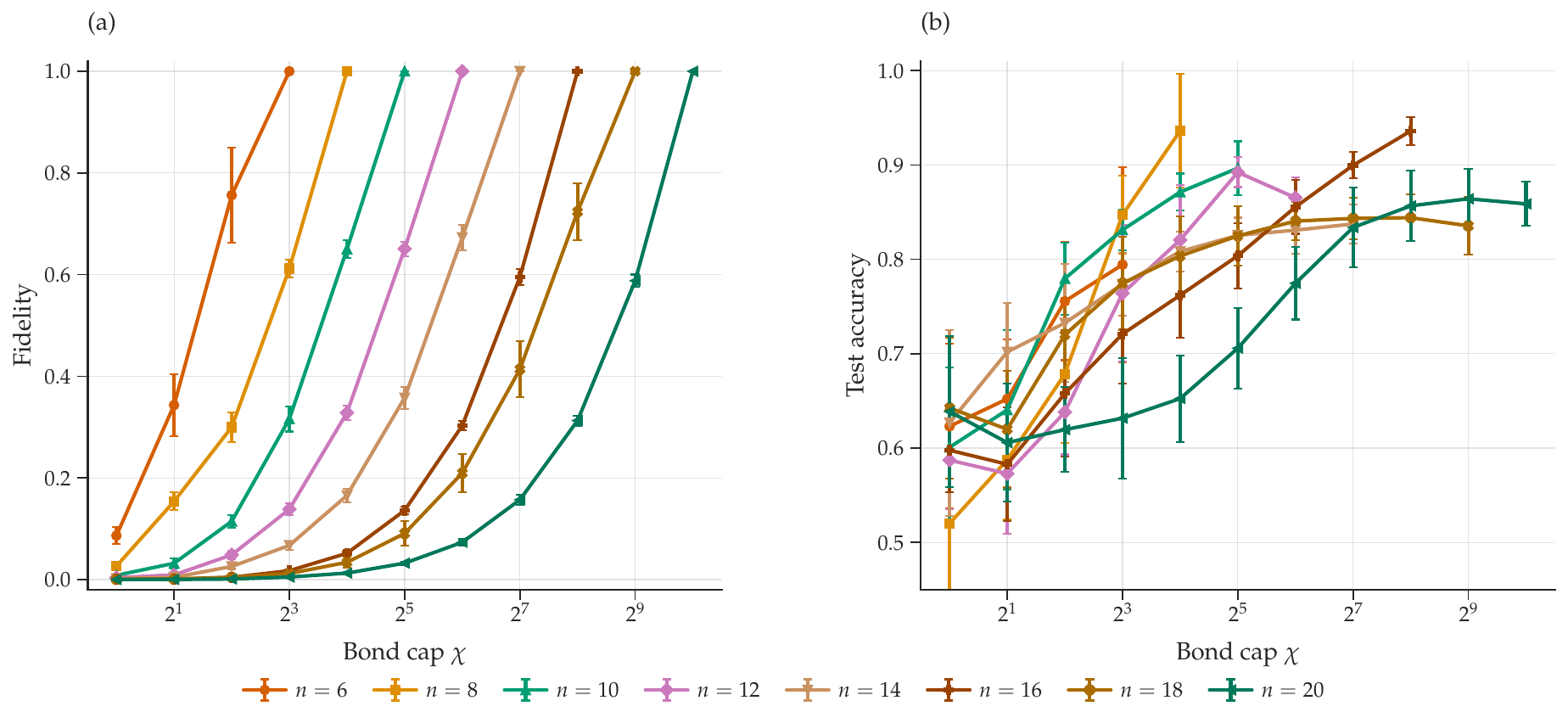}
    \caption{
    Left: fidelity between the final exact learned ground state and its MPS
approximation obtained by Schmidt truncation at maximum bond dimension
$\chi$, using the fixed qubit ordering of the simulation.
Right: test accuracy obtained by replacing the exact learned state with
the corresponding truncated MPS state.
    }
    \label{fig:mps-bond-scaling}
\end{figure}

\section{A provable separation between Pauli surrogates and  Hamiltonian embedding}\label{app: provable sep surrogate} 

In Section~\ref{sec: classical inference}, we showed that, for certain choices of the encoding unitary $U(x,\alpha)$, the trained Hamiltonian embedding model can be evaluated classically once a suitable classical surrogate has been constructed from the final state $\psi_{{\scriptscriptstyle\mathcal{T}},\alpha^\star}$. This naturally raises the question of whether such a surrogate could instead be trained directly, thereby removing the need to obtain the quantum state altogether.
In this Appendix, we present an example in which we analytically show that the physical constraints imposed by the Hamiltonian embedding model lead to better predictions on unseen data than the corresponding Pauli classical surrogate. Moreover, the efficiency guarantee for the classical surrogate
based on reduced density matrices introduced in
Eq.~\eqref{eq: density surrogate} does not extend to this construction.
In order to prove a separation between the two models, we use properties of stabilizer codes and stabilizer states.

Similar to Appendix~\ref{sec: example_entanglement}, we consider a stabilizer code with stabilizers $K_a$, $a\in [n]$.
Concretely, in our example, we have a graph $G$ with stabilizers of the form $K_a = X_a \prod_{q \in N_G(a)}Z_a$, where $N_G(a)$ is the set of neighboring vertices of $a$ in the graph.
This graph and the set of stabilizers is known and specifies the ``concept class'' of the labeling functions.
In particular, each concept is associated with a distinct assignment specifying whether each $K_a$ acts as a stabilizer or an antistabilizer of the ground state. Thus, there are $2^n$ possible functions for a given graph, indexed by the syndrome vector $\beta$ whose coordinates $\beta_a$ are +1 (-1) depending on whether $K_a$ is a stabilizer (antistabilizer).

We now describe the function more explicitly.
Let the data points have coordinates $x_i = \abs{x}_i \zeta_i$ with $\abs{x}_i \in [0,1]$ and $\zeta_i = \text{sign}(x_i)$.
For an input vector $ x$, we form a query by considering which coordinates $i$ have a magnitude above some cutoff $\xi$.
Define this to be the ``active set'' $S = \{i: \abs{x_i}\geq \xi\}$.
We imagine the active set $S$ to be a query to the function since the function depends only on the signs $\{\zeta_i, i \in S\}$ (and not on $\abs{x_i}$).
For each such input, construct an ``error vector'' $W_\zeta := \prod_{i \in S:\zeta_i =-1}Z_i$, supported on the collection of places where the query has a negative sign.
Now define $\eta(\zeta,a)$ to be $+ 1$ if the stabilizer $K_a$ commutes with the error $W_\zeta$ and $-1$ otherwise.
The function to be learned is:
\begin{align}
y_\beta(x) &= \prod_{a \in S} y_{\beta(x), a}, \quad \text{where } \\
y_{\beta(x), a} &=  \beta_a \eta(\zeta,a). \nonumber
\end{align}

In our case where the stabilizers are of the form $K_a = X_a\prod_{q \in N_G(a)}Z_q$, whether an error vector $W_\zeta$ commutes with $K_a$ or not is decided by whether $\zeta_a = \pm 1$. In other words, we have
\begin{align}
    W_\zeta^\dag K_a W_\zeta =  \zeta_a K_a,
\end{align}
so in this case $\eta(\zeta,a)=\zeta_a$. 

Therefore, the function is 
\begin{align}
y_\beta(x)= \left(\prod_{a \in S} \beta_a \right) \cdot \left(\prod_{a \in S} \zeta_a \right) \\
= \left(\prod_{a : \abs{x}_a \geq \xi} \beta_a  \cdot \text{sign}(x_a) \right),
\end{align}
where we have made the dependence on $x$ explicit by writing out $S$ and $\zeta_a$ in terms of $x$.

We define some more operators for notational convenience.
Let $L_a = -i Z_a K_a = Y_a \prod_{q \in N_G(a)} Z_q$.
This operator anticommutes both with $Z_a$ and $K_a$.
Also, define $L_S = \prod_{a \in S} L_a$ and similarly $Z_S$ and $K_S$. 
Note that there is no ambiguity with respect to ordering in these definitions because all the different $L_a$ operators (respectively, $Z_a$ and $K_a$) commute with each other.
For a given active set $S$, let $R_S$ be a Clifford operator that applies the map
\begin{align}
    R_S^\dag Z_1 R_S = L_S.
\end{align}
Now consider the unitary $U(x)=  R_S \prod_{a \in S}\exp[-i \alpha_a x_a Z_a/2]$. 
The form of $U(x)$ matches the definition in this work of a unitary that is efficiently constructible from the input $x$.
Moreover, the gates that depend on the magnitudes of $x$ have the same form as those considered in the numerical sections of this work, namely Pauli rotations whose angles depend linearly on the input features,
$ \exp\!\left[-\frac{i}{2}(\vec{\alpha}\cdot\vec{x})P\right], $
where $P$ denotes the corresponding Pauli generator.

Now 
\begin{align}
U(x)^\dag Z_1 U(x) &= \prod_{a \in S} \exp[+i \alpha_a x_a Z_a/2] R_S^\dag Z_1  R_S \prod_{b \in S} \exp[-i \alpha_b x_b Z_b/2] \\
&= \prod_{a \in S} \exp[+i \alpha_a x_a Z_a/2] L_S \prod_{b \in S} \exp[-i \alpha_b x_b Z_b/2]. 
\end{align}
Since $Z_a$ commutes with all $L_b$ except $L_a$, we get 
\begin{align}
U(x)^\dag Z_1 U(x) &= \prod_{a \in S} \left(\cos \frac{\alpha_a x_a}{2} I + i \sin \dfrac{\alpha_a x_a}{2} Z_a \right) L_a \left(\cos \frac{\alpha_a x_a}{2} I - i \sin \frac{\alpha_a x_a}{2} Z_a \right) \\
 &= \prod_{a \in S} (\cos{(\alpha_a x_a )} L_a +  i \sin {(\alpha_a x_a)} Z_a L_a) \\
 &= \prod_{a \in S} (\cos{(\alpha_a x_a )} L_a +  \sin {(\alpha_a x_a)} K_a).
\end{align}

Recall $x_a = \abs{x_a}\zeta_a$, $\zeta_a=\pm 1$. Then
\begin{align}
U(x)^\dag Z_1 U(x)
=\prod_{a\in S}
\left(\cos(\alpha_a \abs{x_a})L_a
+\zeta_a\sin(\alpha_a \abs{x_a})K_a\right).
\label{eq:graph-dressed-observable}
\end{align}
We note that this construction also illustrates the limitation of the reduced-density-matrix surrogate in \cref{eq: density surrogate} in Section~\ref{sec: dequantizability surrogate}. Here, the readout is $A=Z_1$, and the support of the dressed observable $U(x,\alpha)^\dagger Z_1U(x,\alpha)$ depends on the active set $S(x)$ and contains its qubits. Consequently, a fixed region supporting all training observables must contain $\bigcup_{\ell=1}^{N}S(x_\ell)$, the set of qubits active in at least one training example. Whenever the sampled active sets cover all $n$ qubits, the fixed supporting region for all the training observables $U(x)^\dag Z_1 U(x)$ must contain the entire register. Therefore, the corresponding density matrix $\sigma_1$ has dimension $2^n$. Thus, the polynomial-time argument based on explicitly diagonalizing matrices on small fixed regions does not apply. We remark that the accuracy separation proved below concerns the Pauli surrogate introduced in Eq.~\eqref{eq: pauli surrogate base}.

\subsection{Idealized embedding Hamiltonian and its ground state: perfect classification}
The Hamiltonian term corresponding to this data point is
\begin{align}
  H(x,y) &= \frac{I - y U^\dag Z_1 U}{2} \\  
  &= \frac{I - y \prod_a (\cos{(\alpha_a \abs{x_a})} L_a +  \zeta_a \sin {(\alpha_a \abs{x_a})} K_a)}{2} \\
  &= \frac{I - \prod_a \beta_a \zeta_a \left(\cos{(\alpha_a \abs{x_a})} L_a +  \zeta_a \sin {(\alpha_a \abs{x_a})} K_a \right)}{2}.
\end{align}
Rewrite this in terms of binary variables $z \in \{0,1\}^\abs{S}$ as
\begin{align}
 H(x,y) = \frac12 \times \left[ {I - \left(\prod_{a \in S} \beta_a \right) \sum_{z \in \{0,1\}^\abs{S}}   P_z } \right],
\end{align}
where 
$P_z = \prod_{a:z_a=0} \zeta_a \cos{(\alpha_a \abs{x_a})} L_a \times \prod_{a:z_a=1}\sin{(\alpha_a \abs{x_a})} K_a.$
If we average this over all data points from the distribution, then considering the different signs of $x_a$ at the same magnitude, i.e. averaging over $\zeta_a =\pm 1$ while keeping $\abs{x_a}$ fixed:
we get
\begin{align}
  P_z + P_{z_{\bar a}} = 0 \quad \text{if } z_a = 0, 
\end{align}
where $z_{\bar a}$ is the binary vector $z$ with the bit at position $a$ flipped: $z_{{\bar a}_a}=1-z_a$.
The average over signs gives 
\begin{align}
  \sum_{z \in \{0,1\}^n} P_z =  \prod_{a \in S} \sin{(\alpha_a \abs{x_a})} K_a.
\end{align}
This means the embedding Hamiltonian averaged over different signs for the same $\abs{x_a}$ gives
\begin{align}
H_\mathcal{D}(\abs{x}) &=  \frac{1}{2} \times \left[ I -  \left(\prod_{a \in S} \beta_a \right) \prod_{a \in S} \sin{(\alpha_a \abs{x_a})} K_a  \right] \\
&= \frac{I - \prod_{a \in S} (\beta_a  \sin {(\alpha_a \abs{x_a})} K_a)}{2}.
\end{align}

For the full embedding Hamiltonian, we should also average over the distribution of $x$ for all coordinates, denoted $g(x)$:
\begin{align}
H_\mathcal{D} = \int d^n x g(x) \frac{I - \prod_{a \in S} \beta_a  (\sin {(\alpha_a \abs{x_a})} K_a)}{2}. \label{eq: full_embedding_Hamiltonian_graph}
\end{align}

This is a commuting Hamiltonian whose terms are products of $\beta_a K_a$.
The number of products that appear is precisely $\abs{S}$, which is the sparsity of the input vector $x$ above the cutoff $\xi$.
If the distribution over the input data is supported on 1-sparse inputs $x$, then these stabilizers $\beta_a K_a$ are represented in the Hamiltonian.
Since these stabilizers commute with each other, the full embedding Hamiltonian is frustration-free. Its ground space is the +1 eigenspace of the operators $\cup_S \{ \prod_{a \in S}\beta_a K_a \}$.
Note that if the distribution that the input data $x$ is drawn from is not supported on all coordinates, the group generated by the stabilizer generators $\cup_S \{\prod_{a \in S}\beta_a K_a\}$ may not be full rank.

Now consider classification.
Assume that the ground state $\ket{\psi}$ of the embedding Hamiltonian is found by the ground state solver (which we treat as a black box in this Appendix).
The expectation value of $U(x)^\dag Z_1 U(x)$ in this state is:
\begin{align}
\bra{\psi} U(x)^\dag Z_1 U(x) \ket{\psi} = \bra{\psi} \prod_{a\in S} (\cos{(\alpha_a \abs{x_a})} L_a +   \zeta_a \sin {(\alpha_a \abs{x_a})} K_a )\ket{\psi}.
\end{align}
We have $\bra{\psi} L_a \ket{\psi} =0$ because $L_a$ anticommutes with $\beta_a K_a$, which is the stabilizer of $\ket{\psi}$.
Therefore,
\begin{align}
\bra{\psi} U(x)^\dag Z_1 U(x) \ket{\psi} = \bra{\psi} \prod_{a\in S} (\cos{(\alpha_a \abs{x_a})} L_a +   \zeta_a \sin {(\alpha_a \abs{x_a})} K_a )\ket{\psi} = \prod_{a\in S}\beta_a \zeta_a \sin {(\alpha_a \abs{x_a})}.
\end{align}
The sign of this is $\prod_{a\in S} \beta_a \zeta_a \sign(\alpha_a)$, which matches the function $y_\beta(x)$ whenever $\sign(\alpha_a)=1$ for all $a\in S$. Therefore, classification is perfect as long as $\alpha_a > 0$.

\subsection{Embedding Hamiltonian with finite training data}
In this subsection, we specify the distribution over inputs and prove that there exists a regime of number of training points such that the empirical Hamiltonian embedding model classifies the data nearly perfectly; in this same regime we can also prove an upper bound on the accuracy of the classical Pauli surrogates described in \cref{sec: Pauli surrogates}. This separation between the classical surrogate and the Hamiltonian embedding model proves that this model generalizes better when the model is suited to the task.

We choose the input training data to be drawn from the following distribution. 
First, we choose the size of the active set according to the distribution
\begin{align}
\Pr(\abs{S}=2)=\frac{4}{7},\qquad \Pr(\abs{S}=3)=\frac{2}{7},\qquad \Pr(\abs{S}=4)=\frac{1}{7}. \label{eq: query_distribution}
\end{align}
Then, for a given size $\abs{S}$, we pick the active coordinates uniformly.
Let $a$ denote a location that is active, that is $a\in S$.
The input vector $x$ at coordinate $a$ has magnitude $\abs{x}_a = r_a$, where $r_a$ is a random number in $[\xi,1]$.
This number $r_a$ is fixed for the duration of the experiment, that is, any other training point with $a$ an active location will have $\abs{x}_a = r_a$. 

Now consider the empirical embedding Hamiltonian as opposed to the idealized one in \cref{eq: full_embedding_Hamiltonian_graph}.
The difference is that in practice we may not have queried every possible input, which is precisely the case where we expect the Hamiltonian embedding model to generalize better by virtue of its inductive bias.
The empirical embedding Hamiltonian is
\begin{align}
 H_\T(\alpha) = \frac{1}{N} \sum_{\ell} H(x_\ell,y_\ell) = \sum_{\ell} \frac{1}{2N} \times \left[ {I - \left(\prod_{a \in S^\ell} \beta_a \right) \sum_{z \in \{0,1\}^\abs{S^\ell}}   P_{z^\ell}(\alpha) } \right],
\end{align}
where 
$P_{z^\ell}(\alpha) = \prod_{a:z_a^\ell=0} \zeta_a^\ell \cos{(\alpha_a \abs{x^\ell_a})} L_{a} \times \prod_{a:z_a^\ell=1}\sin{(\alpha_a \abs{x_{a}^\ell})} K_a$ and $\ell =1,2, \ldots N$ denotes the different input training data points.
Since, as mentioned before, for different data points we have the same magnitude $\abs{x^\ell_a}= r_a \forall \ell$, we can rewrite this as:
\begin{align}
P_{z^\ell}(\alpha) = \prod_{a:z_a^\ell=0} \zeta_a^\ell \cos{(\alpha_a r_a)} L_{a} \times \prod_{a:z_a^\ell=1}\sin{(\alpha_a r_a)} K_a.
\end{align}
Notice that in this case, since we do not assume that the different signs $\zeta_a$ queried at the same magnitude $r_a$ cancel out, and thus the cosine term survives.
This can potentially be problematic because $L_a$ anticommutes with $K_a$ and would make the Hamiltonian frustrated.
Nevertheless, the empirical embedding Hamiltonian can have the same ground state $\ket{\psi}$ as the full embedding Hamiltonian if the angles $\alpha$ can be trained.
In this case, setting $\alpha^*_a  = \frac{\pi}{2r_a}$ ensures that the cosine term gives 0 and thus only the all-sine term $z^\ell = 1\ldots 1$ survives:
$P_{z^\ell}(\alpha) = \prod_{a \in S^\ell} K_a$, 
giving
\begin{align}
 H_\T(\alpha^*) = \sum_{\ell} \frac{1}{2N} \times \left[ {I - \left(\prod_{a \in S^\ell} \beta_a K_a \right) } \right].
\end{align}
As we have seen before, the Hamiltonian has as ground space the +1 eigenspace of the stabilizer group $\cup_{\ell} \{\prod_{a \in S^\ell} \beta_a K_a\}$, with energy $\expval{H_\T}=0$.
As we have restricted ourselves to positive semidefinite Hamiltonians, this proves that $\alpha^*$ is an optimum.

\subsubsection{Empirical accuracy of Hamiltonian embedding}
Consider the binary matrix $A$, whose $\ell$th row records the active set $S^\ell$.
We are interested in the event that $A$ has rank $n$ over $\mathbb F_2$.
This is equivalent to whether the sampled stabilizer products generate the full stabilizer
group $\langle\beta_a K_a:a\in[n]\rangle$.
In this case, the empirical embedding Hamiltonian would have the unique ground state $\ket{\psi}$.
Note that merely requiring every coordinate $a$ to appear in some sampled query is not sufficient for this spanning condition.
On the other hand, if $N$ is large enough that every possible input in the support of the distribution is observed, the classical surrogate can also classify perfectly by
memorization.

Therefore, we now investigate the case where $N$ is bounded. 
Specifically, we choose $N=4n\log n$.
Our analysis of accuracy depends on the probability that the stabilizer group $\cup_{\ell} \{\prod_{a \in S^\ell} \beta_a K_a\}$ has full rank.
Consider the matrix $A_{N\times n}$ whose rows correspond to training data.
For each data point $\ell$, we have $A_{\ell, a}=1$ if $a$ is in the active set $S^\ell$, and 0 otherwise.
Now $A$ is not full rank if and only if there exists a nonzero binary vector $v$ such that $A \cdot v = 0$.
Denote by $T$ the set of locations $i$ where $v_i = 1$ and let $\abs{T}=t$.
Then
\begin{align}
A_{\ell} \cdot v = 0 \Leftrightarrow \abs{S^\ell \cap T} = 0 \bmod 2.
\end{align}
Suppose we fix a $v$ and consider whether for a random query $\ell$ from the distribution $v$ is ``detected'' by $\ell$, i.e.~whether $\abs{S \cap T} = 1 \bmod 2$.
The probability of this event, conditioned on $\abs{S}=2$, is:
\begin{align}
 \Pr\left[\abs{S \cap T} = 1 \bigg \vert \abs{S}=2\right] = \frac{t \times (n-t)}{\binom{n}{2}}.
\end{align}
Similarly, conditioning on $\abs{S}=3$, we get the two cases $\abs{S \cap T} = 1 $ or $3$, giving 
\begin{align}
\Pr\left[\abs{S \cap T} = 1 \bmod 2 \bigg \vert \abs{S}=3 \right] = \frac{t \times \binom{n-t}{2}}{\binom{n}{3}} + \frac{\binom{t}{3} \times \binom{n-t}{0}}{\binom{n}{3}}.
\end{align}
Lastly, when $\abs{S}=4$, we again have the possibilities  $\abs{S \cap T} = 1 $ or $3$, with the probability 
\begin{align}
\Pr\left[\abs{S \cap T} = 1 \bmod 2 \bigg \vert \abs{S}=4 \right] = \frac{t \times \binom{n-t}{3}}{\binom{n}{4}} + \frac{\binom{t}{3} \times \binom{n-t}{1}}{\binom{n}{4}}.
\end{align}
Denote the combined probability
\begin{align}
r_t = \Pr\left[\abs{S \cap T} = 1 \bmod 2 \right] &= \frac{4}{7} \Pr\left[\abs{S \cap T} = 1 \bigg \vert \abs{S}=2\right] + \frac{2}{7} \Pr\left[\abs{S \cap T} = 1 \bmod 2 \bigg \vert \abs{S}=3\right]  \nonumber \\
& + \frac{1}{7}\Pr\left[\abs{S \cap T} = 1 \bmod 2 \bigg \vert \abs{S}=4\right],
\end{align}
where we have made use of the distribution over the active set sizes.
The probability that no row detects $v$ is given by $(1-r_t)^N \leq e^{-N r_t}$.
By a union bound over all nonzero $v\in\mathbb F_2^n$,
\begin{align}
\Pr[\operatorname{rank}(A)<n] \leq \sum_{t=1}^n \binom nt e^{-Nr_t}. \label{eq:rank-deficiency-union-bound}
\end{align}
We set $N=\lceil4n\log n\rceil$ and divide the sum into two regimes.

First, suppose $1\leq t\leq n/2$. Retaining only the contribution from 2-sparse queries gives
\begin{align}
r_t &\geq \frac47\frac{t(n-t)}{\binom n2}\\
&=
\frac{8t(n-t)}{7n(n-1)}
\geq
\frac{4t}{7n}.
\end{align}
Consequently,
\begin{align}
Nr_t\geq\frac{16}{7}t\log n.
\end{align}
Using $\binom nt\leq(en/t)^t$, the contribution of this regime to
Eq.~\eqref{eq:rank-deficiency-union-bound} is at most
\begin{align}
\sum_{t=1}^{\lfloor n/2\rfloor}
\binom nt e^{-Nr_t}
&\leq
\sum_{t=1}^{\lfloor n/2\rfloor}
\left(\frac{en}{t}\right)^t n^{-16t/7}\\
&\leq
\sum_{t\geq1}
\left(\frac{e}{n^{9/7}}\right)^t
=
O\!\left(n^{-9/7}\right).
\end{align}
Next, suppose $n/2<t\leq n$. A 3-sparse query lying entirely inside
$T$ has odd intersection with $T$, and hence
\begin{align}
r_t
\geq
\frac27\frac{\binom t3}{\binom n3}
\geq
\frac{1}{56}
\end{align}
for all sufficiently large $n$. The contribution of this regime is
therefore at most
\begin{align}
2^n e^{-N/56}
\leq
\exp\left(
n\log2-\frac{n\log n}{14}
\right)
=o(1).
\end{align}
Combining the two regimes gives
\begin{align}
\Pr[\operatorname{rank}(A)<n]=o(1),
\end{align}
so $A$ has rank $n$ with probability $1-o(1)$.
This means that the accuracy of the Hamiltonian embedding model, assuming that the ground-state solver is perfect, would be $1-o(1)$.

\subsubsection{Accuracy of classical surrogates}
For an active set $S$, define
\begin{align}
P_{S,z}
&:=
\prod_{a\in S:z_a=0}L_a
\prod_{a\in S:z_a=1}K_a,\\
c_{S,z}(x,\alpha)
&:=
\prod_{a\in S:z_a=0}\cos(\alpha_a\abs{x_a})
\prod_{a\in S:z_a=1}
\zeta_a\sin(\alpha_a\abs{x_a}),
\end{align}
where $z\in\{0,1\}^{\abs{S}}$. The inference observable and its
Pauli surrogate are then
\begin{align}
U(x,\alpha)^\dag Z_1 U(x,\alpha)
&=\sum_{z\in\{0,1\}^{\abs{S}}}c_{S,z}(x,\alpha)P_{S,z},\\
g_S(x,\alpha)
&=\sum_{z\in\{0,1\}^{\abs{S}}}c_{S,z}(x,\alpha)m_{S,z}.
\label{eq:graph-state-pauli-surrogate}
\end{align}
Here, the expectation value of $P_{S,z}$, namely $\bra{\phi}P_{S,z}\ket{\phi}$, is replaced by the coefficient $m_{S,z}$, which is unconstrained.
For the graph stabilizers considered here, $P_{S,z}$ has an $X$ or $Y$ on precisely the qubits in $S$. Pauli observables belonging to different active sets are therefore distinct, so the free surrogate has a separate coefficient block for each possible query $S$.

For fixed $\alpha$, the loss function used in ridge regression is
\begin{align}
J(m;\alpha)
&=
\frac{1}{2N}
\sum_{\ell=1}^N
\left[
y_\ell-
\sum_{z\in\{0,1\}^{\abs{S^\ell}}}
c_{S^\ell,z}(x^\ell,\alpha)m_{S^\ell,z}
\right]^2 + \frac{\lambda}{2} \sum_{S \in \mathcal{Q}}\sum_{z\in\{0,1\}^{\abs{S}}}m_{S,z}^2,
\label{eq:graph-state-ridge-objective}
\end{align}
where the second sum $S \in \mathcal{Q}$ is over all supports $\mathcal{Q}=\{S \subseteq [n]\}$.
If a support $S$ is absent from the training set, its coefficient block does not enter the data-fitting term. The unique ridge solution is therefore
\begin{align}
m_{S,z}^{\star}(\alpha)=0
\qquad
\text{for all }z,\quad S\notin\mathcal Q_{\mathcal T},
\label{eq:unseen-support-zero}
\end{align}
where $\mathcal Q_{\mathcal T}$ is the set of observed supports. This holds
for every $\alpha$, and hence also when $\alpha$ and $m$ are optimized
jointly for the classical surrogate.

We can evaluate the coefficients explicitly when $\alpha$ is fixed to the
Hamiltonian-embedding optimum constructed above,
$\alpha_a^\star r_a=\frac{\pi}{2}$.
Writing $\beta_S:=\prod_{a\in S}\beta_a$, we then have
\begin{align}
c_{S,\mathbf 1_S}(x,\alpha^\star)
=
\prod_{a\in S}\zeta_a,
\qquad
c_{S,z}(x,\alpha^\star)=0
\quad
(z\neq\mathbf 1_S),
\end{align}
and the label is
$
y_\beta(x)=\beta_S\prod_{a\in S}\zeta_a
$.

If $N_S$ denotes the number of training examples with support $S$, the
ridge solution is found by setting 
\begin{align}
\frac{\partial J(m;\alpha)}{m_{S,z}} &= 0 \\
&=  -\frac{N_S}{N} \beta_S \delta_{z,\mathbf{1}_S} + \frac{N_S}{N} m_{S,z} \delta_{z,\mathbf{1}_S} + {\lambda} m_{S,z},
\end{align}
giving 
\begin{align}
m_{S,z}^{\star}(\alpha^\star)
=
\begin{cases}
\displaystyle
\beta_S\frac{N_S}{N_S+\lambda N},
&N_S>0\ \text{and}\ z=\mathbf 1_S,\\[2mm]
0,&\text{otherwise}.
\end{cases}
\label{eq:graph-state-ridge-solution}
\end{align}
Consequently, every observed support is classified correctly:
\begin{align}
y_\beta(x)g_S(x,\alpha^\star)
=
\frac{N_S}{N_S+\lambda N}>0,
\qquad N_S>0,
\end{align}
whereas an unobserved support has zero score.

Let $M_k(\mathcal T)$ be the number of distinct $k$-element supports
observed during training. A new query picked at random according to the distribution in
Eq.~\eqref{eq: query_distribution}, will have been seen previously with probability
\begin{align}
q_{\rm seen}(\mathcal T) = \sum_{k=2}^4 p_k\frac{M_k(\mathcal T)}{\binom nk}.
\end{align}
On a previously unobserved support the accuracy is $1/2$. Therefore, for any fixed or independently optimized $\alpha$, the ridge classifier's accuracy is
\begin{align}
A_{\rm ridge}(\mathcal T;\alpha)
&= \frac{1-q_{\rm seen}(\mathcal T)}{2} + q_{\rm seen}(\mathcal T) A_{\rm seen}(\mathcal T;\alpha),
\label{eq:ridge-occupancy-bound}
\end{align}
where $A_{\rm seen}$ is the classifier's accuracy on previously seen points.
This gives 
\begin{align}
\abs{A_{\rm ridge}(\mathcal T;\alpha)-\frac12} & \leq \frac12q_{\rm seen}(\mathcal T).
\end{align}
When $\alpha$ is the optimal Hamiltonian-embedding solution, \cref{eq:graph-state-ridge-solution} gives $A_{\rm seen}=1$, and hence
\begin{align}
A_{\rm ridge}(\mathcal T;\alpha^\star) = \frac12+\frac12q_{\rm seen}(\mathcal T). \label{eq:target-angle-ridge-accuracy}
\end{align}
Thus even if the surrogate optimizes $\alpha$ independently its accuracy remains bounded; moreover, fixing $\alpha=\alpha^\star$ shows that the bound is achievable.

Lastly, we bound $q_{\rm seen}$. For a particular $k$-element support, the probability of appearing at least once in $N$ independent training queries is
\begin{align}
1-\left(1-\frac{p_k}{\binom nk}\right)^N.
\end{align}
Averaging \cref{eq:ridge-occupancy-bound} over the training set gives
\begin{align}
\mathbb E_{\mathcal T}[A_{\rm ridge}]
&\leq
\frac12+ \frac12\sum_{k=2}^4p_k \left[ 1-\left(1-\frac{p_k}{\binom nk}\right)^N \right]\\
&\leq \frac12+ \frac12\sum_{k=2}^4 \frac{ N p_k^2}{\binom nk} \\
&\leq \frac12+ \frac{N}{98} \left[ \frac{16}{\binom n2} +\frac4{\binom n3} +\frac1{\binom n4} \right],
\label{eq:expected-ridge-occupancy}
\end{align}
where the first inequality is an equality at $\alpha^\star$. In particular,
for $N=\Theta(n\log n)$,
\begin{align}
A_{\rm ridge}
=
\frac12+O\!\left(\frac{\log n}{n}\right).
\end{align}

The difference between the surrogate and the Hamiltonian embedding model is that the latter combines the observed stabilizer products: once their incidence vectors (i.e.\ the rows in $A$) have full rank, the underlying generator signs $\beta_a$ are fixed and every support can be classified.
The fact that a single state is consistent with all Pauli expectation values and also informs the expectation values of unseen Paulis by virtue of the Pauli algebra is what provides the inductive bias in this case, allowing Hamiltonian embedding to generalize better than the classical surrogate.

\subsection{Required bond dimension for good classification with Hamiltonian embedding}
In this subsection, we empirically study this previous example by instantiating a family of graphs and their corresponding codes.
We randomly generate a simple, connected 3-regular graph, which fixes the stabilizer code family, at each system size.
As mentioned earlier, the stabilizers are $\beta_i X_i \prod_{j\in N(i)}Z_j$, with the signs $\beta \in \pm 1$ that describe the function also drawn uniformly at random at every vertex.
We set the cutoff $\xi = 0.6$ and draw the radius of every active coordinate $\abs{x_i} \sim \operatorname{Unif}[0.6,1]$ for every realization and fix them thereafter.
The signs of the input coordinates are again chosen uniformly at random.
This fixes the distribution over input vectors $\in \mathbb{R}^n$.
We take the same regime of $N=4n\log n$, with the same distribution over queries as \cref{eq: query_distribution}.

In order to train the Hamiltonian embedding model, we adopt a different numerical technique from the one considered in \cref{sec: numerics}.
Specifically, we apply imaginary time evolution on a classical MPS representation of the state.
The MPS has a fixed ordering and no attempt was made at optimizing the ordering in order to better represent the ideal stabilizer ground state, since this strategy does not scale in general.
After each application of imaginary time evolution, we compress the MPS up to a maximum bond dimension $\chi$, with a numerical cutoff of $10^{-8}$ for the singular value decomposition (SVD).
For the numerical optimization, we parameterize the 
rotation angles as
\begin{align}
\alpha_a=\alpha_c+\Delta_\alpha u_a,\qquad u_a\in[-1,1],
\end{align}
where
\begin{align}
\alpha_c=\frac{\pi}{4}(1+\xi^{-1}),\qquad
\Delta_\alpha=\frac{\pi}{4}(\xi^{-1}-1).    
\end{align}
Thus $\alpha_a\in[\frac{\pi}{2},\frac{\pi}{2\xi}]$, an interval containing the analytic optimum $\alpha_a^\star=\frac{\pi}{2r_a}$ for $r_a\in[\xi,1]$. We initialize $u_a\sim\operatorname{Unif}[-0.75,0.75]$ and optimize $u$ using Adam, projecting it onto $[-1,1]^n$ after each update.
During this procedure, we optimize the energy of the parameterized Hamiltonian $H_T(\alpha)$, keeping the state fixed.

We illustrate the accuracy of the trained model as a function of the maximum bond dimension cap $\chi$ in \cref{fig: trainable alpha graph state}.
Writing $m(x)=y(x)\langle U(x)^\dagger Z_1U(x)\rangle$, each test example receives score $1$ when $m(x)\ge\eta$, $0$ when $m(x)\le-\eta$, and $1/2$ when $|m(x)|<\eta$.
The accuracy $A_\eta$ is the average of these scores.
We empirically find that increasing bond dimension is necessary in order to achieve the same accuracy at larger system sizes.
Similarly, with increasing bond dimension, the energy and the observable's margins also improve.

In \cref{fig: surrogate}, we also plot the empirical accuracy of the classical surrogates and compare with the one from Hamiltonian embedding.
For the points labeled as coming from HE, we show the highest test score observed over all bond dimensions.
We notice that the classical surrogates trained on the ridge loss show a gradual decay in accuracy with respect to system size $n$.

\begin{figure}
    \centering
    \includegraphics[width=0.9\linewidth]{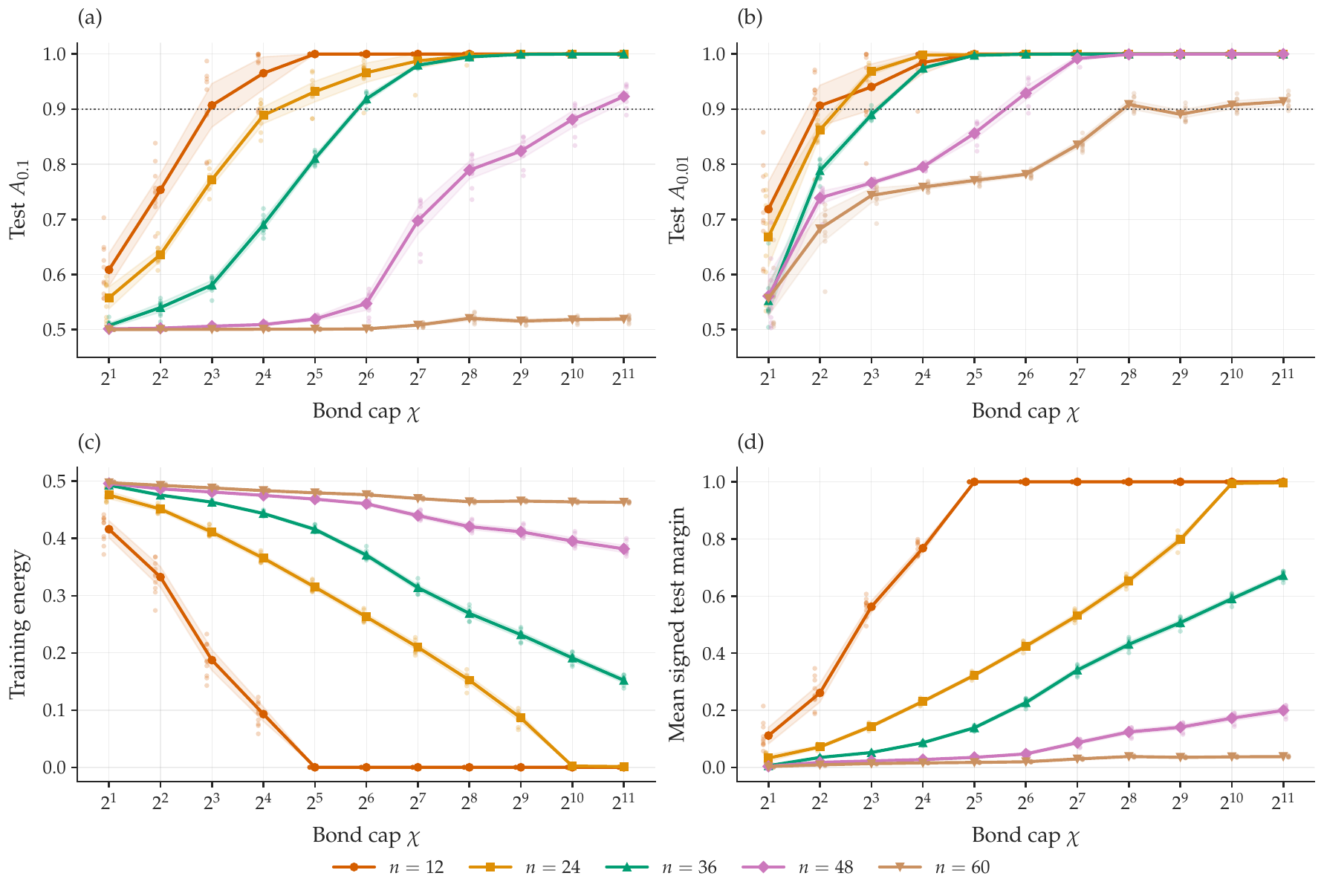}
    \caption{Test accuracy $A_\eta$ as a function of bond dimension at system sizes $n \in \{12,24,36,48, 60\}$, averaged over 10 trials, with (a) $\eta=0.1$, (b) $\eta = 0.01$. (c) Minimum energy seen during training as a function of bond dimension. (d) The signed margin ${y(x) \expval{U(x)^\dag Z_1 U(x)}}$.}
    \label{fig: trainable alpha graph state}
\end{figure}

\begin{figure}
    \centering
    \includegraphics[width=0.6\linewidth]{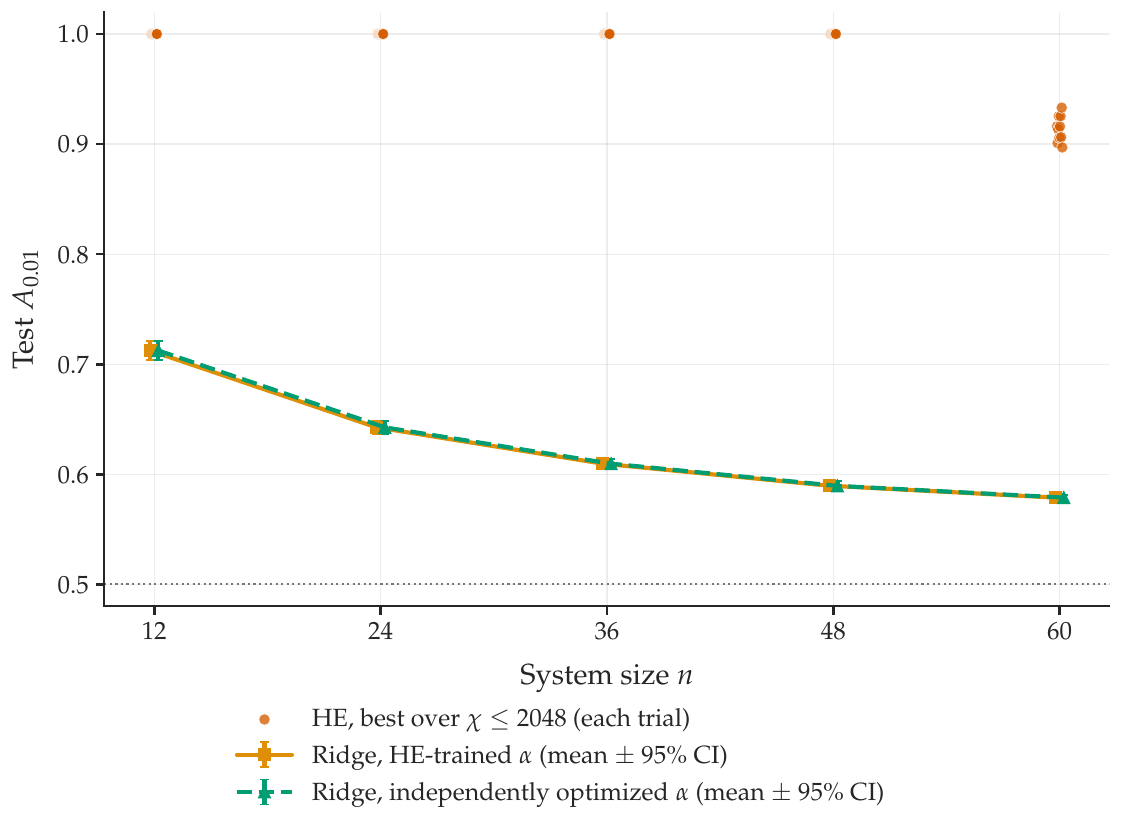}
    \caption{Accuracy of surrogate models. The green curve shows the surrogate that optimizes the $m_{S,z}$ from \cref{eq:graph-state-pauli-surrogate}, keeping $c_{S,z}(\alpha)$ fixed by taking $\alpha=\alpha^*$ obtained from the Hamiltonian embedding model. The orange curve optimizes over both $m_{S,z}$ and $\alpha$ (which determines $c_{S,z}(\alpha)$). Also shown in orange dots for comparison are the best (over bond dimensions $\chi$) accuracies for the Hamiltonian embedding model for each trial.}
    \label{fig: surrogate}
\end{figure}

\section{Numerical setup}\label{app: numerics setup}

In this Appendix, we discuss the Hamiltonian embedding model used in all the experiments of Section~\ref{sec: numerics}, together with the classical simulation of the training loop. The protocols of the individual tasks are given in Appendices~\ref{app: fashion} and~\ref{app: qpu}. 

\subsection{Encoding circuit and embedding Hamiltonian}

We use the encoding unitary of Eq.~\eqref{eq: encoding U} with $K=3$ layers. Layer $\ell$ applies to every qubit $j$ the sequence of single-qubit Pauli rotations $e^{-i\theta_{j,3}^{\ell}Z_j/2}e^{-i\theta_{j,2}^{\ell}Y_j/2} e^{-i\theta_{j,1}^{\ell}Z_j/2}$, followed by a ring of nearest-neighbor two-qubit rotations $e^{-i\theta_{j,4}^{\ell}Z_jZ_{j+1}/2}$ with periodic boundary conditions. All $12n$ angles are independently trainable, and each is a linear function of the input feature and of the trainable weights,
\begin{equation}
    \theta_g(x,\alpha)=\sum_{r=1}^{d}\alpha_{g,r}x_r+\alpha_{g,0},
    \qquad g=1,\ldots,12n,
\label{eq: numerics angles}
\end{equation}
so the trainable parameters are $\alpha\in\mathbb{R}^{12n\times(d+1)}$ and they are a total of $M=12n(d+1)$. Every trainable gate is generated by a Pauli operator with $P^2=I$, so the cost admits the parameter-shift rule of Section~\ref{sec: barren plateaus and gradient}.
The two-qubit rotations are nearest-neighbor, so the circuit remains of constant depth with a light cone of constant size.
Lastly, the label observable is set to $A=\frac{1}{n}\sum_{i=1}^{n}Z_i$ throughout.

We consider a constant-depth encoding unitary $U(\alpha, x)$ such that the observable $A$ can be evolved under $U(\alpha, x)$ classically. In particular, using backpropagation, we store a list of which Pauli terms appear in the embedding Hamiltonian together with the functional dependence of their coefficients from the features and trainable weights.

\subsection{Classical iterative solver}\label{app: solver} 
We describe the training procedure when using the classical iterative solver.
At the first training step, we start from the reference state $\ket{\mathrm{ref}}=\ket{0^n}$ and construct truncated approximations to the six vectors $\{H_{\mathcal T}^{k}(\alpha_0)\ket{\mathrm{ref}}\}_{k=0}^{5}$ obtained by repeatedly applying the data Hamiltonian evaluated at the initial parameters $\alpha_0$. After each application, we keep up to $8{,}000$ computational-basis bitstrings with the largest amplitudes. We then diagonalize $H_{\mathcal T}(\alpha_0)$ in the subspace spanned by the collected computational-basis states, obtaining an approximate ground state.
Let us refer to this procedure of applying the Hamiltonian, collecting bitstrings, and diagonalizing as one ``round''.
We perform three rounds in total at fixed $\alpha_0$, where the ground state from each round is used as the reference state for the next, and the diagonalization in each round is in the subspace spanned by all collected bitstrings from previous rounds.
This gives us the approximate ground state for $H_{\mathcal T}(\alpha_0)$. At subsequent training steps, we use an analogous procedure, taking the approximate ground state obtained at step $t-1$ as the reference state. Here, each step involves a single round with only one application of $H_{\mathcal T}(\alpha_t)$, yielding truncated approximations to the two Krylov states $\{H_{\mathcal T}^{k}(\alpha_t)\ket{\mathrm{ref}}\}_{k=0}^{1}$. Furthermore, for optimization steps $t>0$, the truncation limit is reduced from $8{,}000$ to $3{,s}000$ computational-basis components per state, retaining those with the largest amplitudes.

\subsection{Optimization and model selection}

The trainable parameters are initialized by drawing from independent Gaussian distributions $\alpha_{g,r}\sim\mathcal{N}(0,0.3^2)$, using a fixed seed
for each run to ensure reproducibility.
To limit overfitting, we add a squared-$\ell_2$ penalty to
the training objective,
\begin{equation}
    \widehat C_{\mathcal T}^{\mathrm{reg}}(\psi,\alpha)
    =
    \widehat C_{\mathcal T}(\psi,\alpha)
    + \frac{\lambda}{2}\|\alpha\|_2^2,
\end{equation}
where $\alpha$ denotes the full parameter vector and
$\lambda=0.005$. 
At iteration $t$, we hold the state $\ket{\psi_t}$ used for gradient evaluation fixed and compute
\begin{equation}
    g_t =
    \left.
    \bra{\psi_t}
    \nabla_\alpha H_{\mathcal T}(\alpha)
    \ket{\psi_t}
    \right|_{\alpha=\alpha_t}
    + \lambda\alpha_t.
\end{equation}
For the parameter update step, we use Adam with learning rate $0.05$ and constants $\beta_1=0.9$, $\beta_2=0.999$, and $\epsilon=10^{-8}$. Training is performed for $80$ iterations for the image-classification benchmark described in the next section.

Model selection after training is performed by scoring the performance at each training epoch against a validation set that is split off the training data. In practice we hold out $20\%$ of the latter, and the final reported model corresponds to the epoch with the highest validation accuracy. Classification uses the score $f(x)=\bra{\psi}U^\dagger(x,\alpha)A\, U(x,\alpha)\ket{\psi}$ with label $0$ assigned when $f(x)\ge0$ and label $1$ otherwise.

\section{Image classification benchmark}\label{app: fashion}

\subsection{Protocol}

The task considered is the binary classification of the Fashion-MNIST dataset~\cite{xiao2017fashion}, restricted to the classes \emph{dress} (label $3$) and \emph{shirt} (label $6$), which is the setting used in Appendix~L of Ref.~\cite{huang2021power}. Following that protocol, the two classes are pooled ($14{,}000$ images of $784$ pixels each), the pool is reduced by principal component analysis to $n$ components, each component is standardized to zero mean and unit variance, and $N_{\text{pool}}=800$ points are subsampled without replacement using the seed of the run. A stratified $70/30$ train--test split of the $800$ points is then drawn from the same seed, and a further stratified $80/20$ split of the training part provides the inner validation set used for model selection. For every pair (qubit count, seed) the subsample and the splits are drawn once and shared by every method, so all comparisons are paired.

\subsection{Experimental Baselines}\label{app: kernel sim}

The quantum baselines are those of Ref.~\cite{huang2021power}, evaluated on the same splits and with the same hyperparameters. Three data embeddings $x\mapsto\ket{\phi(x)}$ are used: a product of single-qubit rotations, one per feature (E1); an IQP-type circuit in which two layers of Hadamards sandwich diagonal phases with linear terms $x_jZ_j$ and all-to-all quadratic terms $x_jx_{j'}Z_jZ_{j'}$ (E2); and a Trotterized one-dimensional Heisenberg evolution for time $t=n/3$, with couplings set by the features, acting on a fixed Haar-random product state (E3). 
For each embedding we compute the fidelity quantum kernel $K_{ij}=|\bra{\phi(x_i)}\phi(x_j)\rangle|^2$ and the projected quantum kernel built from the single-qubit reduced density matrices, $K_{ij}=\exp[-\gamma\sum_{q}\|\rho_q(x_i)-\rho_q(x_j)\|_F^2]$, together with its generalizations in which the reduced density matrices of all pairs (2-RDM) or all triples (3-RDM) of qubits enter the exponent. Each kernel is fed to a support vector machine with a precomputed Gram matrix. A penalty term $C$ is scanned over an 18-point grid and the bandwidth over $\gamma\in\{0.25,0.5,1,2,3,4,5,20\}/(n\,\mathrm{Var}(X))$ (see Ref.~\cite{huang2021power} for more details). Finally to compare against the performance of hardware runs (in Fig. \ref{fig: qpu train overlay} (b)), the best kernel results are selected based on their validation accuracy.

Simulating these baselines faithfully at $50$ qubits requires embedding-specific methods, and not every embedding reaches every $n$. The E1 states are products of single-qubit states, so both kernels admit exact closed-form solutions. The E2 circuit generates volume-law entanglement through its all-to-all diagonal phases and admits no efficient tensor-network contraction, so it is simulated by exact statevector simulations, which bounds its reach by memory. The fidelity kernel is computed up to $n=16$, and the projected kernels up to $n=24$ ($n=20$ at 2-RDM order). 
The E3 evolution also generates extensive entanglement, and hence its exact matrix-product-state bond dimension grows as $2^{\lceil n/2\rceil}$. We simulate it by time-evolving-block-decimation with $20$ Trotter steps for the total time $t=n/3$, capping the bond dimension as $\chi=\min(256,2^{\lceil n/2\rceil})$. With this choice, the MPS can represent exactly any quantum state up to $n\le16$, and the only approximation error comes from the finite Trotter step, which we measured at $n=16$ as a projected-kernel feature error of $1.6\times10^{-2}$. We do not evaluate E3 beyond $n=16$, where $\chi=256$ would cease to be exact.  We note that the classical hardness of simulating the E2 and E3 embeddings at large $n$ does not, by itself, confer a learning advantage, as the corresponding kernels are precisely the ones that concentrate and fail on this task.

For the classical baseline, we consider support vector machines with radial basis function and linear kernels, random forests, gradient boosting, adaptive boosting, and a multilayer perceptron.
We use the same $n$-dimensional features for all methods. The hyperparameters for these methods follow the same grids of~\cite{huang2021power}.
For each model class, hyperparameters are tuned by three-fold cross-validation on the training split. The selected model is then evaluated once on the test split.

\section{Hardware experiments}\label{app: qpu}

\subsection{The hardware solver}

The hardware runs replace only the sampling stage of the solver of Appendix~\ref{app: solver} by a quantum device, whereas the construction of $H_{\T}$, the subspace union, the projection, the diagonalization, the gradient and the Adam update are performed classically on a laptop. At iteration $t$ the current $H_{\T}(\alpha_t)$ is converted into a Pauli operator and the circuits
\begin{equation}
    \ket{\mathrm{ref}}\longmapsto
    \left(e^{-i\,\delta t\,\widetilde H_{\T}(\alpha_t)}\right)^{k}\ket{\mathrm{ref}},
    \qquad k=0,\ldots,k_{\max},
\end{equation}
are built by first-order Lie--Trotter product formulas, with the step $\delta t=\pi/(2\sum_i|c_i|)$ of Appendix~\ref{app: solver}, and measured in the computational basis. Here $\widetilde H_{\T}$ is a pruned version of $H_{\T}$ keeping only its $n_P$ Pauli terms of largest coefficient magnitude (identity excluded). The pruned Hamiltonian $\widetilde H_{\T}$ is determined as follows.
First, Pauli terms with coefficients $|c_i|<10^{-3}\max_j|c_j|$ are dropped from the set of operators in the expansion.
Next, we keep the $n_P=400$ Pauli terms of largest magnitude and assess whether the deepest transpiled circuit has at most $4{,}000$ two-qubit gates.
We iteratively reduce $n_P$ by factors of $0.6$ until this is the case.
The size of this set of Pauli terms $n_P$ is kept fixed for the remaining training epochs.
The resulting value of $n_P$ is $240$ for the 50-qubit hardware run.
For comparison, the full $H_{\T}$ contains over $20{,}000$ terms. Note that the pruning applies only to the evolution operator, while the projection uses the full $H_{\T}$.

At the first iteration the reference is $\ket{0^n}$ and the thirteen circuits $k\in \{0,\ldots,12\}$ are measured with $25{,}000$ shots each. At later iterations the references are the four highest-weight bitstrings of the previous ground state, the circuits $k\in\{1,2,3\}$ are measured for each reference ($12$ circuits at $8{,}192$ shots), and the entire previous subspace is added to the union, so that the subspace can follow the training trajectory of the Hamiltonian. 

Circuits are mapped to a chain of $n$ physical qubits chosen from the calibration data at run time, to minimize the summed two-qubit-gate and readout errors along the chain by a greedy search over the coupling graph. These circuits are then executed with dynamical decoupling and measurement twirling enabled; no other error mitigation is used.

\subsection{Model training insights}
\label{app:ensemble}

For every hardware run we compare the quantumly trained HE model to the classical iterative solver applied to the same data cell (\cref{fig:qpu_vs_sim}). In all five runs, the accuracy of the hardware runs is greater, by $+0.004$ to $+0.050$ (mean $+0.023$). 

\begin{figure}[t]
  \centering
  \includegraphics[width=0.5\linewidth]{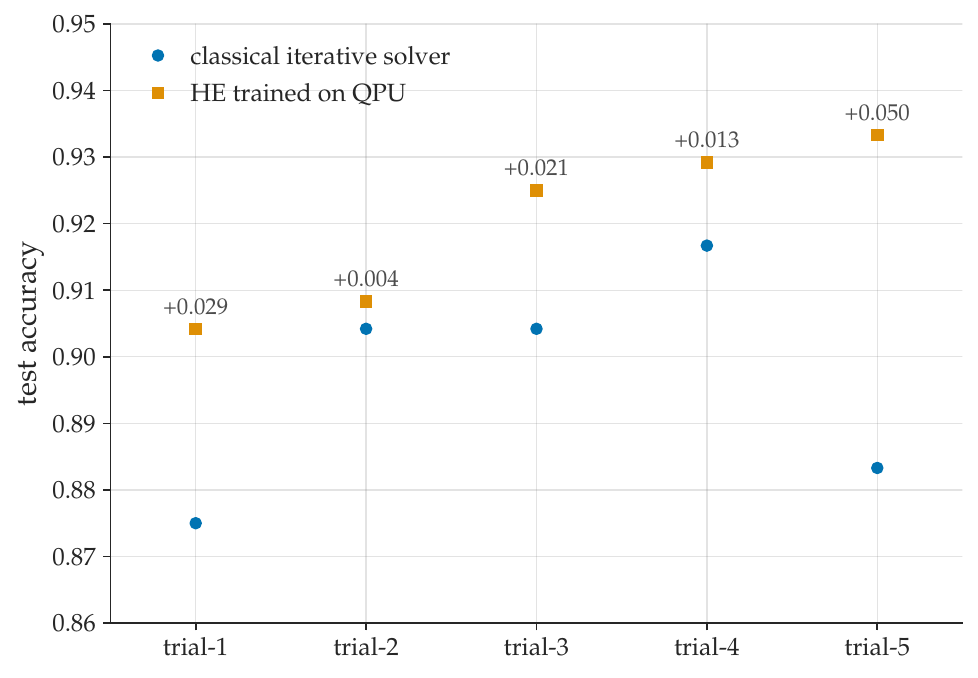}
  \caption{
  A comparison of the test accuracy from the Hamiltonian embedding model, when used in conjunction with a classical iterative solver (blue circles) and the SKQD method on hardware (orange squares) for the dataset of \cref{fig: qpu train overlay}(b).
  }
  \label{fig:qpu_vs_sim}
\end{figure}

Finally, to test whether the seed-to-seed variation reflects genuine differences in learned models or merely finite-sample noise in the $240$-example test set, we perform a cross-seed transfer analysis in simulation with the classical iterative solver. We evaluate twenty independently trained cells on all twenty held-out test splits (\cref{fig:cross_seed}). This allows us to compute the seed-to-seed variation for the model initialization (indexed by rows) and the data splits (indexed by columns). The near-uniformity of the map shows that the accuracy barely changes when a model is applied to a different data sample.
Both the native (diagonal) and transfer (off-diagonal) accuracies average $\approx\!0.885$.
Thus, a model trained on one data sample performs just as well on an independent sample, confirming that HE learns transferable, sample-independent structure from the data.

\begin{figure}[t]
  \centering
  \includegraphics[width=0.5\linewidth]{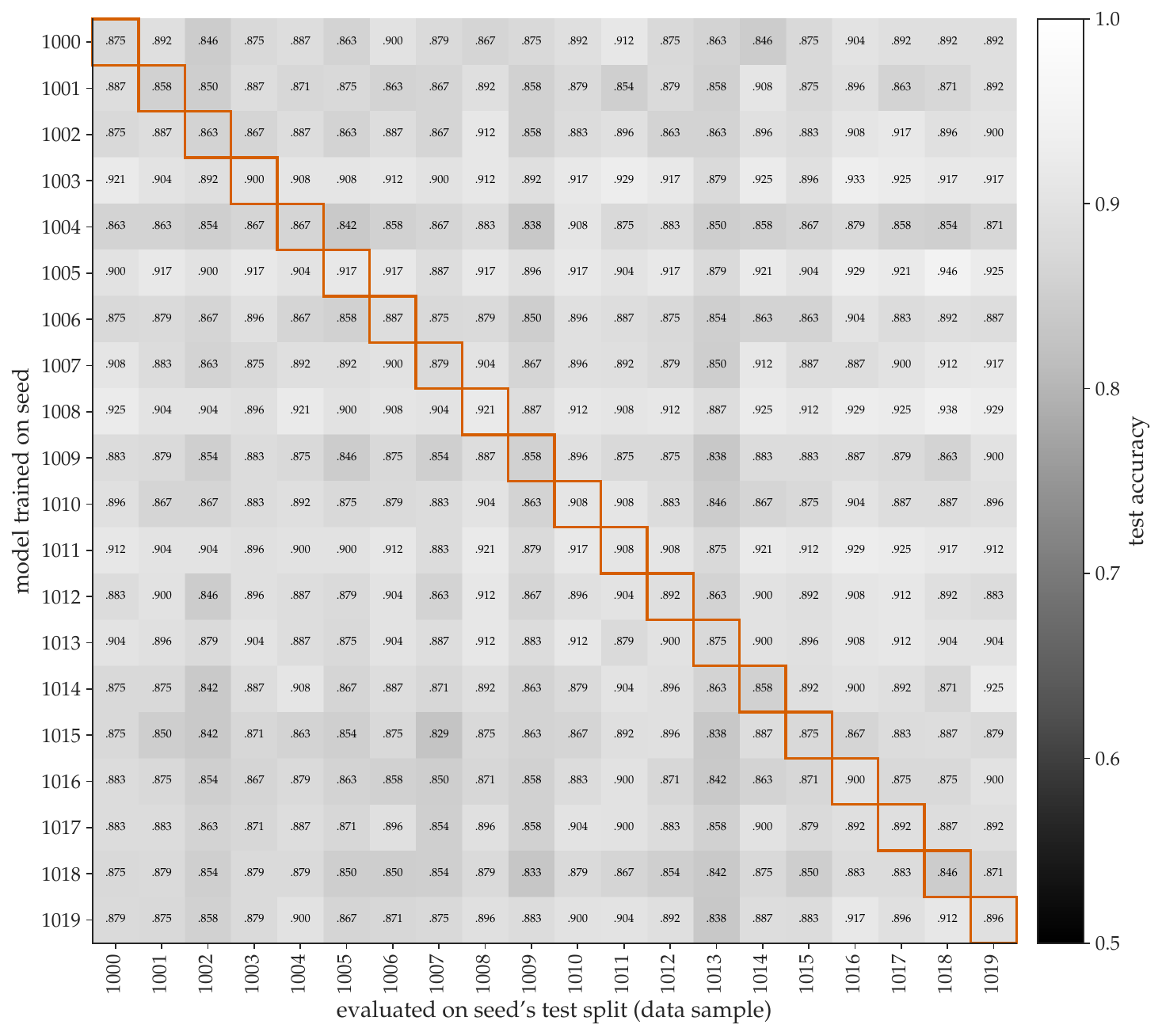}
  \caption{Cross-seed transfer at $n = 50$. Each cell is the test accuracy of a model trained on one seed (row) and evaluated on another seed's held-out data split (column); the red boxes mark the diagonal, where a model is tested on its own split.}
  \label{fig:cross_seed}
\end{figure}

\section{A numerical counterexample to Pauli classical surrogate models}\label{app: counter surrogate}
In Appendix~\ref{app: provable sep surrogate}, we presented a learning task for which the Hamiltonian embedding model provably requires fewer training examples than the corresponding classically regularized surrogate.
Here, we present a different example as a proof of principle, numerically showing that retaining the quantum state during learning yields a substantial performance advantage over the corresponding Pauli-based classical surrogate. 

Specifically, we compare the performance of a particular Hamiltonian embedding model, defined by a fixed choice of encoding unitary $U(x,\alpha)$ and observable $O(y)$, with that of the corresponding Pauli classical surrogate. The latter takes the form
\begin{equation}
g_{\alpha}(x)
=
\sum_i \phi_i(x,\alpha)\,\widetilde{m}_i(\alpha),
\end{equation}
where the coefficients $\phi_i(x,\alpha)$ are defined through the Pauli decomposition
\begin{equation}
U^\dagger(x,\alpha) A U(x,\alpha)
=
\sum_{i=1}^K \phi_i(x,\alpha) P_i,
\end{equation}
where $O(y)=\frac{I+(-1)^yA}{2}$.
Here, $P_i$ are the $K=\text{poly}(n)$ Pauli operators whose expectation values are measured by the Hamiltonian embedding model as explained in Section~\ref{sec: Pauli surrogates}, while $\widetilde{m}_i(\alpha)\in\mathbb{R}$ are unconstrained classical parameters replacing the physical quantum moments
$
m_i(\alpha)
=
\bra{\psi_{{\scriptscriptstyle\mathcal{T}},\alpha}} P_i \ket{\psi_{{\scriptscriptstyle\mathcal{T}},\alpha}}$ for the quantum model in Eq.~\eqref{eq:trained_quantum_model}.
Thus, the two models use the same input-dependent parameterized Pauli feature map, but differ in that the coefficients of the Hamiltonian embedding model must arise as expectation values of a single physical quantum state, whereas those of the classical surrogate are optimized freely. Schematically,
\begin{align}
 f_{HE}(x) &= \sum_i \phi_i(x,\alpha)\,m_i(\alpha),
&m_i(\alpha)=\bra{\psi_{{\scriptscriptstyle\mathcal{T}},\alpha}}P_i\ket{\psi_{{\scriptscriptstyle\mathcal{T}},\alpha}},\\
 g_{\rm ridge}(x) &= \sum_i \phi_i(x,\alpha)\,\tilde m_i(\alpha),
&\tilde m_i(\alpha)\in\mathbb{R}\ .
\end{align}
For a fixed value of the trainable parameter vector $\alpha$, the
coefficients of the classical surrogate are obtained by solving the ridge
regression problem
\begin{equation}\label{eq: ridge regression}
\tilde m^\star(\alpha)
=
\arg\min_{\tilde m }
\left[
\frac{1}{2L}
\sum_{\ell=1}^{L}
\left(
(-1)^{y_\ell}-\phi(x_\ell,\alpha)^T \tilde m
\right)^2
+
\frac{\lambda}{2}\|\Tilde m\|_2^2
\right].
\end{equation}
Since the feature map $\phi(x,\alpha)$ itself depends on $\alpha$, the
optimal vector $\tilde m^\star$ depends implicitly on the choice of
the representation parameters. The corresponding loss is then minimized with respect to the trainable parameters $\alpha$.

\subsection{Construction of the learning task}
The task is constructed such that the requirement that the Pauli moments
$m_i(\alpha)$ be jointly compatible with a physical quantum state provides
a useful inductive bias in the low-data regime. To emphasize this effect,
we consider a teacher--student setting in which the inputs $x$ are labeled
according to
\begin{equation}\label{eq: labels_surrogate}
y=\frac{1-s(x)}{2}\qquad\qquad
s(x)=\operatorname{sign}\!\left[
\bra{\psi^\star}
U^\dagger(x,\alpha^\star)
A
U(x,\alpha^\star)
\ket{\psi^\star}
\right],
\end{equation}
for a fixed teacher parameter $\alpha^\star$ and teacher state
$\ket{\psi^\star}$. We then use the same parameterized unitary
$U(x,\alpha)$ in the dataset Hamiltonian, with $\alpha$ initialized
randomly. The goal of the Hamiltonian embedding model is therefore to
learn parameters $\alpha$ and the corresponding ground state
$\ket{\psi_{{\scriptscriptstyle\mathcal{T}},\alpha}}$ such that the prediction rule in Eq.~\eqref{eq: prediction rule} reproduces
the teacher labels.

In particular, we choose the following setting. We consider a system of $n=2M$ qubits, grouped into $M$ two-qubit blocks $(2j-1,2j)$ for $j\in\{1,..,M\}$. The teacher state $\ket{\psi^\star}$ is the product of Bell states on each block 
\begin{equation}
 \ket{\psi^\star}=\ket{\Phi^+}^{\otimes M},
 \qquad
 \ket{\Phi^+}=\frac{\ket{00}+\ket{11}}{\sqrt{2}}.
\end{equation} 

The inputs $x$ are chosen to be a tuple of four different variables $x=(j,\theta,\beta,q)$ with different roles:
\begin{itemize}
  \item $j$ selects the active two-qubit block and it is sampled uniformly at random in $j\in\{1,...,M\}$;
\item $\theta$ is sampled from a two-pole Gaussian mixture centered at $0$ and $\pi$,
\begin{align}
\theta \sim
\frac{1}{2}\mathcal{N}(0,\sigma_\theta^2)
+
\frac{1}{2}\mathcal{N}(\pi,\sigma_\theta^2),
\qquad
\sigma_\theta=0.12,
\end{align}
where $\mathcal{N}(\mu,\sigma^2)$ denotes a normal distribution with mean $\mu$ and variance $\sigma^2$.
  \item $\beta\sim\mathrm{Unif}[-\pi,\pi]$;
  \item $q=(q_0,q_1,q_2,q_3)$ is a random unit vector such that $q^2_0+q^2_1+q^2_2+q^2_3=1$. It parametrizes a random single-qubit unitary $V(q)\in SU(2)$, which is applied to the active two-qubit block through $V(q)\otimes V(q)^*$. 
\end{itemize}

The label-dependent observable is defined as
\begin{equation}
O(y,j)=\frac{I+(-1)^yA_j}{2},
\qquad
A=\frac{2}{3}X\otimes X-\frac{1}{3}Y\otimes Y.
\end{equation}
and $A_j$ denotes the operator $A$ acting on the block $(2j-1,2j)$ and identity elsewhere.

The encoding unitary acts nontrivially only on the two-qubit block selected
by the input index $j$. Specifically,
\begin{equation}
U(x=(j,\theta,\beta,q),\alpha)
=
I^{\otimes 2(j-1)}
\otimes
U_{\mathrm{loc}}(\theta,\beta,q;\alpha_j)
\otimes
I^{\otimes 2(M-j)},
\end{equation}
where
\begin{equation}
U_{\mathrm{loc}}(\theta,\beta,q;\alpha_j)
=
C(\theta)
\left[
R_{\bm n(\theta)}
\!\left(\beta+\alpha_j h(\theta)\right)
\otimes I
\right]
S(q),
\end{equation}
with
\begin{align}
C(\theta)
&=
R_Y(\theta/2)\otimes R_Y(-\theta/2),\\
\bm n(\theta)
&=
\left(\sin(\theta/2),0,\cos(\theta/2)\right),\\
h(\theta)
&=
\sin^2(\theta/2),\\
S(q)
&=
V(q)\otimes V(q)^*,
\end{align}
and
\begin{equation}
V(q)
=
q_0 I-i\left(q_1X+q_2Y+q_3Z\right).
\end{equation}
We note that the unitary  $S(q)=V(q)\otimes V(q)^*$ has the Bell state $\ket{\Phi^+}$ as eigenstate. In particular $
 (V\otimes V^*)\ket{\Phi^+}=\ket{\Phi^+}$
for any single-qubit unitary $V$. Therefore $q$ can rotate the Pauli representation seen by the learner while leaving the Bell teacher prediction unchanged. Finally, we also note that the local encoding $U_{\mathrm{loc}}(\theta,\beta,q;\alpha_j)$ can be implemented entirely using standard single-qubit rotation gates. In particular, $R_{\bm n(\theta)}(\varphi)$ can be decomposed as $R_Y(\theta/2)R_Z(\varphi)R_Y(-\theta/2)$, while $V(q)$ is itself a single-qubit rotation parameterized by the quaternion $q$.

We then label the data by the rule in Eq.~\eqref{eq: labels_surrogate}, with the choice of $\alpha_j^\star=0$ for all $j\in\{1,...,M\}$. The teacher corresponding expectation value is then 
\begin{equation}\label{eq: teacher_exp_value}
 \bra{\Phi^+}U_{\rm loc}(\theta,\beta,q;\alpha^\star)^\dagger A_jU_{\rm loc}(\theta,\beta,q;\alpha^\star)\ket{\Phi^+}
 =\frac{1+2\cos\theta}{3}\cos\beta.
\end{equation}
We note that, although the observable $A_j$ explicitly depends on the input index $j$, the same construction can be written with a fixed observable by absorbing this dependence into the encoding unitary. Choosing $A_1$ as the fixed observable and letting $P_j$ swap the pair $(2j-1,2j)$ with $(1,2)$, so that $P_j^\dagger A_1P_j=A_j$, we can define $\widetilde U(x,\alpha)=U_{\mathrm{loc}}^{(1,2)}(\theta,\beta,q;\alpha_j)P_j$, which gives $\widetilde U^\dagger(x,\alpha)A_1\widetilde U(x,\alpha)=P_j^\dagger U_{\mathrm{loc}}^{(1,2)\dagger}A_1U_{\mathrm{loc}}^{(1,2)}P_j$ and therefore reproduces the original $j$-dependent construction.

From the labeling rule in Eq.~\eqref{eq: labels_surrogate} and Eq.~\eqref{eq: teacher_exp_value}, the labels are then given by Eq.~\eqref{eq: labels_surrogate} with
\begin{equation}\label{eq: surrogate_labeling}
\quad s=\text{sign}\!\left[(1+2\cos\theta)\cos\beta\right].
\end{equation}
The nuisance variable $q$ and the block index $j$ do not appear in the labeling rule, but they make the learning problem for the classical surrogate more demanding. In particular, the index $j$ determines which two-qubit block is probed by each example, with each block associated with its own vector of Pauli coefficients $\widetilde m_j$ in the classical surrogate. On the other hand, $q$ changes the Pauli decomposition of the encoded observable. Indeed, conjugation by the random unitary $V(q)$ rotates Pauli operators into different linear combinations of $X$, $Y$, and $Z$, so that different values of $q$ make different examples probe different combinations of the Pauli coefficients. In the low-data regime, the training set may therefore be compatible with many choices of the free surrogate coefficients $\widetilde m$, which can fit the observed data while generalizing differently to unseen inputs. In contrast, the Hamiltonian embedding model restricts the coefficients $m$ to be jointly compatible with the Pauli moments of a physical quantum state. This provides an additional inductive bias that favors coefficient vectors consistent with the teacher construction and, in our numerical experiment, leads to better generalization before the classical surrogate has received enough data to identify a well-generalizing solution. 

\subsection{Numerical experiment}
We compare the Hamiltonian embedding model and the Pauli classical surrogate for systems ranging from $4$ to $20$ qubits, with $n\in\{4,6,8,10,12,14,16,18,20\}$, using 20 independent random seeds and 5000 fresh test examples per run. For each system size, we vary the number of training examples supplied to both models. Specifically, for $M=n/2$ two-qubit pairs, the total training-set size is chosen as $L=Ms$, where $s$ is varied over a fixed grid of samples per pair. For each $(n,L,\text{seed})$, we also vary the strength $\lambda$ of the regularization term in the ridge regression of Eq.~\eqref{eq: ridge regression}. We then evaluate the two models on the same training/test data and with the same initial random $\alpha$.

As a common metric, for each model and system size we define
$L_{90}$ as the smallest training-set size $L$ for which the mean
test accuracy over the 20 random seeds reaches $90\%$.
This quantity is plotted in \cref{fig:lambda01-errorbars}.
Because $L$ is evaluated on a discrete grid, the reported values are first grid crossings.
To quantify seed-to-seed variability, for each seed $r$ we also define the first seed-wise crossing.

\begin{figure}[t]
\centering
\includegraphics[width=0.55\textwidth]{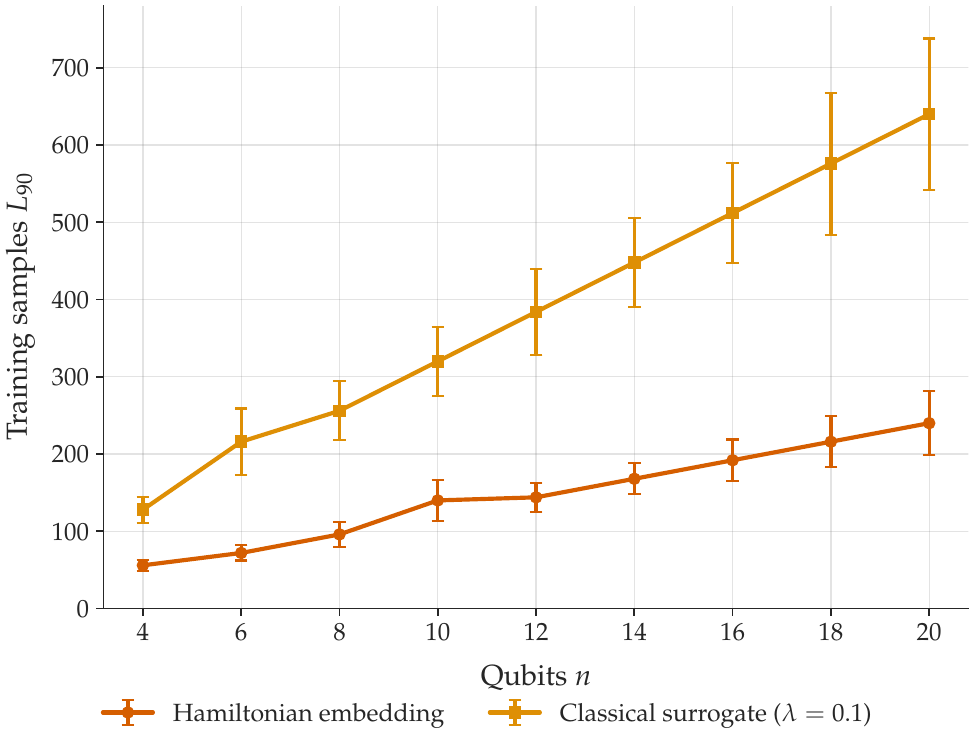}
\caption{Training examples needed to reach 90\% test accuracy for Hamiltonian embedding and classical surrogate with $\lambda=0.1$. The sweep uses 20 independent seeds. Central values are the first crossings of the mean test-accuracy curve across the 20 seeds, while the error bars are the standard deviations of the seed-wise first-crossing values. The bars therefore show seed-to-seed variability rather than a conventional confidence interval. }
\label{fig:lambda01-errorbars}
\end{figure}

The error bars used in \cref{fig:lambda01-errorbars} are the standard deviation of these 20 seed-wise first-crossing values. The plotted central point is the smallest training-set size $L$ for which the mean accuracy over the random seeds reaches $90\%$. Therefore the bars should be interpreted as a diagnostic of run-to-run variability, rather than as a confidence interval around the central point.

The complete results are shown in Table~\ref{tab:alllambda}. The first row is the Hamiltonian embedding model. The remaining rows give the classical surrogate $L_{90}$ for each ridge regularization value.

\begin{table}[ht]
\caption{First tested training size $L$ for which the mean test accuracy over all seeds reaches
$90\%$. The $\lambda$ sweep uses 20 independent seeds and 5000 test points
per run.}
\label{tab:alllambda}
\begin{ruledtabular}
\begin{tabular}{lccccccccc}
Model / $\lambda$ & $n=4$ & $6$ & $8$ & $10$ & $12$ & $14$ & $16$ & $18$ & $20$ \\
\colrule
HE        & 56  & 72  & 96  & 140 & 144 & 168 & 192 & 216 & 240 \\
$10^{-8}$ & 160 & 216 & 320 & 400 & 432 & 560 & 640 & 648 & 800 \\
$10^{-7}$ & 160 & 216 & 320 & 400 & 432 & 560 & 640 & 648 & 800 \\
$10^{-6}$ & 160 & 216 & 320 & 400 & 432 & 560 & 640 & 648 & 800 \\
$10^{-5}$ & 160 & 216 & 320 & 400 & 432 & 560 & 640 & 648 & 800 \\
$10^{-4}$ & 160 & 216 & 320 & 400 & 432 & 560 & 640 & 648 & 800 \\
$10^{-3}$ & 160 & 216 & 320 & 400 & 432 & 560 & 640 & 648 & 800 \\
$10^{-2}$ & 128 & 216 & 288 & 360 & 432 & 448 & 576 & 648 & 720 \\
$10^{-1}$ & 128 & 216 & 256 & 320 & 384 & 448 & 512 & 576 & 640 \\
$1$       & 144 & 240 & 256 & 360 & 432 & 448 & 640 & 576 & 640 \\
\end{tabular}
\end{ruledtabular}
\end{table}
The weakly regularized regime, from $10^{-8}$ through $10^{-3}$, gives the same first-crossing thresholds on the tested grid. Moderate regularization improves ridge regression. The strongest improvement is typically around $\lambda=10^{-1}$, although the optimum is not identical at every system size.

To make the comparison maximally favorable to the classical surrogate, we also report the comparison between the $L_{90}$ value for the Hamiltonian embedding and the classical surrogate at the best regularization strength $\lambda$, for each system size, i.e.  
$
 L_{90}^{\rm ridge,best}(n)=\min_{\lambda\in\Lambda}L_{90}^{\rm ridge}(n,\lambda)$,
where $\Lambda$ is the tested grid.  As the best performance for the classical surrogate is found at $\lambda=0.1$ for almost all system sizes, we plot the comparison with the Hamiltonian embedding model in \cref{fig:lambda01-errorbars}.

\begin{table}[t]
\caption{Hamiltonian embedding compared with the best ridge threshold observed over the tested $\lambda$ grid. The final column gives the ridge-to-HE ratio averaged over system sizes.}
\label{tab:bestlambda}
\begin{ruledtabular}
\begin{tabular}{lcccccccccc}
 & 4 & 6 & 8 & 10 & 12 & 14 & 16 & 18 & 20 & mean ratio \\
\colrule
$L_{90}^{\text{HE}}$
& 56 & 72 & 96 & 140 & 144 & 168 & 192 & 216 & 240 & -- \\
$L_{90}^{\rm ridge,best}$
& 128 & 216 & 256 & 320 & 384 & 448 & 512 & 576 & 640 & $2.62\times$ \\
\end{tabular}
\end{ruledtabular}
\end{table}

Finally, we emphasize that the labeling rule in Eq.~\eqref{eq: surrogate_labeling} is deliberately simple, and a classical learner provided with additional structural information could easily solve the learning task. Nevertheless, the example serves to demonstrate that there exist settings in which the physical bias imposed by requiring the Pauli moments $m_i$ to arise from a valid quantum state can improve the performance of the Hamiltonian embedding model relative to the corresponding classical surrogate.

\end{document}